\documentclass{article}
\usepackage[margin=1in]{geometry}
\usepackage[utf8]{inputenc}
\usepackage[T1]{fontenc}
\usepackage{amsmath, amssymb}
\usepackage{amsthm, thm-restate}
\usepackage{hyperref, cleveref}
\usepackage{multirow}

\newtheorem{theorem}{Theorem}
\newtheorem{claim}[theorem]{Claim}
\newtheorem{lemma}[theorem]{Lemma}
\newtheorem{observation}[theorem]{Observation}
\newtheorem{definition}[theorem]{Definition}
\newtheorem{remark}[theorem]{Remark}
\newtheorem{corollary}[theorem]{Corollary}

\newcommand\calC{\mathcal{C}}
\newcommand\calE{\mathcal{E}}

\newcommand\calM{\mathcal{M}}
\newcommand\calP{\mathcal{P}}
\newcommand\calR{\mathcal{R}}
\newcommand\calS{\mathcal{S}}
\newcommand\calX{\mathcal{X}}
\newcommand\calT{\mathcal{T}}
\newcommand\eps\varepsilon

\newcommand\bc{\boldsymbol c}
\newcommand\bi{\boldsymbol i}
\newcommand\bo{\boldsymbol o}

\newcommand\half{\frac 1 2}
\newcommand\EE{\mathbb E}
\newcommand\NN{\mathbb N}
\newcommand\RR{\mathbb{R}}
\newcommand{\ZZ}{\mathbb{Z}}
\def\cT{\calT}
\def\e{\eps}
\newcommand\restrict[1]{\raisebox{-.5ex}{$|$}_{#1}}
\newcommand\abs[1]{\left\lvert#1\right\rvert}
\newcommand\bangle[1]{\left\langle#1\right\rangle}

\newcommand\paren[1]{\left(#1\right)}

\newcommand\set[1]{\left\{#1\right\}}
\newcommand\ALG{\mathtt{ALG}}
\newcommand\child{\mathsf{child}}

\newcommand\DP{\mathrm{DP}}
\newcommand\cost{\mathrm{cost}}
\newcommand\aggcost{\mathrm{aggcost}}

\newcommand\lp{{\sf lp}}
\newcommand\LP{{\sf LP}}
\newcommand\nbr{\mathsf{nbr}}
\DeclareMathOperator*{\EXP}{\EE}
\newcommand\opt{\mathtt{opt}}
\newcommand\OPT{\mathtt{OPT}}

\newcommand\poly{\mathrm{poly}}
\newcommand\prim{\mathsf{prim}}

\renewcommand\sec{\mathsf{sec}}
\newcommand\Reps{\mathsf{Reps}}
\newcommand\Super{\mathsf{Super}}
\newcommand\tw{\mathsf{tw}}
\def\etal{\emph{et~al.}}			
\def\dist{d}

\newtheorem{assumption}[theorem]{Assumption}
\newtheorem{fact}[theorem]{Fact}

\usepackage{mathtools}
\usepackage{xcolor}  
\definecolor{TODOcolor}{cmyk}{0.05,0,0,0}
\RequirePackage[color=red, backgroundcolor=TODOcolor, textcolor=red, linecolor=lightgray, bordercolor=red, textsize=tiny, tickmarkheight=3pt, figwidth=0.5\linewidth, figheight=4cm, figcolor=TODOcolor]{todonotes}

\title{Clustering in Varying Metrics}

\author{
Deeparnab Chakrabarty\thanks{Department of Computer Science, Dartmouth College. \\ \texttt{\{deeparnab, jonathan.conroy.gr, ankita.sarkar.gr\}@dartmouth.edu}}
\and
Jonathan Conroy\footnotemark[1]
\and
Ankita Sarkar\footnotemark[1]
}

\date{}

\begin{document}

\maketitle
\begin{abstract}
We introduce the {\em aggregated clustering} problem, where one is given $T$ instances of a center-based clustering task over the same $n$ points, but under different metrics. The goal is to open $k$ centers to minimize an aggregate of the clustering costs—e.g., the average or maximum—where the cost is measured via $k$-center/median/means objectives. More generally, we minimize a norm $\Psi$ over the $T$ cost values.

We show that for $T \geq 3$, the problem is inapproximable to any finite factor in polynomial time. For $T = 2$, we give constant-factor approximations. We also show W[2]-hardness when parameterized by $k$, but obtain $f(k,T)\poly(n)$-time 3-approximations when parameterized by both $k$ and $T$.

When the metrics have structure, we obtain efficient parameterized approximation schemes (EPAS). If all $T$ metrics have bounded $\varepsilon$-scatter dimension, we achieve a $(1+\varepsilon)$-approximation in $f(k,T,\varepsilon)\poly(n)$ time. If the metrics are induced by edge weights on a common graph $G$ of bounded treewidth $\tw$, and $\Psi$ is the sum function, we get an EPAS in $f(T,\varepsilon,\tw)\poly(n,k)$ time. Conversely, unless (randomized) ETH is false, any finite factor approximation is impossible if parametrized by only $T$, even when the treewidth is $\tw = \Omega(\poly\log n)$.
\end{abstract}

\section{Introduction}

Clustering problems such as the $k$-supplier or the $k$-median problem are classic discrete optimization problems which have formed the bedrock of many problems in operations research.
In its simplest form, the objective is to ``open'' $k$ locations in a metric space such that some function (eg, max or sum) of the distances of clients to the nearest open facility is minimized.
Such problems have been extensively studied over the past 50 years \cite{HochbS1986,Ples1987,ChariGMN2001,JainV2001,ChariGTS2002,AryaGKMMP2004,ChariL2012,ByrkaPRST2014,LiS2016,Swamy2016,ChakrN2019,KrishLS2018,ChakrGK2020,MahabV2020,MakarV2021,ChakrNS2022,ChlamMV2022,DengLR2022,AgrawISX2023,ChakrCS2024}, and we have a very good understanding on the approximability of many of these problems.

In this paper, we consider clustering problems \emph{when the distance function underlying the metric space can change over time}. To give a toy example, imagine that the clients are households in a 
New England town, and the distances are travel times over the road network in the town. On one hand, travel times may be quite different in January and June given the weather differences over the months.
On the other hand, one can't hope to open different locations in different seasons. Our decision ought to take into account the different costs over time, for example by selecting a set of facilities that minimizes the average cost over the year. Although this sounds a natural question, 
to our knowledge what happens to the complexity of clustering with changing metrics hasn't been studied so far.\footnote{One may be tempted to simply perform clustering on a single metric space defined by average distances over all months. However, this does not work: the cost of a client at each timestep is computed by take a minimum over a set of facilities, and this minimum is not preserved when we take an average.}
Formally, we study the following problem.

\begin{definition}[Aggregate Clustering Problems]
	We are given $V = F\cup C$, a finite set of $n$ points where $C$ is the set of clients and $F$ is a set of facilities\footnote{$F$ and $C$ need not be disjoint, and one can imagine the case of $F =C$ as a special but instructive case.}. We are given $T$ distance functions $d_t(\cdot, \cdot)$ over $V\times V$ which are symmetric and satisfy triangle-inequality. 
	We are given a parameter $k \in \ZZ$. Finally, we are given an {\em aggregator} function $\Psi: \RR^T_{\geq 0} \to \RR_{\geq 0}$. The goal is to `open' a subset $S\subseteq F$ of $k$ facilities that minimize a certain objective function (depending on the problem variant).
	The objective functions are:

	\begin{itemize}
		\item $\Psi\left(\cost_\infty(d_t;S):t\in [T]\right)$, where $\cost_\infty(d_t;S) := \max_{j\in C} d_t(j, S)$, in the $\Psi$-aggregate $k$-supplier problem.\footnote{We use the notation $\dist_t(j,S)$ to mean $\min_{i \in S}\dist_{t}(j,i)$.}
		\item $\Psi\left(\cost_1(d_t;S):t\in [T]\right)$, where $\cost_1(d_t;S) := \sum_{j\in C} d_t(j, S)$, in the $\Psi$-aggregate $k$-median problem.
		\item $\Psi\left(\cost_z(d_t;S):t\in [T]\right)$, where $\cost_z(d_t;S) := \left(\sum_{j\in C} d_t(j, S)^z\right)^{1/z}$, in the $\Psi$-aggregate $(k,z)$-clustering problem.
	\end{itemize}
    When it is clear from context, we denote the objective function as $\aggcost(\cdot)$.
	
\end{definition}

The above problem also can be thought of as a stochastic/robust version of clustering under uncertainty where the uncertainty is over the distances. Note that when $\Psi$ is the sum-function, 
we wish to find a solution which minimizes the average $k$-supplier/$k$-median cost of the returned solution $S$, where the average is over the $T$ ``scenarios'' (we are assuming the full-information setting); when $\Psi$ is the max-function, we get the robust optimization setting. In the literature~\cite{AnthoGGN2010, MakarV2021, ChlamMV2022}, stochastic/robust clustering has been studied when the metric is fixed but the uncertainty is over the client sets. 
We can consider a generalized version of aggregate clustering, which models client demands changing over time.
\begin{definition}[Generalized Aggregate Clustering]
	Apart from the $t$ distances, we have $t$ weight functions $w_t: C \to \RR$. In the objective functions, we modify 
	$\cost$s by multiplying $d_t(j,S)$'s with $w_t(j)$. More precisely, $\cost_z(d_t,S) = \left(\sum_{j\in C} w_t(j) \cdot d_t(j, S)^z\right)^{1/z}$ 
	and $\cost_\infty(d_t;S) = \max_{j\in C} w_t(j) d_t(j,S)$. In particular, if $w_t: C\to \{0,1\}$, then this models having different client sets at different times.
\end{definition}
All our upper bounds apply for generalized aggregate clustering, and all our lower bounds hold even for the vanilla version of aggregate clustering. In fact, in~\Cref{obs:generalized} we present a weak equivalence between the two problems, by showing that any algorithm for vanilla aggregate clustering can solve generalized aggregate clustering with $\set{0,1}$ weights, i.e. aggregate clustering where the client set is allowed to change over time.

\subsection{Our Findings}
\textbf{General Metrics.}
In its full generality, unfortunately, the aggregrate clustering problem becomes very hard when the number of scenarios $T$ becomes $3$ or larger. In~\Cref{thm:T=3-hardness} we show that it is NP-hard to distinguish between
zero and non-zero solutions when $T \geq 3$. To complement this result, we show (in \Cref{thm:k-supp:3appx,thm:k-med:const-appx}) that when $T = 2$, both the (generalized) $\Psi$-aggregate $k$-median and $k$-supplier problems have polynomial time $O(1)$-approximations when $\Psi$ is any norm. 
In fact, for $k$-supplier, we match the optimal $3$-approximation that is known at $T=1$~\cite{HochbS1986,Ples1987}. Both the above results, at a very high level, follow from the fact that bipartite matching (or rather, partition matroid intersection) is tractable, while 3D-matching is NP-hard.

Given the above hardness for $T \geq 3$, we attempt to bypass it in two ways: (i) investigating fixed-parameter tractable approximation algorithms, and (ii) imposing additional structure over the metrics in various scenarios. 

\medskip \noindent \textbf{Constant-factor FPT approximation algorithms.} The hardness for $T=3$ clearly shows one can't expect $f(T)\cdot\poly(n,k)$-time approximation algorithms. Moreover, we show (in \Cref{thm:w2-hardness}) that it is also hard to obtain $f(k)\cdot\poly(n,T)$-time approximation algorithms: we encode the $k$-hitting set problem as an instance of aggregate clustering, and our lower bound follows from the $W[2]$ hardness of $k$-hitting set~\cite[Theorem 13.12]{CyganFKLMPPS2015}. On the other hand, if we allow $k$ and $T$ to be both fixed parameters, then (in \Cref{thm:3-appx}) we obtain an $f(k,T)\cdot \poly(n)$-time $(3+\e)$-approximation. The high level idea for this is as follows: Suppose we have $T$ partitions of a universe $U$, and we must decide if there is a $k$-sized subset that is a \emph{simultaneous} hitting set for all $T$ partitions. In $(k+1)^{kT}\cdot \poly(n)$-time, we can guess for each $t \in [T]$ and $i \in [k]$ the part in the $t$\textsuperscript{th} partition that the $i$\textsuperscript{th} hitting element hits.
Such an idea is the core of many FPT approximation algorithms for clustering~\cite{ABB+23,BGI25,GI25}. 

\medskip \noindent \textbf{Approximation schemes in structured metrics.} To impose structure on the metric, we note that any finite metric $d_t$ is a shortest path metric on an undirected {\em base graph} $G = (V, E)$ with weights $w_t(e)$ on the edges.
Note that our restriction of having the {\em same} graph $G$ across scenarios is without loss of generality if we allow a complete graph. 
Furthermore, in many applications, such as the road-network application mentioned in the early paragraphs of the introduction, this indeed is the case where the ``weights'' change over
the seasons but the underlying network remains the same. Can one obtain better algorithmic results exploiting the structure of the underlying undirected base graph $G$?

Unfortunately, our hardness results apply even for very simple base graphs $G$. In particular, our reduction from hitting set (\Cref{thm:w2-hardness}, which rules out any finite factor approximation algorithms in $f(k) \cdot \poly(n,T)$ time) can be carried out even when $G$ is a star. Moreover, our reduction from 3D matching (\Cref{thm:T=3-hardness}, which rules out any $f(T)\cdot\poly(n,k)$-time finite factor approximations) can be carried out even when $G$ is a grid graph (see \Cref{thm:treewidth-lb}).

Nevertheless, we bypass this hardness by designing $1+\eps$ approximation algorithms in time $f(\eps, k, T) \cdot \poly(n)$  (ie, EPASes in parameters $k$ and $T$) for a broad class of structured input metrics, which includes (for example) the case when $G$ is a planar graph. More precisely, we design approximation algorithms whenever input metric $d_t$ has {\em bounded scatter dimension}. This concept was recently introduced by Abbasi \etal~\cite{ABB+23}, along with an algorithm demonstrating that metrics of bounded scatter dimension admit $f(\eps, k)\cdot \poly(n)$ time $1+\eps$-approximation algorithms for many clustering problems. We generalize their ideas to obtain an EPAS for the (generalized) aggregate $(k,z)$-clustering problem. Bounded scatter dimension captures a rich class of metrics, including doubling metrics, shortest-path metrics on planar or minor-free graphs \cite{BP25}, and metrics of {\em bounded highway dimension}\footnote{We are not aware of a published proof that highway dimension has bounded scatter dimension, but the claim can be found at timestamp 0:52 in the recorded FOCS 2023 talk on \cite{ABB+23} (see link https://focs.computer.org/2023/schedule/) and also in \cite{BGI25}.} --- this last class, which was introduced to model transportation networks \cite{abraham2010highway, abraham2016highway}, may be of particular interest in light of our road-network application from earlier. We emphasize that our lower bounds on the star and grid apply to metrics of bounded scatter dimension, so our FPT$(k, T)$ approximation is essentially the best one could hope to achieve (up to the $1+\eps$ approximation) in this setting.

Additionally, we consider the case when $G$ is a tree, or more generally a bounded-treewidth graph. As bounded-treewidth graphs have bounded scatter dimension, our $f(\eps, k, T)\cdot \poly(n)$-time algorithm from above applies. However, we can do even better --- we obtain a $(1+\eps)$-approximation for the sum-aggregate $(k,z)$-clustering problem in time $f(T, \eps) \cdot \poly(n,k)$, by generalizing a folklore dynamic programming
algorithm for $k$-median on bounded-treewidth graphs \cite{ARR98, KR07, CFS21}.
Parameterizing by $T$ is necessary, because of our hardness result on the star. On the other hand, we can avoid parameterizing by $k$ because our hardness for the $T= 3$ case only applies $G$ is a grid graph, which has high treewidth --- recall that graphs of bounded treewidth are precisely those graphs which exclude a grid as a minor \cite{robertson1986graph, chuzhoy2021towards}. In fact, we show in \Cref{thm:treewidth-lb} that our $T=3$ hardness rules out $f(T) \cdot \poly(n)$-time algorithms for \emph{any} graph $G$ that has sufficiently large treewidth (specifically, $\Omega(\log^{295} n)$), whereas our upper bound in \Cref{thm:treewidth} achieves $f(T) \cdot \poly(n)$ runtime so long as $G$ has treewidth $o(\log n/\log \log n)$ (see \Cref{remark:treewidth-algo-limit}). Thus, our upper and lower bounds show that the treewidth of the base graph $G$ almost entirely captures the tractability of FPT($T$) algorithms for the sum-aggregate\footnote{We expect that a DP could be designed for other $\Psi$-aggregator norms, but in this paper we only consider $\Psi=$ sum.}
clustering problem (up to some $1+\eps$ approximation factor, and up to some $\poly(\log n)$ gap in the treewidth between our upper and lower bounds).

\paragraph*{Open Directions}
We end this section mentioning some open directions for future study. If one observes the metrics we use to prove our hardness results, we notice that 
they ``change a lot'' as one moves from $d_t$ to $d_{t+1}$. Can one obtain positive algorithmic results if this change is bounded? Concretely, assume 
$d_t$ is the shortest path metric on $G_t = (V, E_t)$ and $d_{t+1}$ is the same on $G_{t+1} = (V, E_{t+1})$. What if we forced $|E_t \Delta E_{t+1}| \leq O(1)$?
Again taking the road-network example, these could lead to certain road-closures or road-building, but in a single period perhaps not too many edges are added or deleted.
We do not know any lower bound for this version, nor do we know how to design algorithms. Structurally, this seems to require an understanding of how shortest paths change with insertion/deletion of edges
which may be of independent interest.

Another interesting, but slightly less concrete, direction is to explore {\em online} algorithms where an algorithm is allowed to change the centers but the distance metric is revealed only after the algorithms decision is made. Can one design ``low-regret'' algorithms as in~\cite{Zinke03}?
The sum-aggregate objective is in fact the benchmark with which regret is measured, and our results show that approximating this benchmark for general metrics is hard. With this in view, what would be a good definition of regret?

Coming back to aggregate clustering, a concrete open problem is the following. The aforementioned hardness when $G$ is a star actually has $F\neq C$, but the case of $V = C = F$ is also interesting.
Are there $f(T)\cdot\poly(n,k)$ approximation or even exact algorithms for $\Psi$-aggregated $k$-supplier/median problem when $G$ is a tree? We believe this is an interesting generalization of $k$-median problem on trees and worthy of exploration.

\subsection{Related Works}
As noted earlier, approximation algorithms for clustering problems have a rich history and we don't attempt to summarize this. Rather we mention the works most related to the ``changing metric'' viewpoint.

Deng, Li, and Rabani~\cite{DengLR2022} consider the $k$-clustering problems in a dynamic setting in which both clients and facilities may change over time (though the metric remains the same). The algorithm is allowed to change the open facilities over time, and the objective is to minimize the clustering costs and the distances that open facilities move over time. Their main results, at a high level, are in lockstep with our first set of results for aggregate clustering (on the tractability of $T=2$ vs $T \ge 3$ scenarios); though our findings and theirs do not imply one another. Their setting has an ambient metric space $(X,d)$ and, over $T$ timesteps, there are client and facility sets $C_t,F_t \subseteq X$ for each timestep $t$. The goal is to pick $k$-sized solutions $S_t \subseteq F_t$ at each time step minimizing a certain objective function.

As mentioned earlier, our problem is related to stachastic/robust clustering problems which consider
scenarios where client sets change but the algorithm needs to open the same set of $k$ centers to serve these. This question was studied by
Anthony, Goyal, Gupta, and Nagarajan~\cite{AnthoGGN2010} who obtain a $O(\log T + \log n)$-approximation for the robust $k$-median problem discussed earlier. They also show that stochastic $k$-center (i.e. stochastic $k$-supplier with $F = C$) is as hard to approximate as the densest-$k$-subgraph problem. 
Makarychev and Vakilian~\cite{MakarV2021} obtain an $\paren{e^{O(z)}\frac{\log T}{\log \log T}}$-approximation to robust $(k,z)$-clustering. Along with Chlamt{\'a}{\v{c}}, the same authors~\cite{ChlamMV2022} subsequently generalize this approximation algorithm to clustering with cascaded norm objectives, i.e. objectives of the form $\Big(\sum_{t=1}^T\big(\sum_{j \in C_t}d(j,S)^z\big)^{q/z}\Big)^{1/q}$. When $z \geq q$, they obtain a near-optimal $O\paren{k^{O_{z,q}(1)}}$-approximation. When $z \leq q < \infty$, they obtain an $O_{z,q}(1)$-approximation.

Another related line of work is {\em dynamic} algorithms for clustering where clients come and go, and the algorithm needs to update the solution fast and maintain $O(1)$-approximations. There has been a lot of work \cite{LattaV17,ChanGS18,CohenHPSS19,FichtLNS21,BatenEFHJMW23,BhattCLP2023,LackiHGJV24,ForstS25,BhattCLP25,BhattCF25} recently on these probelms. We point the readers to~\cite{BhattCLP25} (and references within) for dynamic $k$-center, and~\cite{BhattCF25}  (and references within) for dynamic $k$-median.
It is curious to see if techniques from these can address the questions mentioned in the open-directions paragraph earlier.

\section{Hardness of Approximation}\label{sec:hardness}

\begin{theorem}[Hardness of approximation when $T\geq 3$]\label{thm:T=3-hardness}
	For any homogeneous aggregator $\Psi$ and any $z \in \NN \cup \set{\infty}$, 
	there is no $f(T)\poly(n,k)$-time finite-factor approximation for $\Psi$-aggregate $(k,z)$-clustering on finite metrics on $n$-vertices unless $P=NP$.
	The result holds even for $T=3$.
\end{theorem}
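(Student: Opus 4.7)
The plan is to reduce from the NP-complete 3-Dimensional Matching problem (3DM). An instance consists of three disjoint $n$-element sets $X, Y, Z$ and a collection $E \subseteq X \times Y \times Z$, and asks whether some $M \subseteq E$ of size $n$ covers each element of $X \cup Y \cup Z$ exactly once. I will construct in polynomial time an aggregate clustering instance with $T = 3$ whose optimum is $0$ on YES 3DM-instances and strictly positive on NO instances. A finite-factor approximation then decides 3DM, yielding the NP-hardness of any $f(T)\poly(n,k)$-time approximation at $T=3$.

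Given $(X, Y, Z, E)$, set the client set $C = X \cup Y \cup Z$, the facility set $F = E \cup \set{e_*}$ where $e_*$ is a fresh auxiliary facility, and $k = n + 1$. Write $S_1 = X$, $S_2 = Y$, $S_3 = Z$. Each distance function $d_t$ will be a quotient pseudo-metric: partition $V = C \cup F$ into equivalence classes, set all intra-class distances to $0$ and all inter-class distances to $2$. The classes in $d_t$ are one class $G_v^t := \set{v} \cup \set{e \in E : e_t = v}$ for each $v \in S_t$, together with a single residual class $G_*^t := (C \setminus S_t) \cup \set{e_*}$. Each vertex belongs to exactly one class, and the ``intra-$0$, inter-$2$'' assignment automatically satisfies the triangle inequality, so no case analysis is needed to verify that $d_t$ is a valid pseudo-metric.

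The correctness analysis has two directions. For the forward direction, if $M \subseteq E$ is a perfect 3D matching, open $S := M \cup \set{e_*}$; in each scenario $t$, every $v \in S_t$ lies in the same class $G_v^t$ as its unique covering triple from $M$, and every $v \in C \setminus S_t$ lies in $G_*^t$ with $e_*$, so $\cost_z(d_t; S) = 0$ for all $t$, and homogeneity of $\Psi$ gives $\aggcost(S) = 0$. For the converse, if $|S| = n+1$ satisfies $\aggcost(S) = 0$, then (by positivity of a homogeneous aggregator) every $\cost_z(d_t; S) = 0$, which forces every client's class in every scenario to contain an open facility. Since the only facility in $G_*^t$ is $e_*$, this forces $e_* \in S$, and hence $|S \cap E| = n$. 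For each $v \in S_t$, the class $G_v^t$ must contain some open triple, i.e., some $e \in S \cap E$ with $e_t = v$; applying this for $t = 1, 2, 3$ shows that the $n$ triples $S \cap E$ bijectively cover $X$, $Y$, and $Z$ via their respective coordinates, so they form a perfect 3D matching.

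The whole reduction is polynomial time, so an $f(T)\poly(n,k)$-time finite-factor approximation at $T=3$ would decide 3DM in polynomial time, contradicting $P = NP$. The only mildly subtle point is that, unlike the generalized version of the problem, we cannot ``turn off'' a client in a specific scenario via a weight; the role of the auxiliary facility $e_*$ (and the residual class $G_*^t$) is precisely to absorb, at zero cost, the $2n$ clients outside $S_t$ in each scenario. Making this work costs exactly one extra slot in $k$, which is why the reduction uses $k = n + 1$ rather than $n$; the remaining steps are routine.
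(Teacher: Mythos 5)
Your proof is correct and reduces from the same source problem, perfect 3D Matching, exploiting the same structural observation: agreement in the $t$-th coordinate defines an equivalence relation, and each metric $d_t$ is the corresponding quotient metric, so a zero-cost solution is equivalent to a simultaneous system of distinct representatives, i.e.\ a perfect 3D matching. The concrete encoding differs from the paper's. The paper sets $C = F = E$ (the hyperedges are both clients and facilities) with $k = n$ and distances in $\{0,\infty\}$; two hyperedges are at distance $0$ under $d_t$ iff they share their $V_t$-endpoint. You instead take $C = X \cup Y \cup Z$, $F = E \cup \{e_*\}$ with $k = n+1$, and you introduce the auxiliary facility $e_*$ and the residual class $G_*^t$ to absorb at zero cost the $2n$ clients not indexed by coordinate $t$. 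The paper's version is marginally cleaner (no auxiliary facility, $k = n$, and it simultaneously illustrates the $C = F$ special case highlighted in the introduction), while yours has the mild advantage of producing a bona fide finite metric (distances in $\{0,2\}$) rather than relying on the standard convention that $\infty$ abbreviates a sufficiently large number, and it naturally models the supplier setting where $C \cap F = \emptyset$. Both are valid; the difference is bookkeeping rather than a new idea.
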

\begin{proof}
	We show that it is NP-hard to decide whether the optimum value is finite or infinite for any $\Psi$-aggregate $(k,z)$-clustering problem.
    We reduce from the (perfect) 3D Matching problem where, recall, we are given a $3$-partite hypergraph $H = (V_1 \uplus V_2 \uplus V_3, E)$
    where $|V_1| = |V_2| = |V_3|$ any every hyperedge $e\in E$ intersects every $V_t$ exactly once. The goal is to decide whether or not there exists a {\em matching} (pairwise disjoint subset) $M\subseteq E$ 
    of hyperedges which spans all vertices. Equivalently, $|M\cap e| = 1$ for all $e\in E$.

    Given such a 3D Matching instance $(V_1\uplus V_2 \uplus V_3,E)$, we construct an aggregate clustering instance on $T = 3$ metrics, where the underlying graph is a complete graph $G$ with vertex set $E$. The client set $C = F = E$ and $k \coloneqq \abs{V_1}=\abs{V_2}=\abs{V_3}$. For each metric $t \in \set{1,2,3}$ and each two distinct clients $e,e'$, we set the distance $d_t(e,e') \coloneqq 0$ if $e \cap e' \cap V_t \neq \emptyset$, i.e. if they agree on their vertex in $V_t$. Otherwise set $d_t(e,e') \coloneqq \infty$. It is easy to see that each $d_t$ is in fact a metric since if $e_1, e_2$ share the same $V_t$ endpoint as do $e_2, e_3$, then they must be the same endpoint and thus $e_1,e_3$ share that as well. It remains to show that the clustering instance admits $0$-cost solutions iff the original instance admits a perfect 3D matching.
    \begin{itemize}
    \item Suppose this clustering instance admits a $0$-cost solution. Call this solution $S$. Fix a $t \in \set{1,2,3}$ and $v \in V_t$. Let $E(v)$ be the set of hyperedges that contain $v$. We know by construction that, $\forall e \in E(v), d_t(e,S) = 0$, i.e. $\exists e' \in S$ such that $e,e'$ agree on their vertex in $V_t$. But then this vertex must be $v$ itself. Thus $e' \in E(v) \cap S$, so $\abs{E(v) \cap S} \geq 1$.
    
    Since $\abs{V_t} = \abs S$, this means that for each $v \in V_t$, $\abs{E(v) \cap S} = 1$. So $S$ is a perfect 3D matching.
    \item Suppose $M$ is a perfect 3D matching in the original instance. Fix $t \in \set{1,2,3}$ and $e \in C$ such that $e \cap V_t = \set v$. We know that $M \cap E(v) \neq \emptyset$, so fix $e' \in M \cap E(v)$. Then $e \cap e' \cap V_t = \set v$, i.e. $d_t(e,e') = 0$, and so $d_t(e,M) = 0$.
    So for each $t \in \set{1,2,3}$ and each $e \in C$, we have $d_t(e,M) = 0$.
    Thus $\aggcost(M) = 0$. \qedhere
\end{itemize}
\end{proof}

Using techniques from graph drawing literature, one can show that our reduction \Cref{thm:T=3-hardness} holds even when the base graph is a grid graph. Furthermore, we can rule out $\poly(n)$-time algorithms for aggregate clustering whenever the base graph $G$ contains even a relatively small ($\tilde \Omega(\log^3 n) \times \tilde \Omega(\log^3 n)$) grid minor, and consequently has relatively small treewidth; this follows from the fact that, under the randomized exponential time hypothesis, it is hard to obtain a $2^{o(n)}$-time algorithm for 3D matching~\cite{KBI19, Kus20}. We state this formally below, and provide the proof in \Cref{sec:treewidth-lb-proof}.

\begin{restatable}{theorem}{treewidthLB}
\label{thm:treewidth-lb}
    Let $G$ be any $n$-vertex graph with treewidth $\tw = \Omega(\log^{295} n)$. (For example, one may simply take $G$ to be a $\sqrt{n} \times \sqrt{n}$ grid graph.) 
    For any homogenous aggregator $\Psi$ and any $z \in \NN \cup \set{\infty}$, 
    there is no $f(T)\poly(n,k)$-time finite-factor approximation for $\Psi$-aggregate $(k,z)$-clustering problem when the base graph inducing the metrics is $G$, unless the randomized exponential time hypothesis is false.
    This results holds even for $T=3$.
\end{restatable}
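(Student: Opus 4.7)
My plan has three parts. First, show that the reduction in \Cref{thm:T=3-hardness} can be implemented when the base graph is a grid graph. Second, use a polylogarithmic excluded-grid-minor theorem to transfer this grid-based hardness to any $n$-vertex graph of treewidth $\Omega(\log^{295} n)$. Third, combine with the randomized ETH lower bound $2^{\Omega(N)}$ for 3D matching on $N$ elements (\cite{KBI19, Kus20}) to rule out $f(T)\poly(n,k)$-time finite-factor approximations.

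For the first step, observe that the reduction in \Cref{thm:T=3-hardness} only uses $0$ and $\infty$ distances, and that for each $t$ the relation $d_t = 0$ is an equivalence on hyperedges whose classes are indexed by vertices of $V_t$. To realize this on a grid, I would start from a planar version of 3D matching (namely one whose bipartite $E$--$V$ incidence graph is planar), which remains $2^{\Omega(N)}$-hard under randomized ETH by standard sparsification. For each $t$ I would route a $0$-weight ``bus'' through the grid connecting, for each $v \in V_t$, all hyperedges containing $v$. The buses for the three values of $t$, and for the distinct vertices within each $V_t$, must be placed on disjoint routing layers of the grid so as to prevent unintended $0$-shortcuts between different equivalence classes. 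Standard graph-drawing techniques (planar embeddings laid out on grids via orthogonal routing) then embed this structure on a $g \times g$ grid with $g = \poly(N)$.

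For the second step, I would invoke a Chuzhoy-style excluded-grid theorem: there is a constant $c$ such that every graph with treewidth at least $g^c \cdot \poly\log g$ contains a $g \times g$ grid minor. Applied to $G$ with $\tw(G) = \Omega(\log^{295} n)$ and composed with the $\poly(N)$ overhead from step (i), the resulting grid minor is large enough to host a 3D matching instance of size $N = \omega(\log n)$. To run the clustering reduction through a grid minor rather than a grid subgraph, I would assign weight $0$ to every edge inside each branch set of the model and a sufficiently large weight to all non-model edges of $G$, so that shortest paths in $G$ between the vertices representing hyperedges exactly realize the three intended metrics. The specific exponent $295$ falls out of calibrating the excluded-grid exponent $c$ against the polynomial blow-up in step (i).

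Finally, any $f(T)\poly(n,k)$-time finite-factor approximation with $T = 3$ runs in $\poly(n) = 2^{O(\log n)} = 2^{o(N)}$ time, and would therefore solve 3D matching of size $N$ in randomized subexponential time, contradicting randomized ETH. A finite approximation factor suffices because the clustering optima produced by the reduction are either $0$ or $\infty$. The main technical obstacle is step (i): simultaneously realizing the three $V_t$-equivalence relations on a single grid via disjoint routing of the buses, while keeping the blow-up $g = \poly(N)$ small enough that the composed exponent lands near $295$.
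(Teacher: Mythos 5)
There is a genuine gap in your Step 1. You plan to route, on a single grid, $0$-weight ``buses'' for all three metrics simultaneously, using ``disjoint routing layers'' to avoid spurious $0$-shortcuts, and you conclude this forces you to start from a \emph{planar} version of 3D matching. But the paper's reduction does not need simultaneous routing at all: the base graph $G$ is shared across the $T=3$ scenarios, while the \emph{edge weights} are allowed to differ per scenario. Each metric $d_t$ in the \Cref{thm:T=3-hardness} reduction is a $0/\infty$ metric whose $0$-distance equivalence classes form a disjoint union of paths, which is trivially planar --- independently for each $t$. So each metric can be realized as its own reweighting of the same grid (0-weight on drawn edges, $\infty$ elsewhere), with no need to route the three structures disjointly. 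The only cross-metric constraint is that the hyperedge-vertices $x_1,\dots,x_m$ must occupy the \emph{same} fixed positions on the grid in all three drawings; the paper gets this from Schaefer's theorem on orthogonal drawings with prescribed vertex locations (\Cref{lem:ortho-drawing}), placing the $x_i$ along the diagonal. This fixed-location device is the key idea your proposal is missing.

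The consequence of the gap is not cosmetic: your fallback to planar 3D matching is problematic because you have not justified that planar 3D matching retains $2^{\Omega(N)}$ hardness under rETH --- ``standard sparsification'' does not give planarity, and planar restrictions of NP-hard problems typically admit $2^{O(\sqrt{N})}$-time algorithms via separators, which would ruin the exponent calibration entirely. By contrast, the paper keeps the 3D matching instance completely general and uses the rETH lower bound of \cite{KBI19, Kus20} for (ordinary) 3D matching, taking $m=\Theta(\log^{1.001} n)$ hyperedges so that a $\poly(n)$-time algorithm would be $2^{o(m)}$. Your Step 2 (excluded-grid theorem plus reweighting inside branch sets of the minor model) and your final ETH calculation are essentially correct and match the paper's \Cref{cor:drawing-on-treewidth} and the closing argument, but they rest on the flawed Step 1.
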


Next, we give a different reduction to rule out that FPT algorithms parameterized by $k$, even when the base graph is a star.

\begin{theorem}[Hardness of approximation on stars in FPT time.]\label{thm:w2-hardness}
    For any homogenous aggregator $\Psi$ and any $z \in \NN \cup \set{\infty}$, there is no $f(k)\poly(n,T)$-time finite-factor approximation algorithms for the $\Psi$-aggregate $(k,z)$-clustering problem 
  	even when the base graph $G$ inducing the metrics is a star, unless W[2] = FPT.
\end{theorem}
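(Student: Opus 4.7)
The plan is to reduce from the $k$-hitting set problem, which is W[2]-hard when parameterized by $k$ (\cite[Theorem 13.12]{CyganFKLMPPS2015}), in close analogy with the 3D-matching reduction used for \Cref{thm:T=3-hardness}. Given a hitting-set instance $(U, \calS)$ with $|U| = n$ and $\calS = \set{S_1, \ldots, S_m}$, I would build an aggregate clustering instance whose base graph $G$ is the star with center $c$ and one leaf $u_i$ for each element $u_i \in U$. Take $C = \set{c}$ as the sole client, $F = \set{u_1, \ldots, u_n}$ as the facility set, keep $k$ unchanged, and set $T := m$, one metric per set in the family.

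The key step is specifying the $T$ metrics: for each $t \in [T]$, assign the star edge $(c, u_i)$ weight $0$ if $u_i \in S_t$ and weight $\infty$ otherwise. Each resulting shortest-path metric $d_t$ is trivially valid on a star (leaf-to-leaf distances are just the sum of the two pendant edge weights, and symmetry and triangle inequality are automatic). Then I would verify correctness: because $|C| = 1$, for every $z \in \NN \cup \set{\infty}$ we have $\cost_z(d_t; S) = d_t(c, S) = \min_{u_i \in S} d_t(c, u_i)$, which equals $0$ iff $S \cap S_t \neq \emptyset$ and equals $\infty$ otherwise. Thus the $T$-vector of scenario costs is $\bzero$ exactly when $S$ is a hitting set of $\calS$, and contains at least one infinite entry otherwise. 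Since $\Psi$ is a homogeneous aggregator, $\Psi(\bzero) = 0$ while $\Psi$ must be infinite on any vector with an infinite coordinate (by the same argument already spelled out inside the proof of \Cref{thm:T=3-hardness}); hence any finite-factor approximation must return a hitting set whenever one exists.

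Finally, I would translate this into the complexity statement: an $f(k)\poly(n, T)$-time finite-factor approximation for $\Psi$-aggregate $(k,z)$-clustering on a star would give an $f(k)\poly(n, m)$-time algorithm for $k$-hitting set, which would collapse W[2] to FPT, yielding the stated lower bound.

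The main obstacle I foresee is really only to make the handling of the homogeneous-aggregator hypothesis uniform across all admissible $\Psi$ and all $z$, i.e.\ to ensure that ``homogeneous'' is exactly the property that separates $\bzero$ from any vector with an infinite coordinate. This is routine because it mirrors the corresponding step in \Cref{thm:T=3-hardness}, and as a safety net one can replace $\infty$ by a sufficiently large finite value $M$, which keeps $d_t$ a legitimate metric on the star and still forces any finite-ratio approximation to distinguish cost $0$ from a strictly positive cost, thereby sidestepping any interpretive issues with $\infty$ without affecting the conclusion.
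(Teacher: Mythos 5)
Your reduction is exactly the one the paper uses: a star with the center as the single client and leaves as facilities, $T=m$ metrics where the $t$-th metric gives edge $(c,u_i)$ weight $0$ when $u_i\in S_t$ and $\infty$ otherwise, together with the same ``finite aggregate cost iff hitting set exists'' correctness argument. Your safety-net remark about replacing $\infty$ with a large finite $M$ is a minor cosmetic variant; otherwise the proposal matches the paper's proof.
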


\begin{proof}
    We reduce from $k$-hitting set, which is W[2] hard when parameterized by $k$~\cite{CyganFKLMPPS2015}.
    Suppose we are given an instance of $k$-hitting set defined by a universe $U = \set{v_1, \ldots, v_n}$ of $n$ points, a set $\calX = \set{X_1, \ldots, X_m}$ of $m$ subsets of $U$, and an integer $k$; our task is to determine whether there are $k$ points in $U$ that hit all sets of $\calX$. We construct an instance of aggregate clustering as follows. Let $G$ be a star graph with a root vertex $r$ connected to $n$ leaves $v_1, \ldots, v_n$. Let the facility set be $F \coloneqq \set{v_1, \ldots, v_n}$, and let the client set (in every metric) be $C = \set{r}$. We define $T \coloneqq m$ distances functions, one for each set $X_t \in \calX$, as follows: in the $t$-th metric, we set the weight of edge $(r, v_i)$ to be 0 if $v_i \in X_t$ and $\infty$ otherwise. 

    It remains to show that the $\Psi$-aggregate $(k,z)$-clustering problem has finite cost iff there is a hitting set of size $k$.
    \begin{itemize}
        \item First we show that, if there is a hitting set $S$ of size $k$, then the $\aggcost(S) = 0$. Indeed, for every $t \in [T]$, there exists some $v_i \in S$ with $v_i \in X_t$ and so $\dist_t(r, S) \le \dist_t(r, v_i) = 0$. In particular, the $(k,z)$-clustering cost paid by the client set $C = \set{r}$ in each metric is $0$, and so the $\Psi$-aggregate cost is also 0 because $\Psi$ is a norm.
        \item Similarly, if there is a set $S$ of $k$ facilities such that $\aggcost(S) < \infty$, then we claim $S$ is a hitting set. Indeed, consider some $X_t \in \calX$. Since $S$ has finite cost, we know that $\dist_t(r, S) < \infty$ and in particular there exists some $v_i \in S$ with $\dist_t(r, v_i) = 0$. By construction, $v_i$ hits $X_t$.\qedhere
    \end{itemize}
\end{proof}

\begin{remark}
    The above reduction additionally shows that designing \emph{bicriteria} approximation algorithms for aggregate clustering remains hard, due to the hardness of approximation of hitting set \cite{Feige1998} --- there exists some $\beta = O(\log n)$ such that it is NP-hard to distinguish between the case that there is a hitting set of size $k$ and the case that there is no hitting set of size $\beta k$.
\end{remark}

\subsection{\texorpdfstring{Proof of \Cref{thm:treewidth-lb}}{Proof of Lower Bound on Arbitrary Treewidth}}\label{sec:treewidth-lb-proof}

We massage our reduction from \Cref{thm:T=3-hardness}. Observe that each of the $T=3$ metrics used in the reduction \Cref{thm:T=3-hardness} is a $0/\infty$ metric; that is, either points are at distance 0, or are at distance $\infty$. In particular, each metric can be realized as a disjoint union of paths (each with 0-weight edges), and so each metric is the shortest-path metric of a planar graph. However, it is not clear that there is an \emph{underlying base graph} $G$ that is planar, because vertices could belong to different paths in different metrics. We show that these sets of paths can be realized as reweightings of an underlying grid graph, or of any graph of sufficiently large treewidth; see \Cref{fig:grid-reweighting}.

This reweighting scheme essentially follows from a much more general result by Schaefer~\cite{schaefer2021new} about orthogonal drawings of planar graphs with fixed vertex locations. (Compare also to the classic theorem of Pach and Wenger~\cite{pach2001embedding} which proves something similar but without the guarantee of orthogonality.) 
\begin{lemma}[Theorem 9 of \cite{schaefer2021new}]
\label{lem:ortho-drawing}
    Let $G$ be a planar graph with maximum degree at most $4$, with $n$ vertices $x_1, \ldots, x_n$. Let $p_1, \ldots, p_n$ be a set of $n$ points in $\RR^2$. Then there is an orthogonal drawing of $G$ (ie, a drawing in which every edge is drawn as a combination of horizontal and vertical line segments) such that every vertex $v_i$ is located at $p_i$, and every edge has at most $10n + 6$ bends. Moreover, this drawing can be found in $O(n^2)$ time.
\end{lemma}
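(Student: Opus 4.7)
The plan is a two-stage construction: first, produce a \emph{topological} planar drawing of $G$ in which $x_i$ is placed at $p_i$ and edges are arbitrary crossing-free Jordan arcs; then \emph{orthogonalize} this drawing against an axis-aligned grid defined by the point set $\{p_1,\ldots,p_n\}$. The first stage guarantees a valid planar topology with the prescribed vertex locations, and the second stage converts curves into orthogonal polylines with a bounded number of bends.

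\textbf{Stage 1 (topological drawing).} Since $G$ is planar with maximum degree $4$, fix any combinatorial embedding $\pi$ of $G$ (a cyclic order of edges around each vertex). I would invoke the Pach--Wenger theorem \cite{pach2001embedding}, which gives exactly such a topological planar drawing realizing $\pi$ with arbitrary prescribed vertex positions. Concretely, one picks a spanning tree $T$ of $G$, draws $T$ first (any tree admits a crossing-free curvilinear drawing for any point placement by a straightforward inductive argument), and then inserts each of the $O(n)$ non-tree edges one at a time into the face of the current partial drawing whose boundary contains both its endpoints in the order dictated by $\pi$.

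\textbf{Stage 2 (orthogonalization).} Form the grid $\Gamma$ by drawing horizontal and vertical lines through every $p_i$; then $\Gamma$ has $n$ horizontal and $n$ vertical lines, and every $p_i$ is a grid vertex. Because $\deg(x_i)\leq 4$, assign each incident edge of $x_i$ to a distinct cardinal direction (N, E, S, W) at $p_i$, consistent with the cyclic order in $\pi$. Now replace each topological curve $\gamma$ from Stage 1 with an orthogonal path in $\Gamma$ that traces the same sequence of cells as $\gamma$. Each time $\gamma$ crosses a grid line, insert at most a constant number of bends; since $\gamma$ visits at most $O(n)$ cells, each edge gets $O(n)$ bends in total, and the precise constant $10n+6$ follows from careful per-cell accounting. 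The total runtime is $O(n^2)$: $\Gamma$ has $O(n^2)$ cells, and routing each of the $O(n)$ edges through $O(n)$ cells fits the same bound.

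\textbf{Main obstacle.} The delicate step is that two distinct orthogonal paths, even if they approximate non-crossing topological curves, may coincide on a grid edge or collide inside a cell. Resolving this requires a routing sub-procedure: inside each cell of $\Gamma$, subdivide into a small constant-sized routing mesh and assign every edge passing through the cell to its own horizontal and vertical track, with track order determined by the cyclic order in which the topological curves cross the cell boundary. Verifying that such an assignment is always feasible --- and that it contributes only a small constant number of extra bends per cell traversal, so that the global bound of $10n+6$ bends per edge holds --- is what I expect to be the most technical part of the argument. The vertex ``ports'' align automatically thanks to the cardinal-direction assignment from Stage 2 and the degree bound $\deg(x_i)\leq 4$.
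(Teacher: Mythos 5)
This statement is not proved in the paper at all: it is imported verbatim as Theorem~9 of Schaefer~\cite{schaefer2021new} and used as a black box (the paper even notes it is a strengthening of the Pach--Wenger theorem~\cite{pach2001embedding}). So the relevant question is whether your sketch constitutes an independent proof of the cited result, and as written it does not: the quantitatively essential steps are exactly the ones you leave open.

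The concrete gap is in Stage~2. Your bend bound rests on the claim that each topological curve $\gamma$ from Stage~1 ``visits at most $O(n)$ cells'' of the $n\times n$ grid $\Gamma$, but nothing in Stage~1 guarantees this. A crossing-free topological drawing with prescribed vertex locations can (and in general must) have edges that wind around other vertices; Pach and Wenger in fact prove a lower bound showing that $\Omega(n)$ bends per edge, i.e.\ $\Omega(n)$ windings in aggregate, are sometimes necessary, and an unconstrained Jordan arc can cross a single grid line of $\Gamma$ many times, visiting $\Theta(n^2)$ cells. To control the cell count you would need a topological drawing of bounded combinatorial complexity --- which is essentially the content of the Pach--Wenger theorem itself (a polyline drawing with $O(n)$ bends per edge) --- and even starting from such a drawing, each of the $O(n)$ segments of an edge can cross $\Theta(n)$ grid lines, so naive per-crossing accounting gives $O(n^2)$ bends per edge, not $10n+6$. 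The second acknowledged hole, collision-free routing inside cells, is also not minor: the number of edges traversing a single cell can be $\Theta(n)$, so a ``constant-sized routing mesh'' per cell does not exist, and separating parallel edges into tracks within a cell can itself force additional bends whose number depends on the congestion of the cell, not on a universal constant. Since the exact statement you are proving (the specific $10n+6$ bound and the $O(n^2)$ running time) is precisely what these unverified steps would have to deliver, the proposal should either be completed with a genuine routing-and-accounting argument in the spirit of \cite{pach2001embedding,schaefer2021new}, or the lemma should simply be cited, as the paper does.
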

We adapt this lemma to our setting.
\begin{lemma}
\label{lem:grid-drawing}
    Let $n \in \NN$, and let $H$ be a sufficiently large $O(n^3)\times O(n^3)$ grid graph. There are $n$ vertices $X = \set{x_1, \ldots, x_n}$ on $H$ with the following property: for any $0/\infty$ metric $\dist$ on the vertices $X$,
    there is a way to assign edge weights to $H$ that induce a shortest-path metric $\dist_H$ such that, for any $x_i, x_j \in X$, we have $\dist(x_i, x_j) = \dist_H(x_i, x_j)$. Moreover, this assignment of weights can be found in polynomial time.
\end{lemma}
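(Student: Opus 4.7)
The plan is to encode the $0/\infty$ metric as a planar graph of maximum degree $2$ and realize it as a crossing-free orthogonal drawing on $H$, from which the grid edge weights can be read off. First note that since $\dist$ takes values in $\set{0,\infty}$ and satisfies the triangle inequality, the relation $\set{(x,y) : \dist(x,y)=0}$ is an equivalence relation on $X$; let $X_1,\ldots,X_\ell$ be its equivalence classes. Within each $X_j$, I would pick an arbitrary ordering of its points and link consecutive elements into a path. The resulting graph $G'$ on vertex set $X$ is a disjoint union of paths, hence planar with maximum degree at most $2 \leq 4$.

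I would then fix the grid positions of $x_1,\ldots,x_n$ once and for all—for instance, spaced along a single row of $H$ with wide separation—and, given a specific metric $\dist$, apply \Cref{lem:ortho-drawing} to $G'$ with these prescribed positions. The resulting drawing $D$ is orthogonal, with each of the $\leq n-1$ edges of $G'$ realized as a polyline of horizontal and vertical segments using at most $10n+6$ bends, so $D$ uses $O(n^2)$ segments in total. After scaling coordinates so that vertices and bends all land on integer grid points, $D$ fits inside an $O(n^3)\times O(n^3)$ region and can be realized directly as a subgraph of $H$ once $H$ is chosen sufficiently large. I would then assign weight $0$ to every grid edge used by $D$ and weight $\infty$ (or, if finite weights are preferred, an arbitrarily large number) to every other edge of $H$.

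For correctness: if $x_i,x_j$ lie in the same equivalence class, the corresponding path of $G'$ provides a $0$-weight route in $H$, so $\dist_H(x_i,x_j) = 0$. If they lie in different classes, the crucial point is that $D$ is a \emph{planar} orthogonal drawing of a graph whose components are pairwise vertex-disjoint, so the grid realizations of different components share no grid vertex or grid edge; hence the $0$-weight subgraph of $H$ has exactly $\ell$ connected components (one per class), and $\dist_H(x_i,x_j) = \infty$. Both values match $\dist(x_i,x_j)$, and since \Cref{lem:ortho-drawing} produces the drawing in polynomial time and the weighting step is trivial, the whole assignment is polynomial time.

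The main obstacle is essentially the content of \Cref{lem:ortho-drawing}: producing a crossing-free orthogonal drawing with prescribed vertex locations and only $O(n)$ bends per edge. Without planarity of $D$, the drawings of different equivalence classes could cross at shared grid vertices and create spurious $0$-weight connections, collapsing distinct classes together and breaking the construction. Everything else—scaling the drawing to the integer grid, sizing $H$, and computing the weight assignment—is routine once Schaefer's drawing is in hand.
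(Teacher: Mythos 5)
Your high-level approach is the same as the paper's: decompose the $0/\infty$ metric into equivalence classes, realize each class as a path to get a max-degree-2 planar graph, apply Schaefer's orthogonal-drawing lemma with prescribed vertex locations, and then read off a $0/\infty$ edge weighting of the grid. Both proofs then argue that the $0$-weight subgraph connects exactly the vertices within the same class, which is the correctness argument you give.

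The gap is in how you get the drawing to sit on integer grid points. You say ``after scaling coordinates so that vertices and bends all land on integer grid points, $D$ fits inside an $O(n^3)\times O(n^3)$ region,'' but scaling cannot in general accomplish this within a polynomial-size grid: Schaefer's theorem gives no bound on the denominators of the bend coordinates, so the scale factor needed to clear all denominators could be superpolynomial (or undefined, if coordinates are irrational). Worse, scaling would also move the prescribed vertex positions, and the lemma requires the designated vertices $X$ in $H$ to be \emph{fixed once and for all, before the metric $d$ is revealed}; a metric-dependent scale factor would send them to different grid points for different $d$'s. The paper sidesteps both issues with a perturbation (``snapping'') argument rather than scaling: it places the $n$ prescribed vertices on a \emph{diagonal} with spacing $\alpha = \Theta(n^2)$, so that between any two consecutive vertices there is $\Theta(n^2)$ slack in \emph{both} the $x$- and $y$-directions; since the total number of bends is only $O(n^2)$, the bends (and their incident segments) can be shifted to nearby integer coordinates without creating overlaps, keeping the vertices in place. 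Your single-row placement only separates the $x$-coordinates, so even if you replaced scaling by snapping, you would lack the slack in the $y$-direction needed to make that perturbation go through cleanly.
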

\begin{proof}
    Given an arbitrary $0/\infty$ metric $\dist$ on $X$, let $\set{X_1, \ldots X_k}$ denote a partition of $X$ such that any two vertices $x_i, x_j$ have $\dist(x_i, x_j) = 0$ iff they belong to the same part in the partition. Let $G$ be a graph that connects vertices within each $X_i$ by a path with 0-weight edges. Clearly $G$ is planar and has maximum degree at most 2, and $\dist_G(x_i, x_j) = \dist(x_i, x_j)$ for all $x_i, x_j \in X$.

    Let $\alpha = \Theta(n^2)$ with some sufficiently large hidden constant ($\alpha = 50 n^2$ would suffice). For every $i \in [n]$, define $p_i \in \RR^2$ by $p_i \coloneqq (i\alpha, i\alpha)$. By \Cref{lem:ortho-drawing}, there is an orthogonal drawing of $G$ such every vertex $x_i$ is drawn at location $p_i$. Now consider the set of ``bends'' in the drawing, ie the locations in $\RR^2$ at which the drawing of some edge switches between being horizontal and being vertical. Observe that there are at most $(10n+6)(3n-6) = O(n^2)$ bends in total, as there are at most $3n-6$ edges in a planar graph, and each edge has a drawing with at most $10n+6$ bends.  In particular, this means that for any two consecutive vertices $x_i$ and $x_{i+1}$, there are at most most $O(n^2)$ bends with $x$-coordinate (resp. $y$-coordinate) between the $x$-coordinates (resp. $y$-coordinates) of $x_i$ and $x_{i+1}$. As the $x$- (resp. $y$-) coordinates of $x_i$ and $x_{i+1}$ differ by $\alpha = \Theta(n^2)$, we may assume WLOG that the location of every bend has integer coordinates.
    
    In other words, we may assume WLOG the drawing of $G$ only uses edges of the $O(n\alpha) \times O(n \alpha)$ integer grid.

    This drawing tells us how to reweight the grid graph $H$. To be precise, we assume that $H$ is an $O(n\alpha) \times O(n \alpha)$ grid graph, and we let the vertex $(i \alpha, i \alpha) \in V(H)$ be the ``designated location'' for vertex $x_i \in X$.
    By our discussion above, any planar graph $G$ can be drawn using edges of $H$. We define a weight function on $E(H)$, where edge $e \in E(H)$ has weight 0 if it is used in the drawing, and weight $\infty$ otherwise.
    Clearly $\dist_G(x_i, x_j) = \dist_H(x_i, x_j)$ for all $x_i, x_j \in X$ as desired.
\end{proof}

\begin{figure}
    \centering
    \includegraphics[width=0.95\linewidth]{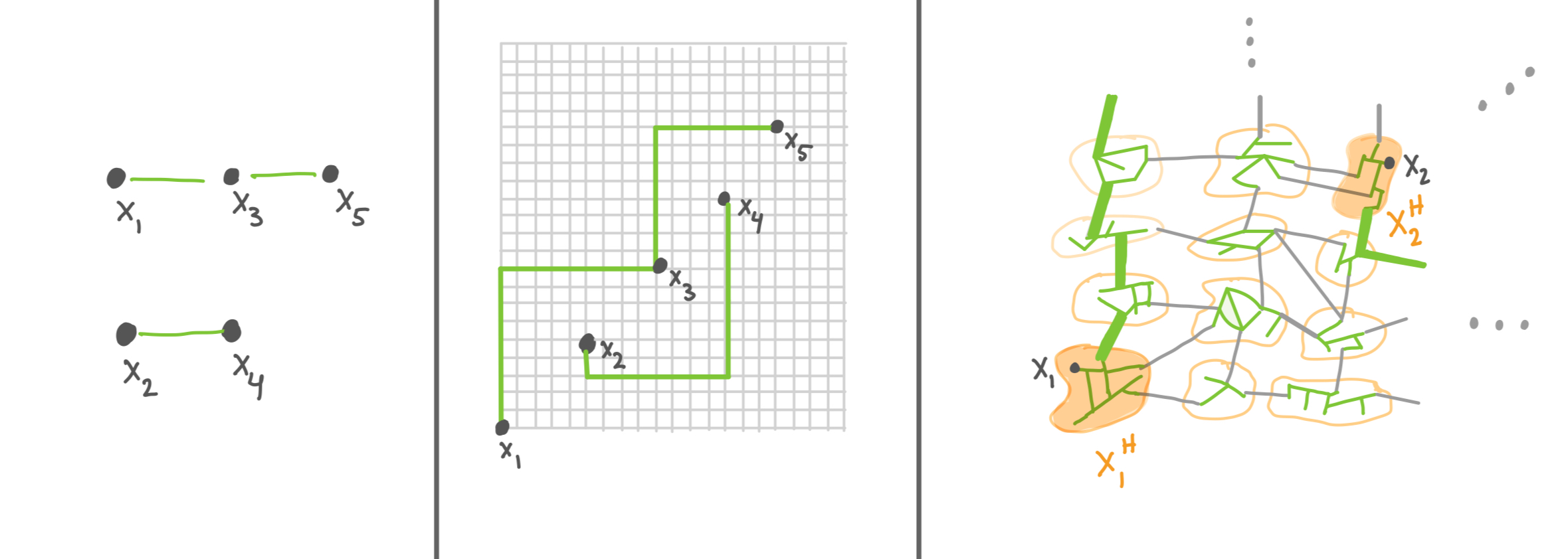}
    \caption{Left: A $0/\infty$-metric $d$ represented as a planar graph. Center: A reweighting of a grid graph $H$ that realizes $d$, as in \Cref{lem:grid-drawing}. Right: A reweighting of a large-treewidth graph $G$ that realizes $d$, as in \Cref{cor:drawing-on-treewidth}; the supernodes of a grid minor are drawn in orange. In all three images, green lines represent 0-weight edges, and gray lines represent $\infty$-weight edges}
    \label{fig:grid-reweighting}
\end{figure}

We observe that a similar reweighting can be carried out on any graph of high treewidth, because of the Excluded Grid theorem that connects treewidth to grid minors. The first version of the Excluded Grid theorem was proven by Robertson and Seymour \cite{robertson1986graph} but with a much worse relationship between the treewidth and grid size; after a series of improvements, polynomial bounds were finally given by Chekuri and Chuzhoy \cite{chekuri2016polynomial}. (Note that there has been subsequent work that improves the polynomial \cite{chuzhoy2015excluded,chuzhoy2016improved,chuzhoy2021towards}, but their results are non-constructive.)
\begin{lemma}[Excluded Grid theorem \cite{chekuri2016polynomial}]
    If $G$ is a graph with treewidth $\tw$, then $G$ contains a $(g\times g)$-grid as a minor for some $g = \Omega(\tw^{1/98}/\poly(\log \tw))$. Moreover, there is a randomized polynomial-time algorithm that finds the grid minor with high probability.
\end{lemma}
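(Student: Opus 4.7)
The plan is to follow the Chekuri--Chuzhoy blueprint, which proves the polynomial Excluded Grid theorem by passing through two intermediate structures --- a large \emph{well-linked vertex set}, and then a large \emph{path-of-sets system} --- before finally extracting a grid minor. Each of these reductions loses at most a polynomial factor (and the algorithmic versions lose at most a $\poly(\log n)$ factor), and the composition yields the claimed bound $g = \Omega(\tw^{1/98}/\poly(\log \tw))$.

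\emph{Step 1: treewidth to a well-linked set.} First convert the hypothesis $\tw(G)\ge \tw$ into a purely connectivity-theoretic object. Using the Seymour--Thomas duality between treewidth and brambles of the same order, one extracts a vertex set $T\subseteq V(G)$ of size $\Omega(\tw)$ that is \emph{node-well-linked}: for every balanced bipartition $(A,B)$ of $T$ there are $\Omega(|T|)$ internally vertex-disjoint $A$--$B$ paths in $G$. A constructive version of this step uses a polynomial-time $O(\sqrt{\log n})$-approximation for balanced vertex separators (Arora--Rao--Vazirani, adapted to the vertex version by Feige--Hajiaghayi--Lee), and loses only a $\poly(\log \tw)$ factor in $|T|$.

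\emph{Step 2: well-linked set to path-of-sets system.} A \emph{path-of-sets (PoS)} system of length $\ell$ and width $h$ is a sequence of vertex-disjoint connected subgraphs $C_1,\dots,C_\ell$ together with, for each consecutive pair $C_i,C_{i+1}$, a family of $h$ vertex-disjoint paths between them avoiding all other $C_j$'s, such that the endpoints of these paths are well-linked inside each $C_i$. Starting from the set $T$ of Step~1, I would build such a system iteratively: split the current well-linked set into two balanced halves, use a flow routing to realize the corresponding disjoint paths, ``carve out'' one cluster $C_i$ around one side, and recurse on a suitably defined residual graph while maintaining the well-linkedness of the remaining interface. A potential argument shows that one can obtain a PoS with $\ell \cdot h = \tw^{\Omega(1)}/\poly(\log \tw)$, each substep being algorithmic via max-flow and approximate balanced cut.

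\emph{Step 3: path-of-sets system to grid minor.} The technical heart of the argument is to convert a sufficiently large PoS into a $g\times g$-grid minor. The $C_i$'s naturally play the role of the ``columns'' of the grid and the interfacing paths give the ``horizontal'' edges; what remains is to route $g$ ``vertical'' paths through the clusters that traverse the column endpoints in a consistent left-to-right order. Naive routing produces congestion that cannot be rounded directly to vertex-disjoint paths; Chekuri--Chuzhoy handle this through (i) a boosting/self-improvement lemma that trades the dimensions of a PoS to produce a longer, narrower, and ``cleaner'' PoS suitable for grid extraction, and (ii) a constant-congestion vertex-routing lemma (Chekuri--Ene--Korula) that cleans up the interactions between the vertical and horizontal paths by rounding a fractional multicommodity flow. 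Accounting for the polynomial losses in these inner reductions yields the final exponent $1/98$.

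\emph{Main obstacle.} Steps~1 and~2 are by now fairly standard once approximate balanced-separator machinery is in hand, and both lose only polylogarithmic factors. The real difficulty --- and the source of the unpleasant constant $98$ --- is Step~3: extracting \emph{any} grid minor from a PoS is not hard, but amplifying it to one of polynomial size requires the intricate boosting argument together with constant-congestion vertex-disjoint routing, whose proofs themselves rely on multicommodity-flow rounding. Making the whole chain algorithmic and randomized polynomial-time is then straightforward because every ingredient (ARV/FHL, max-flow, flow-based balanced decompositions, and the constant-congestion routing subroutine) is already known to be algorithmic.
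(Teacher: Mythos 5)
This lemma is not proved in the paper at all: it is imported verbatim as a black box from Chekuri and Chuzhoy \cite{chekuri2016polynomial}, and the paper's ``proof'' is simply that citation. So there is no in-paper argument for your sketch to match; the relevant comparison is with the cited work itself.

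Measured against that, your proposal is a fair roadmap of the Chekuri--Chuzhoy argument (well-linked set via approximate balanced vertex separators, then a path-of-sets system, then grid extraction), but it is not a proof. The decisive content lives exactly where you wave at it: the construction of a path-of-sets system with polynomial parameters and, above all, the Step~3 boosting/self-improvement argument and the clean routing of vertical paths are the main theorems of the paper being cited --- invoking ``Chekuri--Chuzhoy handle this through a boosting lemma'' is citing the very result you set out to prove, so the argument is circular at its core step. Smaller points are also glossed: Step~1 as stated (bramble duality plus ARV/FHL-type separator approximation, losing only polylog factors) is essentially right, but the quantitative bookkeeping that produces the specific exponent $1/98$, and the verification that every intermediate object can be found by a randomized polynomial-time algorithm with high probability, are asserted rather than derived. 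For the purposes of this paper none of this is needed --- the correct move is exactly what the authors do, namely cite \cite{chekuri2016polynomial} (noting, as they do, that the later improvements to the exponent are non-constructive and hence unusable here); if you instead intend a self-contained proof, the gap is the entire technical body of that paper.
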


\begin{corollary}
\label{cor:drawing-on-treewidth}
    Let $n \in \NN$, and let $G$ be a graph with treewidth $\tw = \Omega(n^{294} \cdot \poly(\log n))$ for some sufficiently large $\poly(\log n)$. There are $n$ vertices $X = \set{x_1, \ldots, x_n}$ on $G$ with the following property: for any $0/\infty$ metric $\dist(\cdot, \cdot)$ on the vertices $X$,
    there is a way to assign edge weights to $G$ that induce a shortest-path metric $\dist_G$ such that, for any $x_i, x_j \in X$, we have $\dist(x_i, x_j) = \dist_H(x_i, x_j)$. Moreover, there is a randomized $\poly(|V(G)|)$-time algorithm that finds these edge weights with high probability.
\end{corollary}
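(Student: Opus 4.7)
The plan is to combine the Excluded Grid theorem with \Cref{lem:grid-drawing}, by using the grid minor inside $G$ as a ``skeleton'' on which to realize the $0/\infty$ metric, and then transferring the weighting from the minor back up to $G$ itself. Concretely, I would first apply the Excluded Grid theorem to $G$: its treewidth $\tw = \Omega(n^{294} \cdot \poly(\log n))$ yields (with high probability, in randomized polynomial time) a $g \times g$ grid minor $M$ with $g = \Omega(\tw^{1/98}/\poly(\log \tw))$. Choosing the hidden $\poly(\log n)$ factor in the treewidth assumption large enough, we can ensure $g \ge c n^3$ for the constant $c$ required by \Cref{lem:grid-drawing}.

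The grid minor comes with a collection of pairwise disjoint, vertex-connected subgraphs (supernodes) $\{S_{i,j}\}_{i,j \in [g]}$ of $G$, together with, for every pair of grid-adjacent supernodes, at least one $G$-edge between them (a ``linking edge''). Using the same constant $\alpha = \Theta(n^2)$ as in \Cref{lem:grid-drawing}, I would pick the designated vertices $X = \set{x_1, \dots, x_n}$ by letting $x_k$ be an arbitrary vertex of the supernode $S_{k\alpha, k\alpha}$; this is well-defined since $g \ge c n^3 > n \alpha$.

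Now, given an arbitrary $0/\infty$ metric $\dist$ on $X$, I first invoke \Cref{lem:grid-drawing} with this metric on the ``designated locations'' $(k\alpha, k\alpha)$ of a $g \times g$ grid graph to obtain a $0/\infty$ weighting of the grid that realizes $\dist$. This weighting of $M$ is then lifted to a weighting of $G$ as follows: (i) every edge of $G$ lying inside some supernode $S_{i,j}$ is set to weight $0$; (ii) for each grid edge of $M$ assigned weight $0$ by \Cref{lem:grid-drawing}, pick one linking $G$-edge between the corresponding supernodes and set its weight to $0$; (iii) every other edge of $G$ is set to weight $\infty$. All of this is clearly doable in randomized polynomial time once the minor has been found.

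The verification is where one has to be slightly careful, and I expect it to be the only nontrivial step: I need to check that the lift does not accidentally create $0$-length paths that are absent in the grid. Any $0$-length path in $G$ only uses edges inside supernodes and the chosen $0$-weight linking edges; contracting each supernode to a single vertex maps such a path to a walk in $M$ using only $0$-weight grid edges, so it projects to a $0$-length path between the appropriate designated grid locations. By the guarantee of \Cref{lem:grid-drawing}, this happens exactly when $\dist(x_i, x_j) = 0$. Conversely, any $0$-length path in $M$ between designated locations can be lifted back, using the connectivity of each traversed supernode and the chosen $0$-weight linking edges, to a $0$-length path in $G$ between the corresponding $x_i, x_j$. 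Thus $\dist_G(x_i, x_j) = \dist(x_i, x_j)$ for all $x_i, x_j \in X$, which is exactly what the corollary demands.
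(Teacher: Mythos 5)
Your proposal is correct and follows essentially the same route as the paper: apply the Excluded Grid theorem to extract a grid minor of size $\Omega(n^3) \times \Omega(n^3)$, pick designated vertices inside the diagonal supernodes, invoke \Cref{lem:grid-drawing} on the grid, and lift the resulting $0/\infty$ weighting to $G$ by zeroing out intra-supernode edges and assigning linking edges the weight of the corresponding grid edge (your variant of picking a single $0$-weight linking edge is equivalent for shortest-path purposes). The verification paragraph you add — that $0$-length walks project/lift faithfully between $G$ and the grid — is sound and is actually spelled out more carefully than in the paper's terse proof.
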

\begin{proof}
    By the Excluded Grid theorem, $G$ contains a $(g\times g)$-grid minor $H$ for $g = \Omega(n^3)$. By definition, there is a minor model of $H$ in $G$; that is, there is a set $\calS$ of disjoint \emph{supernodes} (where each supernode is a connected subset of $G$) and a bijection $\psi:V(G) \to \calS$ such that for any vertices in the grid $a,b \in V(H)$, the existence of an edge $(a,b) \in E(H)$ in the grid implies that there is some edge in $G$ between a vertex in $\psi(a)$ and a vertex in $\psi(b)$.  Moreover, there is a randomized algorithm that finds this minor model with high probability, in $\poly(|V(G)|)$ time.

    By \Cref{lem:grid-drawing}, we can select $n$ designated vertices $X^H \subseteq V(H)$ such that any $0/\infty$ metric on $X^H$ can be realized by reweighting $H$. Define $X \subseteq V(G)$ by selecting, for every $x_i^H \in X^H$, an arbitrary vertex $x_i$ from the supernode $\psi(x_i^H)$.
    
    Now suppose we are given some $0/\infty$ metric $\dist$ on $X$. Equivalently, $\dist$ could be viewed as a metric on $X^H$; under this view, let $w_H:E(H) \to \RR$ denote the reweighting of $H$ that realizes $\dist$, as guaranteed by \Cref{lem:grid-drawing}. We define a reweighting $w:E(G) \to \RR$ as follows. For any edge $e \in E(G)$, if $e$ has both endpoints in the same supernode then $w(e) \coloneqq 0$. If $e$ has one endpoint in a supernode $\psi(a)$ and another in a supernode $\psi(b)$ for some $a,b \in V(H)$, and furthermore $(a,b) \in E(H)$, then set $w(e) \coloneqq w_H((a,b))$. In all other cases, set $w(e) \coloneqq \infty$. This reweighting satisfies $\dist_G(x_i, x_j) = \dist_H(x_i^H, x_j^H) = \dist(x_i, x_j)$ for all $x_i, x_j \in X$, as desired.
\end{proof}

For the proof of \Cref{thm:treewidth-lb}, we need one final ingredient, which is a refined perspective on the hardness of 3D matching. Recall that the \emph{exponential time hypothesis} (ETH) is a standard assumption in fine-grained complexity that asserts there does not exist any algorithm that decides 3SAT instances with $n$ literals in time $2^{o(n)}$.  The \emph{randomized exponential hypothesis} (rETH) asserts that this hardness holds even for randomized algorithms that decide 3SAT with high probability.
\begin{lemma}[\cite{KBI19, Kus20}]
\label{lem:eth}
    If rETH holds, then there is no randomized algorithm that solves perfect 3D matching with high probability in time $2^{o(m)}$, where $m$ is the number of hyperedges in the 3D matching instance.
\end{lemma}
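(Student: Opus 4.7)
The plan is to prove this lemma by composing the subexponential-time hardness of 3SAT under rETH with a linear-blowup reduction from 3SAT to perfect 3D Matching, so that an instance of 3SAT on $n$ variables translates into a perfect 3DM instance with only $O(n)$ hyperedges. A hypothetical $2^{o(m)}$ algorithm for 3DM then pulls back to a $2^{o(n)}$ algorithm for 3SAT, contradicting rETH.

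First I would invoke the Sparsification Lemma of Impagliazzo, Paturi, and Zane: for every $\eps > 0$ and in time $2^{\eps n}$, any 3SAT formula on $n$ variables can be written as the disjunction of at most $2^{\eps n}$ 3SAT formulas on the same variables, each of which has only $O(n)$ clauses. Combined with rETH this yields the strengthened form of the hypothesis that I actually want to use: there is no randomized high-probability algorithm solving 3SAT in time $2^{o(n+m)}$. In particular, I may assume henceforth that the input 3SAT instance has $m = O(n)$ clauses.

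Next I would give (or appeal to) a polynomial-time reduction from sparsified 3SAT to perfect 3D Matching whose output has $O(n+m)$ hyperedges. The gadgets are the usual ones: for each variable $x_i$ occurring $n_i$ times, a cyclic \emph{consistency gadget} on $2n_i$ elements that admits exactly two perfect sub-matchings (encoding ``all true-literal occurrences chosen'' vs.\ ``all false-literal occurrences chosen''); for each clause, a constant-sized \emph{clause gadget} forcing at least one satisfied literal-occurrence element to be covered from the clause side; and \emph{garbage-collection} triples absorbing the leftover literal occurrences. The three sides are padded to equal cardinality so that the output is a genuine \emph{perfect} 3DM instance, and one verifies routinely that it is a YES-instance iff the 3SAT formula is satisfiable. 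The hyperedge count is $\sum_i O(n_i) + O(m) = O(n+m)$, which is $O(n)$ after sparsification.

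Combining the two steps: a hypothetical randomized $2^{o(m')}$-time high-probability algorithm for perfect 3DM, where $m'$ denotes the number of hyperedges, would compose with the linear-size reduction and enumeration over the $2^{\eps n}$ sparsified sub-instances to solve 3SAT in randomized time $2^{O(\eps n)} \cdot 2^{o(n)} = 2^{o(n)}$ with high probability, contradicting rETH (after choosing $\eps$ small enough and averaging over the sub-instances). The randomized aspect requires no new idea because both the sparsification step and the reduction to 3DM are deterministic polynomial-time procedures. The main obstacle is the linear-blowup requirement in the second step: a naive Karp-style reduction attaches a garbage-collection triple for every (variable, clause) pair and blows up to $\Theta(nm)$ hyperedges, which would only give a $2^{\Omega(\sqrt{m})}$ lower bound. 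The key bookkeeping is therefore to allocate one garbage collector per literal \emph{occurrence} rather than per (variable, clause) pair, so that the total hyperedge count remains $O(n+m)$ and the hardness comes out as $2^{\Omega(m)}$ in the number of hyperedges of the 3DM instance, as stated.
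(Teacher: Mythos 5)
First, note that the paper does not prove \Cref{lem:eth} at all: it is imported as a black box from \cite{KBI19, Kus20}. Your overall architecture --- rETH plus the Sparsification Lemma to get hardness of 3SAT in the form $2^{o(n+m)}$ (the randomized version is fine, since sparsification and the reduction are deterministic; one only needs amplification and a union bound over the $2^{\eps n}$ branches, not ``averaging''), followed by a reduction to perfect 3D matching with only $O(n+m)$ hyperedges --- is indeed the standard route, and is presumably what the cited works carry out. So the plan is sound; the issue is that the one step carrying all the technical content is asserted rather than proved.

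The gap is the claimed linear-size garbage collection. Because the target is \emph{perfect} 3DM, every element of every gadget must be matched, including the elements of garbage collectors that end up unused. With occurrence-sized variable gadgets, the tips left exposed by the variable gadgets number $\sum_i n_i = 3m$, the clause gadgets absorb exactly $m$ of them, so exactly $2m$ tips must be mopped up --- but \emph{which} $2m$ tips depends on the satisfying assignment. If you attach a dedicated collector pair to each literal occurrence, then whenever a tip is instead covered by its variable gadget or its clause gadget, that tip's collector pair is orphaned and must itself be covered; the natural fixes (shared dummy elements joined to all collector pairs, as in the classic Garey--Johnson construction) reintroduce $\Theta(m^2)$ triples, which only yields a $2^{o(\sqrt{m'})}$-type conclusion. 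Nor can you sidestep this with constant-size ``flexible'' gadgets that absorb a variable number of external tips: if a gadget with $g$ internal elements uses triples lying entirely inside the gadget-plus-tips, absorbing $e$ externals forces $g+e \equiv 0 \pmod 3$, so $e$ cannot range over $\{1,2,3\}$. Thus the bookkeeping you describe as ``the key'' does not, as stated, produce an instance with $O(n+m)$ hyperedges, and achieving that bound is precisely the nontrivial contribution of \cite{KBI19, Kus20}. To make your write-up a proof you would either need to exhibit a genuinely linear-size absorber for an assignment-dependent set of exactly $2m$ tips (or a reduction avoiding garbage collection altogether), or simply cite those works as the paper does.
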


We are finally ready to prove \Cref{thm:treewidth-lb}.
\begin{proof}[Proof of \Cref{thm:treewidth-lb}]
    Let $G$ be any graph with treewidth $\tw = \Omega(\log^{295} n)$.
    Let $(V_1 \uplus V_2 \uplus V_3, E)$ be an instance of perfect 3D matching with $m = \Theta(\log^{1.001} n)$ hyperedges. By \Cref{lem:eth}, it is rETH-hard to solve this instance in $\poly(n)$ time. As in \Cref{thm:T=3-hardness}, we define $T=3$ metrics (each a $0/\infty$ metric) such that the aggregate clustering problem encode the 3D matching instance. By \Cref{cor:drawing-on-treewidth}, each of these $T$ metrics is the shortest path metric of some reweighting of $G$, and these reweightings can be found (with high probability) in $\poly(n)$ time. We conclude that aggregate clustering is hard even when the underlying graph is $G$.
\end{proof}

\section{\texorpdfstring{Constant-factor approximations for $T=2$ scenarios}{Constant-factor approximations for T=2 scenarios}}
\label{sec:T=2}

In this section we extend the vanilla $O(1)$-approximations for $k$-supplier and $k$-median for the case of $T=2$ scenarios; this contrasts with the inapproximability for $T\geq 3$ (\Cref{thm:T=3-hardness}).
For simplicity, we restrict our exposition when the aggregator function $\Psi$ is just the sum, but we later point out in \Cref{rem:18} why it holds for any homogeneous aggregator such as a norm. 
In~\Cref{rem:19} we explain how our algorithms generalize for $(k,z)$-clustering.

At a high level, both the extensions from vanilla work because (a very simple) matroid {\em intersection} is polynomial time tractable, while even 3D-matching is NP-hard; the latter was the root of
the hardness for $T\geq 3$ aggregate clustering. We begin with the $k$-supplier problem which explains the previous line, and then show how ideas from the ``matroid-median'' problems solves the aggregate $k$-median problem for $T=2$.

\subsection{\texorpdfstring{$3$-approximation for Aggregate $k$-Supplier}{3-approximation for Aggregate k-Supplier}}\label{sec:T=2:ksupp}

Let $\OPT$ be the optimal set of $k$-suppliers which minimizes $\cost_\infty(d_1;\OPT) + \cost_\infty(d_2;\OPT)$, where recall $\cost_\infty(d_t;\OPT) = \max_{j\in C} d_t(j,\OPT)$.
Let us denote $\cost_\infty(d_t;\OPT)$ as $\opt_t$ for $t\in \{1,2\}$. Note that this can be guessed in $\poly(n)$-time (one could make this poly-logarithmic in $n$ using ``binary search'' techniques)
and so we assume we know $\opt_1$ and $\opt_2$. We now describe an algorithm, very similar to the one in~\cite{HochbS1986}, which returns a subset $\ALG$ of facilities
with $\cost_\infty(d_t;\ALG) \leq 3\opt_t$ for $t\in \{1,2\}$. This implies a $3$-approximation for the aggregate $k$-supplier problem for the $T=2$ case for any homogeneous aggregator $\Psi$. 

For $t\in \{1,2\}$, define $B_t(a,r) = \set{b \in F\cup C: d_t(a,b) \leq r}$. We run the Hochbaum-Shmoys~\cite{HochbS1986} filtering algorithm on each metric $d_t$ with parameter $\opt_t$.
To briefly describe this: (i) initially all clients are ``uncovered'', (ii) pick an arbitrary {\em uncovered client} $j\in C$ and add it to a set $R_t$ of representatives, define 
$B_t(j, 2\opt_t) \cap C$ to be $j$'s ``children'', and mark $j$ and all its children as ``covered'', and (iii) continue the above till all clients covered. The main observations, which are easy to check using triangle inequality and the fact that $\opt_t$'s were correct guesses, are the following for $t\in \{1,2\}$:
\begin{itemize}
	\item The collection of balls $\set{B_t(j,\opt_t) \cap F}_{j\in R_t}$ are pairwise disjoint.
	\item For any $j \in R_t$, $B_t(j, \opt_t) \cap \OPT \neq \emptyset$. The above two imply $|R_t| \leq k$.
	\item For every $v \in C$, $d_t(v,R_t) \leq 2\opt_t$.
\end{itemize}

\noindent
We now construct a solution $\ALG$ of size $\leq k$ such that for any $t\in \{1,2\}$ and any $j\in R_t$, we have $\ALG \cap B_t(j, \opt_t) \neq \emptyset$;
that is, $\ALG$ hits every ball in the above two collections. This can be solved using {\em matroid intersection} since there are only two collections of disjoint balls.
Once we have such an $\ALG$, it is easy to see using triangle inequality that, for any $t\in \{1,2\}$, $d_t(v, \ALG) \leq d_t(v, j) + d_t(j, \ALG) \leq 3\opt_t$ where $j$ was $v$'s representative in $R_t$.

To give more details on how to find $\ALG$, for $t\in \{1,2\}$ define the partition $\calP_t$ of $F$ formed by $\set{B_t(j,\opt_t) \cap F}_{j\in R_t} \cup Z_t$, where $Z_t$ are all the facilities of $F$ not in any ball.
We define the ``budget'' of each part to be $|\set{B_t(j,\opt_t) \cap F}| - 1$ --- this is the number of facilities $\ALG$ can ``leave out'' --- and set the budget of $Z_t$ to be $|Z_t|$.
We say $S\subseteq F$ is independent in partition matroid $t$ if it picks at most budget from each part; we seek the set $S\subseteq F$ of the largest cardinality which is independent in both partition matroids.
We return $\ALG := F\setminus S$. 
By design, $\ALG$ hits every $B_t(j,\opt_t)$ for $t\in \{1,2\}$ and $j\in R_t$. Since $F\setminus \OPT$ is a candidate $S$ which is in the intersection of the two partition matroids, 
we get $|\ALG| = |F| - |S| \leq |F| - |F\setminus \OPT| = |\OPT| \leq k$. 

In sum, we can find a solution $\ALG$ such that $\cost_\infty(d_t;\ALG) \leq 3\opt_t$ for $t\in \{1,2\}$, and thus $\Psi(\cost_\infty(d_1;\ALG), \cost_\infty(d_2;\ALG))\leq 3\Psi(\opt_1, \opt_2)$.
 This completes the proof of the following theorem.

\begin{theorem}\label{thm:k-supp:3appx}
    When $T=2$, there is a $\poly(n)$-time $3$-approximation algorithm for $\Psi$-aggregate $k$-supplier on $n$ vertices for any homogeneous aggregator $\Psi$ . 
\end{theorem}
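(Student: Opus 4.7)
The plan is to mimic the classical Hochbaum-Shmoys $3$-approximation for $k$-supplier, but to replace its final greedy step with a matroid intersection computation so that the ball-covering requirements of \emph{both} metrics can be satisfied simultaneously. First I would guess the two optimal radii $\opt_1,\opt_2$; since each is realized by some pair of points, there are only $O(n^4)$ candidate pairs to try, so assume they are known. Separately for each $t\in\{1,2\}$, I run the Hochbaum-Shmoys filtering procedure on $d_t$ with parameter $\opt_t$: repeatedly pick an uncovered client $j$ as a representative in $R_t$, and mark every client within $d_t$-distance $2\opt_t$ as covered. Standard arguments then give three facts for each $t$: the radius-$\opt_t$ balls around $R_t$ are pairwise disjoint inside $F$, each such ball intersects $\OPT$ (whence $|R_t|\le k$), and every client sits within $2\opt_t$ of $R_t$.

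The aim is now to produce a set $\ALG\subseteq F$ with $|\ALG|\le k$ that hits $B_t(j,\opt_t)\cap F$ for every $t\in\{1,2\}$ and every $j\in R_t$. Once such an $\ALG$ is in hand, the triangle inequality $d_t(v,\ALG)\le d_t(v,j)+d_t(j,\ALG)\le 2\opt_t+\opt_t$ yields $\cost_\infty(d_t;\ALG)\le 3\opt_t$ for both $t$, and then homogeneity of $\Psi$ upgrades this coordinate-wise bound to $\Psi(\cost_\infty(d_1;\ALG),\cost_\infty(d_2;\ALG))\le 3\Psi(\opt_1,\opt_2)$, i.e., a $3$-approximation on the aggregate cost.

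Constructing $\ALG$ is the main obstacle, and also the place where the contrast with \Cref{thm:T=3-hardness} shows up. I would pass to the complementary set $S=F\setminus\ALG$, of which we want the maximum possible size. The disjoint balls in metric $t$, together with the leftover facilities $Z_t$, define a partition $\calP_t$ of $F$; requiring $|S\cap P|\le |P|-1$ for every ball part $P\in\calP_t$ and $|S\cap Z_t|\le|Z_t|$ is exactly a partition matroid $\calM_t$. Then $S$ is independent in $\calM_1\cap\calM_2$ iff $\ALG=F\setminus S$ hits all the required balls in both metrics. Since $F\setminus\OPT$ is independent in both matroids, the maximum $S$ has size at least $|F|-k$, and I can extract it in polynomial time via two-matroid intersection. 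This completes the algorithm.

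The only conceptually delicate point is spotting this matroid intersection reformulation: for $T=2$ the hit-all-balls problem is $2$-matroid intersection and thus tractable, whereas for $T\ge 3$ it degenerates to $3$-dimensional matching, matching the hardness of \Cref{thm:T=3-hardness}. Extending to $(k,z)$-clustering or to other homogeneous aggregators changes nothing in the argument, since the $3$-approximation is coordinate-wise on the per-scenario $\ell_\infty$ costs before $\Psi$ is applied.
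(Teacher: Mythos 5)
Your proposal matches the paper's proof essentially line for line: guess the two optimal radii, run Hochbaum--Shmoys filtering in each metric to get disjoint balls around at most $k$ representatives, pass to the complement $S = F \setminus \ALG$, and realize the hit-all-balls requirement as a two-partition-matroid intersection with $F \setminus \OPT$ as the feasibility certificate, finishing with the triangle inequality and homogeneity of $\Psi$. This is exactly the argument in Section~\ref{sec:T=2:ksupp}.
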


\subsection{\texorpdfstring{$O(1)$-approximation for Aggregate $k$-Median}{O(1)-approximation for Aggregate k-Median}} \label{sec:k-med}

To obtain an constant approximation for the $k$-median problem one has to work a bit harder, but the underlying idea behind tractability is still 
the tractability of matroid intersection. More precisely, it is the integrality of a polytope defined by two laminar set systems. 
At a high level, such an issue arises when one studies the {\em matroid median} problem where there is only one metric but one has the extra constraint that 
the set of facilities opened must be an independent set of a certain matroid. One solves this problem by rounding a solution to an LP-relaxation where matroid intersection (or rather the integrality fact mentioned above) 
forms a core component with the given matroid being one matroid, and the other formed via ``filtering'' technique a la~\cite{ShmoysTA97, ChariGTS2002}. In our case the situation is similar at such a high-level -- the two ``matroids'' are formed by the filtering ideas for the $T=2$ scenarios -- but the details do need working out, and we show this below. 
We use the framework set by Swamy~\cite{Swamy2016} for the matroid-median problem, but other frameworks (such as the iterated rounding framework of~\cite{KrishLS2018}) could possibly lead to better approximation factors; in this work, we didn't optimize the latter. \smallskip

\paragraph*{Linear Programming Relaxation}

We begin by writing a linear programming relaxation for our problem. Recall that, in the standard linear programs for $k$-median, matroid median, etc.~\cite{ChariGTS2002,Li2013,Swamy2016,DengLR2022},
variables of the form $x(i,v)$ denote whether or not the client $v$ uses the facility $i$, so that $v$'s share of the cost is $\sum_{i \in F}d(v,i)x(i,v)$. In an integral solution $S \subseteq F$, each $v$ has a unique $i_v \in S$ such that $x(i_v,v) = 1$, and $d(i_v,v) = d(v,S)$.

In our problem, we need \emph{two} sets of such $x$-variables, because given an integral solution $S \subseteq F$, a client $v$ can use different facilities under different metrics; that is, there can be distinct $i_1,i_2 \in S$ s.t. $d_1(v,S) = d_1(v,i_1)$ and $d_2(v,S) = d_2(v,i_2)$. So for each $t \in \set{1,2}$, we define variables $\set{x_t(i,v)}_{v \in C, i \in F}$ denoting whether or not the client $v$ uses the facility $i$ under the metric $d_t$. We also have variables $\set{y(i)}_{i \in F}$ which, like in the standard $k$-median LP~\cite{ChariGTS2002,ChariL2012}, denote whether or not the facility $i$ is picked into our solution. This allows us to write linear constraints similar to the standard LP for $k$-median. We use $\lp$ to refer to the fractional optimum of the linear program relaxation.
\begin{alignat}{5}
	\text{minimize:} & \sum_{v \in C}\sum_{i \in F}d_1(i,v)x_1(i,v) + \sum_{v \in C}\sum_{i \in F}d_2(i,v)x_2(i,v)\tag{\LP}\label{kmed-cp:obj}\\
	&\sum_{i \in F}y(i) \leq k\tag{\LP1}\label{kmed-cp:1-k}\\
	&x_t(i,v) \leq y(i) &&\forall t \in \set{1,2}, v \in C, i \in F\tag{\LP2}\label{kmed-cp:2-san}\\
	&\sum_{i \in F}x_t(i,v) = 1 &&\forall t \in \set{1,2}, v \in C\tag{\LP3}\label{kmed-cp:3-cov}\\
	&0 \leq x_t(i,v), y(i) \leq 1 &&\forall t \in \set{1,2}, v \in C, i \in F\notag
\end{alignat}
\noindent

Given a solution $(x_1, x_2, y)$ of \ref{kmed-cp:obj}, we first make the following assumption which follows from {\em facility splitting} arguments due to Chudak and Shmoys~\cite{ChudaS2003} (see also~\cite{YanC2015})
\begin{assumption}\label{assn:completeness}
	$\forall t \in \set{1,2}, v \in C, i \in F$, $x_t(i,v) \in \set{0,y(i)}$.    
\end{assumption}

We now perform a {\em filtering step} inherent to almost all LP-rounding algorithms for $k$-median. For our problem, as in the $k$-supplier problem from previous section, we get {\em two} sets of ``representatives'' instead of one that $k$-median rounding algorithms get. \smallskip

\noindent 
\textit{\bf Filtering.} For $t\in \set{1,2}$ and any client $v\in C$, define $C_t(v) := \sum_{i \in F}d_t(i,v)x_t(i,v)$, i.e. the cost paid by $v$ under metric $d_t$. Our filtering algorithm begins with all clients in an \emph{uncovered} set $U$. We pick $j \in U$ with the smallest $C_t(j)$, and call it a \emph{representative under $d_t$}, adding it to the set $\Reps_t$. Every \emph{uncovered} client $v$ that is within distance $4C_t(v)$ of $j$ becomes a \emph{child of $j$ under $d_t$}, forming the set $\child_t(j)$. All of $\child_t(j)$ is then considered covered. We repeat this until all clients are covered. So we get $\Reps_t \subseteq C$ that is \emph{well-separated}, i.e. any two distinct $j,j' \in \Reps_t$ have $d_t(j,j') > 4\max\set{C_t(j),C_t(j')}$. $\Reps_t$ also induces a partition $\set{\child_t(j)}_{j \in \Reps_t}$ of $C$. The following lemma is standard (see \Cref{sec:half-integ-details} for a proof), and thereafter, we focus on $\Reps_1$ and $\Reps_2$, and seek a constant-factor approximation on those clients only.

\begin{restatable}{lemma}{lemfiltering}
	\label{lem:filtering-4}
	Consider $S \subseteq F$ and $\alpha \geq 1$, such that $\forall t \in \set{1,2}$,\\ $\sum_{j \in \Reps_t}\abs{\child_t(j)}d_t(j,S) \leq \alpha \cdot \sum_{v \in C}C_t(v)$.
	Then $\aggcost(S) \leq (4+\alpha) \cdot \opt$.
\end{restatable}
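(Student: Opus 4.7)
The plan is to bound $\aggcost(S) = \sum_{t \in \{1,2\}} \sum_{v \in C} d_t(v, S)$ by routing each client through its representative in $\Reps_t$ and applying the triangle inequality, together with the filtering guarantee that every client is within distance $4 C_t(v)$ of its representative. The hypothesis then converts the ``representative-to-$S$'' distances back into LP cost, giving the claimed $(4+\alpha)$ bound.

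Concretely, fix $t \in \set{1,2}$ and, for each $v \in C$, let $r_t(v) \in \Reps_t$ denote the (unique) representative with $v \in \child_t(r_t(v))$. By triangle inequality,
\[
d_t(v, S) \;\leq\; d_t(v, r_t(v)) + d_t(r_t(v), S).
\]
By construction of the filtering, $d_t(v, r_t(v)) \leq 4 C_t(v)$: indeed, when $v$ was added as a child of $r_t(v)$, the filtering picked $r_t(v)$ as the uncovered client of smallest $C_t$-value, so $C_t(r_t(v)) \leq C_t(v)$, and the radius used to gather children is $4 C_t(v)$ (note that $r_t(v) = v$ gives $d_t(v, r_t(v)) = 0$ and the inequality is trivial). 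Summing over $v$ and regrouping the second term by representative,
\[
\sum_{v \in C} d_t(v, S) \;\leq\; 4 \sum_{v \in C} C_t(v) \;+\; \sum_{j \in \Reps_t} |\child_t(j)| \cdot d_t(j, S).
\]

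By the assumption of the lemma, the second sum is at most $\alpha \sum_{v \in C} C_t(v)$, so $\sum_{v \in C} d_t(v, S) \leq (4 + \alpha) \sum_{v \in C} C_t(v)$. Summing this over $t \in \set{1,2}$,
\[
\aggcost(S) \;\leq\; (4+\alpha) \left( \sum_{v \in C} C_1(v) + \sum_{v \in C} C_2(v) \right) \;=\; (4+\alpha) \cdot \lp \;\leq\; (4+\alpha) \cdot \opt,
\]
where the last inequality uses that \ref{kmed-cp:obj} is a valid relaxation of the $\Psi = $ sum aggregate $k$-median problem.

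There is no real obstacle here; the lemma is essentially a bookkeeping computation packaging the filtering radius and the hypothesis. The only minor care needed is in pinning down the filtering guarantee $d_t(v, r_t(v)) \leq 4 C_t(v)$, which follows directly from the rule that $r_t(v)$ minimizes $C_t$ among uncovered clients when it is picked, and that children are defined to lie within distance $4 C_t(v)$; this is exactly the standard filtering inequality, now applied separately per metric $t \in \set{1,2}$.
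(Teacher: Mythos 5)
Your proof is correct and takes essentially the same approach as the paper: group clients by their representative, apply the triangle inequality through the representative, use the filtering bound $d_t(v, r_t(v)) \le 4C_t(v)$ together with the hypothesis, and conclude via $\lp \le \opt$. The only cosmetic difference is that you invoke the $4C_t(v)$ filtering radius directly (which matches the algorithm as stated), whereas the paper's write-up detours through the intermediate bound $d_t(v,j) \le 4C_t(j) \le 4C_t(v)$; the argument is otherwise identical.
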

\noindent
The rounding algorithms for many $k$-median algorithms first round to a half-integral solution and then to integral. We follow the same route, and in particular, follow Swamy's framework~\cite{Swamy2016}.
We begin with the following definitions for each $t \in \set{1,2}, j \in \Reps_t$; as in the previous section, we use $B_t(a,r) = \set{b\in F\cup C : d_t(a,b) \leq r}$. An illustrative figure for these appears in Swamy's work~\cite[Figure 1]{Swamy2016}.
\begin{itemize}
	\item $F_j^t := \set{i \in F : d_t(i,j) = \min_{j' \in \Reps_t}d_t(i,j')}$ (breaking ties arbitrarily)
	\item $B_j^t := B_t(j,2C_t(j)) \cap F$. Notice that, by construction of $\Reps_t$, $B_j^t \subseteq F_j^t$ (also follows from \Cref{fact:swamy:laminar}).
	\item $\gamma_j^{\paren t} := \min_{i \notin F_j^t}d_t(i,j)$, and $G_j^t := \set{i \in F_j^t : d_t(i,j) \leq \gamma_j^{\paren t}}$.
\end{itemize}
We obtain the following a la~\cite{Swamy2016} (see \Cref{sec:half-integ-details} for a proof).
\begin{restatable}{fact}{factswamy}
	\label{fact:swamy}
	For every $t \in \set{1,2}, j \in \Reps_t$, we have
	\begin{itemize}
		\item $y(B_j^t) := \sum_{i \in B_j^t}y(i) \geq \half$ \label{fact:swamy:half}
		\item $B_j^t \subseteq G_j^t$\label{fact:swamy:laminar}
		\item Suppose $\gamma_j^{\paren t} = d_t(i_\gamma,j)$, s.t. $i_\gamma \in F^t_\ell$. Then $\forall i \in B_\ell^t, d_t(i,j) \leq 3\gamma_j^{\paren t}$.\label{fact:swamy:gamma}
	\end{itemize}
\end{restatable}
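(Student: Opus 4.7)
The plan is to prove the three items of the Fact in order; they are all essentially triangle-inequality manipulations combined with one Markov-style argument, and I do not anticipate any step being a real obstacle since these are adaptations of the standard Swamy framework.

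For item (i), my plan is to use Markov's inequality on the LP cost of the representative $j$. Every facility $i \notin B_j^t$ satisfies $d_t(i,j) > 2C_t(j)$ by definition of $B_j^t$, so the contribution of such $i$ to $C_t(j) = \sum_{i \in F} d_t(i,j)\, x_t(i,j)$ forces $\sum_{i \notin B_j^t} x_t(i,j) < \tfrac12$. Combining with the covering constraint \eqref{kmed-cp:3-cov} and Assumption~\ref{assn:completeness} (which yields $x_t(i,j) \leq y(i)$) then gives $y(B_j^t) \geq \tfrac12$.

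For item (ii), I need to show both $B_j^t \subseteq F_j^t$ and, for $i \in B_j^t$, that $d_t(i,j) \leq \gamma_j^{(t)}$. The first inclusion is a direct triangle-inequality calculation: for any other representative $j' \in \Reps_t$, well-separatedness gives $d_t(j,j') > 4C_t(j)$, and then for $i$ with $d_t(i,j) \leq 2C_t(j)$ we get $d_t(i,j') > 2C_t(j) \geq d_t(i,j)$, so $j$ is the closest representative to $i$. For the bound $d_t(i,j) \leq \gamma_j^{(t)}$, I will argue that \emph{every} facility $i' \notin F_j^t$ must itself be far from $j$: if some $j' \neq j$ is closer to $i'$ than $j$ is, then triangle inequality gives $d_t(j,j') \leq 2 d_t(i',j)$, and well-separatedness forces $d_t(i',j) > 2C_t(j)$, hence $\gamma_j^{(t)} > 2C_t(j) \geq d_t(i,j)$.

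For item (iii), the strategy is to bound $d_t(\ell, j)$ and then use the triangle inequality to bound $d_t(i,j)$ through $\ell$. Since $i_\gamma \in F_\ell^t$ means $d_t(i_\gamma, \ell) \leq d_t(i_\gamma, j) = \gamma_j^{(t)}$, I immediately get $d_t(\ell, j) \leq 2\gamma_j^{(t)}$; well-separatedness of $\Reps_t$ then gives $4 C_t(\ell) < d_t(\ell, j) \leq 2\gamma_j^{(t)}$, so $2C_t(\ell) < \gamma_j^{(t)}$. For $i \in B_\ell^t$ I have $d_t(i, \ell) \leq 2C_t(\ell)$, and chaining through $\ell$ yields $d_t(i,j) \leq 2C_t(\ell) + 2\gamma_j^{(t)} \leq 3\gamma_j^{(t)}$, as desired. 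The only subtlety here is making sure $\ell \neq j$ so that well-separatedness applies, and this follows from $i_\gamma \in F_\ell^t \setminus F_j^t$ by item (ii).
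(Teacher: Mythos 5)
Your proof is correct and takes essentially the same triangle-inequality-plus-Markov route as the paper's; the only minor omission is the degenerate case $C_t(j)=0$ in item (i), where the strict inequality you invoke would compare $0>0$, but there $x_t(i,j)$ must vanish on $F\setminus B_j^t$ anyway so $y(B_j^t)=1$ directly (the paper flags this case explicitly). Items (ii) and (iii) chain through $\ell$ exactly as the paper does, with the intermediate bounds $d_t(\ell,j)\le 2\gamma_j^{(t)}$ and $2C_t(\ell)<\gamma_j^{(t)}$ organized slightly differently but arriving at the same conclusions.
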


\noindent
Also note that by design the $F_j^t$'s (and therefore the $G_j^t$'s) are pairwise disjoint for a fixed $t$ when we consider $j\in \Reps_t$.
These play the role of the ``partitions'' as in the $k$-supplier problem, and our tractability follows because we have only two $t$'s. 
Instead of finding the ``largest cardinality independent set'', as we did for the $k$-supplier problem, we instead find a point maximizing a suitable linear function.
Towards this, we describe a ``linear function'' that acts as a proxy for $C_t(j)$'s.

For fractional facility masses $z \in [0,1]^F$, $\forall t \in \set{1,2}$, and $\forall j \in \Reps_t$, let $z(G_j^t) := \sum_{i \in G_j^t}z(i)$. Then we have
\begin{align*}
	&T_t(z,j) := \sum_{i \in G_j^t}d_t(i,j)z(i) + 3\gamma_j^{\paren t}\max\set{0,1-z(G_j^t)},\, \text{and}\\
	&T_t(z) := \sum_{j \in \Reps_t}\abs{\child_t(j)}T_t(z,j)\,.
\end{align*}
The above definition is quite similar to Swamy's definition which may be taken as the $T=1$ case, with one difference that we have the ``$\max$ with 0''. When $T=1$, one can assert $z(G_j) \leq 1$
but with different $t$'s, it may be that $z(G_j^2) > 1$ because in metric $1$ we need to open (fractionally) a lot of facilities. Nevertheless, even with the ``max'' function the following
two lemmas similar to those in~\cite{Swamy2016} hold. The full proofs are in \Cref{sec:half-integ-details}.

\begin{restatable}{lemma}{lemTproxy}
	\label{lem:T-proxy}
	Consider $z \in [0,1]^F$ s.t. $\forall t \in \set{1,2}, j \in \Reps_t$, $z(B_j^t) \geq \half$. Then $j$'s assignment cost under $d_t$ is at most $T_t(z,j)$. That is, for every $t \in \set{1,2}, j \in \Reps_t$, we can assign variables $\set{0 \leq x'_t(i,j) \leq z(i)}_{i \in F}$ so that $\sum_{i \in F}d_t(i,j)x'_t(i,j) \leq T_t(z,j)$.
\end{restatable}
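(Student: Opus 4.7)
The plan is to fix $t \in \set{1,2}$ and $j \in \Reps_t$, and construct the fractional assignment $\set{x'_t(i,j)}_{i \in F}$ by a two-stage scheme that first saturates the ``cheap'' facilities in $G_j^t$ and only then reaches outside. This exactly mirrors the structure of $T_t(z,j)$: its first summand pays for assignment inside $G_j^t$, while its second summand, with coefficient $3\gamma_j^{\paren t}$, budgets for any residual mass that has to be shipped out of $F_j^t$.

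The first observation is that, by \Cref{fact:swamy}, $B_j^t \subseteq G_j^t$ and therefore $z(G_j^t) \geq z(B_j^t) \geq \half$. This gives a convenient dichotomy. If $z(G_j^t) \geq 1$, I set $x'_t(i,j) := z(i)/z(G_j^t)$ for $i \in G_j^t$ and $x'_t(i,j) := 0$ otherwise; this assignment is feasible (since $z(G_j^t) \geq 1$), totals $1$, and has cost $\sum_{i \in G_j^t} d_t(i,j)z(i)/z(G_j^t) \leq \sum_{i \in G_j^t} d_t(i,j)z(i) = T_t(z,j)$, since the $\max$ term vanishes in this case.

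If instead $z(G_j^t) < 1$, I first set $x'_t(i,j) := z(i)$ for all $i \in G_j^t$, accounting for mass $z(G_j^t)$ at cost $\sum_{i \in G_j^t} d_t(i,j) z(i)$. For the remaining $1 - z(G_j^t) \leq \half$ units of mass, I invoke the ``escape ball'' $B_\ell^t$ guaranteed by the $\gamma_j^{\paren t}$-witness $i_\gamma \in F_\ell^t$ (note $\ell \neq j$, since $i_\gamma \notin F_j^t$): \Cref{fact:swamy} supplies both $d_t(i,j) \leq 3\gamma_j^{\paren t}$ for every $i \in B_\ell^t$, and the hypothesis of the lemma gives $z(B_\ell^t) \geq \half \geq 1 - z(G_j^t)$. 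Crucially, because the $F_\cdot^t$'s are pairwise disjoint and $B_\ell^t \subseteq F_\ell^t$ while $G_j^t \subseteq F_j^t$, the ball $B_\ell^t$ is disjoint from $G_j^t$, so setting $x'_t(i,j) := z(i)(1 - z(G_j^t))/z(B_\ell^t)$ for $i \in B_\ell^t$ does not clash with the mass already assigned. The incremental cost is at most $3\gamma_j^{\paren t}(1 - z(G_j^t))$, yielding the full $T_t(z,j)$ bound.

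The only mildly delicate point is the compatibility between the two stages---namely, that the escape ball $B_\ell^t$ lies outside $G_j^t$ and contains enough $z$-mass to cover the shortfall $1 - z(G_j^t)$. Both follow immediately from \Cref{fact:swamy} together with the half-mass hypothesis, so once those items are cited the construction plugs in mechanically; I do not anticipate any deeper obstacle.
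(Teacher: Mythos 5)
Your proof is correct and follows essentially the same two-case structure (split on $z(G_j^t)\ge 1$ versus $z(G_j^t)<1$, use $G_j^t$ first, then route the residual mass to the escape ball $B_\ell^t$) as the paper's proof of \Cref{lem:T-proxy}. The only difference is cosmetic: you write out explicit normalization coefficients $z(i)/z(G_j^t)$ and $z(i)(1-z(G_j^t))/z(B_\ell^t)$, and you spell out the disjointness of $G_j^t$ and $B_\ell^t$, where the paper simply asserts that a feasible assignment with the required marginals can be chosen.
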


\begin{restatable}{lemma}{lemTupperbd}\label{lem:T-upper-bd}
	For each $t \in \set{1,2}$, and each $j \in \Reps_t$, $T_t(y, j) \leq 3\cdot C_t(j)$.
\end{restatable}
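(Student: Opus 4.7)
I would proceed by case analysis on whether $y(G_j^t) \leq 1$ or $y(G_j^t) > 1$, leveraging the LP-optimality of $x_t(\cdot, j)$ to lower-bound $C_t(j)$ by the cost of a greedy assignment of $j$ to its closest facilities (weighted by $y$).

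First, by LP optimality of $x_t$ against the fixed $y$, I may assume WLOG that $x_t(\cdot, j)$ is the \emph{greedy} fractional assignment given $y$: sort the facilities in increasing order of $d_t(i, j)$ and set $x_t(i, j) = y(i)$ for the closest facilities until the constraint $\sum_i x_t(i, j) = 1$ is saturated (possibly with a fractional cutoff at the boundary). Under this substitution, any lower bound on the greedy cost is a lower bound on $C_t(j)$; note this is compatible with \Cref{assn:completeness} up to facility splitting.

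In the case $y(G_j^t) \leq 1$, the definition of $\gamma_j^{\paren t}$ guarantees every facility outside $G_j^t$ is at distance at least $\gamma_j^{\paren t}$ from $j$. So the greedy strategy must first exhaust all $y$-mass sitting in $G_j^t$ (contributing $\sum_{i \in G_j^t} d_t(i,j) y(i)$) and then send the residual $1 - y(G_j^t)$ mass to facilities at distance at least $\gamma_j^{\paren t}$. This yields
\[
	C_t(j) \;\geq\; \sum_{i \in G_j^t} d_t(i,j)\, y(i) \;+\; \gamma_j^{\paren t}\bigl(1 - y(G_j^t)\bigr),
\]
and multiplying by $3$ directly majorises $T_t(y, j) = \sum_{i \in G_j^t} d_t(i,j) y(i) + 3\gamma_j^{\paren t}(1 - y(G_j^t))$, since the $\sum d_t y$ term is dominated once on the right.

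The case $y(G_j^t) > 1$ is where the main obstacle lies: the $\max$-clause vanishes, so I must show $\sum_{i \in G_j^t} d_t(i,j) y(i) \leq 3 C_t(j)$ directly, which is not immediate because $y(G_j^t)$ can in principle be much larger than $1$ (recall the discussion preceding the lemma). The plan is to exploit \Cref{fact:swamy} --- in particular $y(B_j^t) \geq 1/2$, which localises a constant fraction of $y(G_j^t)$ within distance $2C_t(j)$ of $j$ --- together with the fact that, in this regime, the greedy $x_t$ is supported entirely inside $G_j^t$. I would try to close the gap by arguing that any excess $y$-mass on $G_j^t$ beyond what the greedy $x_t$ actually consumes may be truncated without affecting the downstream rounding, effectively reducing to the $y(G_j^t) = 1$ case; this truncation is the step I expect to require the most care.
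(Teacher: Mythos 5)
Your first case ($y(G_j^t) \le 1$) is correct and matches the paper's argument: you lower bound $C_t(j) \ge \sum_{i \in G_j^t} d_t(i,j)\,y(i) + \gamma_j^{\paren t}\bigl(1-y(G_j^t)\bigr)$ and note that multiplying by $3$ majorizes $T_t(y,j)$ because only the $\gamma$-term needs the factor of $3$. The paper reaches the same inequality using \eqref{kmed-cp:2-san} together with $d_t(i,j)\le\gamma_j^{\paren t}$ on $G_j^t$ and $d_t(i,j)\ge\gamma_j^{\paren t}$ off $G_j^t$, which is the same content as your ``greedy fills $G_j^t$ first'' argument; and your aside about a fractional cutoff at the boundary is the right way to treat ties at distance exactly $\gamma_j^{\paren t}$.

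The second case ($y(G_j^t) > 1$) is a genuine gap and you acknowledge it. You correctly identify that the $\max$-clause vanishes, so one must bound $\sum_{i \in G_j^t} d_t(i,j)\,y(i) \le 3\,C_t(j)$ directly, but the truncation idea you sketch is not justified: nothing in your argument (nor in the LP constraints, nor in $y(B_j^t)\ge\frac12$) bounds $y(G_j^t)$ from above, and, as the paper itself notes just before the lemma, one cannot simultaneously prune $y$ so that $y(G_j^t)\le 1$ holds for both metrics — this is exactly why the $\max$-clause is there. For what it is worth, the paper's own treatment of this case is a one-line assertion that $C_t(j) = T_t(y,j)$ when $y(G_j^t)\ge 1$, which, after fixing the index typos, reduces to $\sum_i x_t(i,j)\,d_t(i,j) = \sum_{i\in G_j^t} y(i)\,d_t(i,j)$; this holds verbatim only when $x_t(\cdot,j)$ is supported inside $G_j^t$ and equals $y$ there, which forces $y(G_j^t)=\sum_i x_t(i,j) = 1$ exactly, not merely $y(G_j^t)\ge 1$. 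So your discomfort with this case is well placed — it is precisely where the paper's proof is least transparent — but flagging difficulty is not the same as resolving it, and your proposal does not establish the lemma for $y(G_j^t)>1$.
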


\noindent
{\bf Rounding to Half-integral Solution.} We are now ready to describe the algorithm which is encapsulated in the following lemma. 

\begin{lemma}\label{lem:half-integral}
	There is a polynomial time algorithm that yields a half-integral solution $(\hat x_1, \hat x_2, \hat y)$ of \ref{kmed-cp:obj} s.t. $\forall t \in \set{1,2}$, $\sum_{j \in \Reps_t}\abs{\child_t(j)}\sum_{i \in F}d_t(i,j)\hat x_t(i,j) \leq 6 \cdot \sum_{v \in C}C_t(v)$.
\end{lemma}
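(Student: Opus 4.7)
The plan is to follow Swamy's half-integral rounding framework for matroid median~\cite{Swamy2016}, generalized here to accommodate two metrics. First I set up an auxiliary LP over a new facility-mass variable $z \in [0,1]^F$ together with non-negative slacks $\xi_j^{\paren t}$ that linearize the max-term in $T_t(z,j)$ as $\sum_{i \in G_j^t} d_t(i,j) z(i) + 3 \gamma_j^{\paren t} \xi_j^{\paren t}$ subject to $\xi_j^{\paren t} \geq 1 - z(G_j^t)$. The constraints are: $z(B_j^t) \geq \half$ for every $t \in \set{1,2}$ and $j \in \Reps_t$ (so that the hypothesis of \Cref{lem:T-proxy} holds on the rounded solution), the cardinality bound $\sum_i z(i) \leq k$, and for each $t \in \set{1,2}$ the per-scenario budget $\sum_{j \in \Reps_t} \abs{\child_t(j)} \bigl(\sum_{i \in G_j^t} d_t(i,j) z(i) + 3\gamma_j^{\paren t} \xi_j^{\paren t}\bigr) \leq 6 \sum_{v \in C} C_t(v)$. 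Feasibility of $z = y$ (with $\xi_j^{\paren t} := \max\set{0, 1 - y(G_j^t)}$) follows from the first item of \Cref{fact:swamy}, constraint \ref{kmed-cp:1-k}, and \Cref{lem:T-upper-bd}: combined with the filtering inequality $C_t(j) \leq C_t(v)$ for every $v \in \child_t(j)$ (since the procedure selects the smallest-$C_t$ uncovered client as the representative of its cluster), the per-$t$ budget evaluates to $\sum_j \abs{\child_t(j)} T_t(y,j) \leq 3 \sum_j \abs{\child_t(j)} C_t(j) \leq 3 \sum_v C_t(v)$, comfortably inside $6 \sum_v C_t(v)$.

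The main obstacle is extracting a half-integral vertex of this polytope. The key structural property is that for each fixed $t$, the balls $\set{B_j^t}_{j \in \Reps_t}$ are pairwise disjoint by the well-separation of $\Reps_t$ under $d_t$ guaranteed by filtering; consequently, the coverage rows for a fixed $t$ behave like a partition-matroid constraint, and every column of the coverage-row submatrix has at most one nonzero entry per metric $t$. In the one-metric setting of~\cite{Swamy2016}, the analogous polytope is integral by partition-matroid intersection with the cardinality constraint; in our two-metric setting the presence of two such row-blocks drops integrality to half-integrality, which I plan to establish by an iterated rounding argument in the spirit of~\cite{Swamy2016}: repeatedly pick a fractional $z_i$ that participates in a small number of active non-box constraints, round it to the nearest half, and update the slacks of the per-$t$ budgets. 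The factor-$2$ gap between the actual LP value $3 \sum_v C_t(v)$ attained by $y$ and the stated budget $6 \sum_v C_t(v)$ is precisely what absorbs the rounding loss. Equivalently, one may rescale $w := 2z$ and argue that extreme points of the resulting feasibility polytope in $w$ are integer-valued, i.e.\ half-integral in $z$.

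Finally, let $\hat z$ denote the half-integral vertex so obtained and set $\hat y := \hat z$. For every $t \in \set{1,2}$ and $j \in \Reps_t$, the hypothesis $\hat z(B_j^t) \geq \half$ of \Cref{lem:T-proxy} is in force, so the lemma yields fractional assignments $\set{\hat x_t(i,j)}_{i \in F}$ with $0 \leq \hat x_t(i,j) \leq \hat y(i)$, $\sum_i \hat x_t(i,j) = 1$, and $\sum_i d_t(i,j) \hat x_t(i,j) \leq T_t(\hat z, j)$. For clients $v \in C \setminus \Reps_t$, set $\hat x_t(i,v) \coloneqq \hat x_t(i, j_v)$, where $j_v$ is the representative of $v$ under $d_t$; this preserves LP feasibility (\ref{kmed-cp:2-san}, \ref{kmed-cp:3-cov}) and does not enter the claimed bound. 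Summing $\abs{\child_t(j)}$ times the inequality above over $j \in \Reps_t$ and invoking the per-$t$ budget constraint yields $\sum_{j \in \Reps_t} \abs{\child_t(j)} \sum_{i \in F} d_t(i,j) \hat x_t(i,j) \leq 6 \sum_{v \in C} C_t(v)$, as required.
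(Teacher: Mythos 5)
Your proposal diverges from the paper in a way that introduces a genuine gap. You fold the per-scenario cost bounds into the polytope as explicit budget constraints $\sum_{j\in\Reps_t}\abs{\child_t(j)}\bigl(\sum_{i\in G_j^t}d_t(i,j)z(i)+3\gamma_j^{\paren t}\xi_j^{\paren t}\bigr)\le 6\sum_v C_t(v)$, and then assert that extreme points of the resulting polytope are half-integral. This is where the argument fails. The coverage/cardinality constraints do form a TU system (two laminar families), as the paper notes in \Cref{clm:P-half-integral}, and that sub-polytope does have half-integral extreme points. But adding two budget rows with coefficients $\abs{\child_t(j)}d_t(i,j)$ and $3\abs{\child_t(j)}\gamma_j^{\paren t}$ destroys total unimodularity, and the polytope you define will in general have extreme points that are not half-integral: any point where a budget hyperplane intersects the relative interior of an edge of the TU sub-polytope is a new, typically non-half-integral vertex. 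The iterated-rounding sketch you offer as a fix (pick a fractional $z_i$ touched by few active non-box constraints, round to the nearest half, update slacks) is not worked out: you do not show that such a $z_i$ always exists, nor do you establish that the total cost increase across all rounding steps is at most a factor $2$. The one-line assertion that the gap between $3\sum_v C_t(v)$ and $6\sum_v C_t(v)$ ``absorbs the rounding loss'' is exactly the lemma you would need to prove, and it is not obvious. The rescaling idea ($w := 2z$) does not help, since the budget rows remain non-unimodular after rescaling.

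The paper sidesteps this difficulty entirely: it keeps the polytope $\calP$ purely combinatorial (cardinality $+$ two laminar families), so that half-integrality of its extreme points comes for free from TU (\Cref{clm:P-half-integral}), and treats the two per-scenario costs not as constraints but as two linear \emph{objectives} $W_1, W_2$ linearizing the $T_t$ proxy. It then invokes \Cref{clm:T-sample}: any $p\in\calP$ decomposes by Carath\'eodory into a convex combination of extreme points, and a Markov-plus-union-bound argument shows some extreme point $\hat p$ has $W_t(\hat p)\le 2W_t(p)$ for both $t$ simultaneously. This is where the factor $2$ comes from (and then $2\cdot 3=6$ via \Cref{lem:T-upper-bd}), cleanly and without any claim about extreme points of a budget-constrained polytope. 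If you want to salvage your route, you would need to either prove the half-integrality of your augmented polytope (false as stated) or rigorously carry out the iterated rounding with a quantified loss, which is substantially more work than the paper's sampling argument.

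Your setup of the filtering inequality $C_t(j)\le C_t(v)$ for $v\in\child_t(j)$, your use of \Cref{fact:swamy} and \Cref{lem:T-upper-bd} to verify feasibility of $z=y$, and your use of \Cref{lem:T-proxy} to extract $\hat x_t$ from the half-integral $\hat y$ are all correct and match the paper. The only gap is the half-integrality step, which is the heart of the lemma.
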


Our goal now becomes to find a half-integral $\hat y \in \set{0,\half,1}^F$ so that, for each $t \in \set{1,2}$, $T_t(\hat y) \leq 6\cdot \sum_{v \in C}C_t(v)$. 
If we are able to do so, then by~\Cref{lem:T-proxy} we would be done.
To find such a half-intergral solution, we define the following polytope and linearization of $T_t$'s.
The polytope $\calP$ has variables $(z,\lambda_1,\lambda_2) \in \RR_{\geq 0}^{F \cup \Reps_1 \cup \Reps_2}$. The auxiliary $\lambda_t$ variables serve to linearize the $\max$ terms in the $T_t(z)$'s, as the constraints in $\calP$ ensure that $\lambda_t(j) \geq \max\set{0,1-z(G_j^t)}$.

\begin{align*}
	\calP := \Bigg\{(z,\lambda_1,\lambda_2) \in \RR_{\geq 0}^{F \cup \Reps_1 \cup \Reps_2} : & z(F) \leq k;\,\forall t \in \set{1,2},\forall j \in \Reps_t,\\
	&z(B_j^t) \geq \half,\,z(G_j^t) + \lambda_t(j) \geq 1\Bigg\}
\end{align*}
\noindent
The following follows from the presence of only ``two partitions'' plus noting the $\lambda$'s don't bother much since they are identity.
\begin{claim}\label{clm:P-half-integral}
	$\calP$ has half-integral extreme points.
\end{claim}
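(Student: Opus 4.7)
The plan is to reduce to showing half-integrality of the $z$-coordinates alone, and then exploit the two-family laminar structure of the constraints.

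First, I would observe that at any extreme point $(z^*, \lambda_1^*, \lambda_2^*)$ of $\calP$, each $\lambda$-variable is forced to satisfy $\lambda_t^*(j) = \max\{0,\, 1 - z^*(G_j^t)\}$. This is because $\lambda_t(j)$ appears only in its own non-negativity constraint and in the single constraint $z(G_j^t) + \lambda_t(j) \geq 1$, and at an extreme point at least one of these must bind. Consequently, if $z^*$ is half-integral then $z^*(G_j^t)$ is half-integral (being a sum of half-integers), and hence $\lambda_t^*(j) \in \{0,\tfrac12,1\}$ is too. So it suffices to prove $z^*$ is half-integral.

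Next I would identify the tight constraints bearing on $z^*$ alone. Apart from the coordinates $z^*_i=0$, these are: possibly $z^*(F)=k$; $z^*(B_j^t)=\tfrac12$ for some set of $(t,j)$; and $z^*(G_j^t)=1$ for those $(t,j)$ with $\lambda_t^*(j)=0$ and a tight $(G,t,j)$ constraint. The remaining tight $(G,t,j)$ constraints (those with $\lambda_t^*(j)>0$) are exactly the ones that pin down the corresponding $\lambda$-variables, so a straightforward counting argument shows the $z$-only tight constraints must already be sufficient to determine $z^*$ on its positive support. The crucial structural fact is that for each fixed $t$, the sets $\{B_j^t\}_{j\in\Reps_t}$ and $\{G_j^t\}_{j\in\Reps_t}$ are each pairwise disjoint (since the $F_j^t$ are disjoint, as noted in \Cref{fact:swamy}), with $B_j^t\subseteq G_j^t$. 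Thus the tight $z$-only constraints form a system whose coefficient matrix is the union of two ``two-level partition'' laminar families on $F$, together with at most one all-ones cardinality row.

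Finally, I would rescale $z$ by $2$ so that all right-hand sides become integers ($2k$, $1$, and $2$), and argue that the resulting system has an integral extreme solution. The argument follows the template used by Swamy~\cite{Swamy2016} for matroid median: one splits into cases based on which $j$'s contribute tight $B$ versus tight $G$ constraints in each of the two partitions, and uncrosses fractional support on the bipartite auxiliary graph whose two sides index the $t=1$ and $t=2$ families. Because each family is a nested two-level partition (rather than a general laminar family), the uncrossing of a fractional cycle produces two integer solutions whose average is the scaled $z^*$, forcing $z^*$ itself to be half-integral.

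The main obstacle lies in this last step: bilaminar constraint matrices are not totally unimodular in general, so one cannot invoke a black-box TU argument. The leverage comes from the very restricted structure here — two partitions (one per $t$), each with a single nested refinement $B_j^t \subseteq G_j^t$, plus a single cardinality row — which makes every relevant $2\times 2$ or $3\times 3$ ``bad'' submatrix have determinant in $\{-2,-1,0,1,2\}$, exactly the range that yields half-integrality. The $\lambda$-columns contribute nothing new since, as the paper's preceding sentence notes, they are identity columns and so each can be pivoted out without affecting the combinatorial structure of the remaining system.
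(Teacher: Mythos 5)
Your proof rests on a false premise. You assert that ``bilaminar constraint matrices are not totally unimodular in general, so one cannot invoke a black-box TU argument,'' but the opposite is true: the $0/1$ incidence matrix whose rows are the characteristic vectors of the sets in $\mathcal{L}_1 \cup \mathcal{L}_2$, for \emph{any} two laminar families $\mathcal{L}_1, \mathcal{L}_2$ on the same ground set, is totally unimodular. One way to see this is to realize the matrix as a network matrix: take the two laminar forests (directing tree edges from each set to its parent), identify their roots, and for each ground element $v$ add a non-tree arc between the minimal set in $\mathcal{L}_1$ containing $v$ and the minimal set in $\mathcal{L}_2$ containing $v$; the fundamental cycle of this arc uses exactly the tree edges corresponding to sets of $\mathcal{L}_1\cup\mathcal{L}_2$ containing $v$, so the incidence matrix is (up to signs) a network matrix and hence TU. This is exactly the result the paper invokes (citing Swamy).

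Given this, the remainder of your plan is effort spent routing around a non-existent obstacle. The paper's actual argument is much simpler: the $z$-block $A$ is TU by the above, the $\lambda$-columns each have at most a single $1$, and appending such columns to a TU matrix preserves total unimodularity; then Hoffman--Kruskal applied to the half-integral right-hand side yields half-integral vertices directly. Your first paragraph, showing that at an extreme point each $\lambda_t^*(j)$ is forced to equal $\max\{0, 1 - z^*(G_j^t)\}$ and so inherits half-integrality from $z^*$, is a perfectly valid alternative to the paper's ``padded columns'' observation. But your third paragraph never completes: the claim that a case analysis over ``$2\times 2$ or $3\times 3$ bad submatrices'' with determinants in $\{-2,\dots,2\}$ suffices is both unnecessary and unsubstantiated as written --- you would need to bound the determinant of \emph{every} square submatrix, not just small ones, and the rescaling-by-$2$ trick does not automatically convert bounded subdeterminants into integrality. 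The clean and correct route is the TU one that you dismissed.
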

\begin{proof}
	Consider the constraint matrix $M$ of $\calP$. Since the constant terms in the constraints are  half-integral, it suffices~\cite{HoffmK1956} to show
	that $M$ is totally unimodular (TU). Indeed, $M$ looks 
  $  \left[
\begin{array}{c|c}
\multirow{2}{*}{$A$} & I \\
                   & 0
\end{array}
\right]$
	 where we have the $A$-matrix corresponding to the $z$-variables, and the $I$ corresponding to the $\lambda_t$-variables. 
     
	The dimension of $I$ is $|\Reps_1| + |\Reps_2|$. The matrix $A$ is the incidence matrix of two laminar systems --- in fact, it is two partitions
	coarsened by the universe. Such a system is TU~\cite{Swamy2016}. A TU matrix padded with columns with at most one $1$ in them remains TU.
\end{proof}
Over $\calP$, we can then consider the {\em minimization} of the following linearization of the $T_t(z)$'s: $\forall t \in \set{1,2}, \forall j \in \Reps_t$: $W_t(z,\lambda_t,j) := \sum_{i \in G_j^t}d_t(i,j)z(i) + 3\gamma_j^{\paren t}\lambda_t(j)$, and
\begin{align*}
	W_t(z,\lambda_t) := \sum_{j \in \Reps_t}\abs{\child_t(j)}W_t(z,\lambda_t, j)\,.
\end{align*}
The following claim encapsulates that we can find an extreme point of (any polytope) $\calP$ which is a $2$-approximation to both $W_t$'s {\em simultaneously} in a sense made clear below. 
This allows us to get the desired $\hat y$ (as we explain better in the proof of~\Cref{lem:half-integral}).
\begin{claim}\label{clm:T-sample}
	Consider a polytope $\calP \subseteq \RR_{\geq 0}^m$, linear functions $W_1, W_2 : \RR^m_{\geq 0} \to \RR_{\geq 0}$, and a point $p \in \calP$. 
	There is a polynomial time algorithm which returns an extreme point $\hat p$ of $\calP$ such that 
	$W_t(\hat p) \leq 2W_t(p)$ for $t\in \{1,2\}$.
\end{claim}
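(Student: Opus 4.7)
The plan is to reduce the problem to a single linear program over $\calP$ with a carefully weighted objective that balances $W_1$ and $W_2$. The key insight is that if the weights are chosen so that $p$ contributes equally to the two terms of the objective, then any extreme-point optimum is automatically a $2$-approximation on each of $W_1$ and $W_2$ simultaneously.

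Concretely, first suppose $W_1(p), W_2(p) > 0$. Set $\alpha := W_2(p)$ and $\beta := W_1(p)$, and solve $\min_{z \in \calP}\, \alpha W_1(z) + \beta W_2(z)$; let $\hat p$ be an extreme-point optimum, which is found in polynomial time by standard LP methods. Because $p$ is feasible, the optimum value satisfies $\alpha W_1(\hat p) + \beta W_2(\hat p) \leq \alpha W_1(p) + \beta W_2(p) = 2 W_1(p) W_2(p)$. Since $\calP \subseteq \RR_{\geq 0}^m$ and each $W_t : \RR_{\geq 0}^m \to \RR_{\geq 0}$, both summands on the left are non-negative, so each is individually at most $2 W_1(p) W_2(p)$. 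Dividing by $\alpha$ and $\beta$ respectively yields $W_t(\hat p) \leq 2 W_t(p)$ for $t \in \set{1, 2}$, as required.

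For the degenerate case $W_t(p) = 0$: since $W_t \geq 0$ on $\calP$, the set $F_t := \set{z \in \calP : W_t(z) = 0}$ is a face of $\calP$ containing $p$, hence non-empty. Any extreme point of a face of a polytope is also an extreme point of the polytope itself (a standard polyhedral fact), so we restrict the LP above to $F_t$ (or to $F_1 \cap F_2$ when both values are zero) and return an extreme-point optimum. This point is extreme in $\calP$, satisfies $W_t(\hat p) = 0 = 2 W_t(p)$ on the zero coordinate, and inherits $W_{3-t}(\hat p) \leq 2 W_{3-t}(p)$ from the non-degenerate analysis on the other coordinate.

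I do not expect a serious obstacle: the $2$-approximation comes out precisely because the choice $\alpha : \beta = W_2(p) : W_1(p)$ equalizes $\alpha W_1(p)$ and $\beta W_2(p)$, letting non-negativity of the two summands do all the work. A more heavy-handed alternative would be to explicitly decompose $p$ as a convex combination of at most $m+1$ extreme points of $\calP$ via Carath\'{e}odory's theorem and then apply Markov's inequality with a union bound to select a vertex satisfying both bounds; this also works, but is unnecessarily complicated for the factor $2$ we aim for.
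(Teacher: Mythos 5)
Your proof is correct and takes a genuinely different route from the paper. The paper applies Carath\'eodory's theorem to write $p$ as a convex combination of at most $m+1$ extreme points, then uses Markov's inequality (on the expectation $\EXP_{\hat p\sim D_p}[W_t(\hat p)] = W_t(p)$) together with a union bound to show that with positive probability an enumerated extreme point works. You instead solve a single LP with the weighted objective $W_2(p)\, W_1(z) + W_1(p)\, W_2(z)$, observe that feasibility of $p$ bounds the optimum by $2W_1(p)W_2(p)$, and extract the per-coordinate bounds from non-negativity of each summand; this is shorter and avoids computing a Carath\'eodory decomposition explicitly. Both approaches generalize to $T$ linear functions with the factor $T$: the paper's via $T$ Markov applications, yours by weighting each $W_t$ by $\prod_{t'\neq t}W_{t'}(p)$. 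One small imprecision in your write-up is the degenerate case $W_t(p)=0$: there the weighted objective becomes $W_2(p)\,W_1(z)$, which on the face $F_1$ is identically zero, so "restrict the LP above to $F_t$" does not give control over the other coordinate --- you should say explicitly that on $F_1$ you minimize $W_2$ (and similarly, so that you get $W_{3-t}(\hat p) \le W_{3-t}(p)$ directly from $p \in F_t$). By contrast the paper's Carath\'eodory/Markov argument handles $W_t(p)=0$ with no case split, since $\EXP[W_t(\hat p)]=0$ together with non-negativity forces $W_t(q)=0$ for every vertex $q$ in the support.
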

\begin{proof}
	By Carath\'{e}odory's theorem~\cite{Carat1911,Stein1913}, we can write $p$ as a linear combination of at most $(m+1)$ extreme points of $\calP$.
	More precisely, we can find, in polynomial time, a subset $\calE$ of extreme points such that $|\calE| \leq m+1$ and $p=\sum_{q \in \calE}\mu_q\cdot q$
	for some $\mu_q$'s which form a probability distribution $D_p$. We claim that one of the $q\in \calE$ is the desired point, and we can find which one by enumeration.
	To see why, observe that for $t\in \{1,2\}$,  $\EXP_{\hat p \sim D_p}[W_t(\hat p)] = W_t(p)$, and so by Markov, 
	$\Pr_{\hat p \sim D_p}[W_t(\hat p) > 2W_t(p)] < \frac{1}{2}$. So, by union bound $\Pr_{\hat p \sim D_p}[\exists t\in \{1,2\}:~W_t(\hat p) > 2W_t(p)] < 1$ implying
	$\Pr_{\hat p \sim D_p}[W_t(\hat p) \leq 2W_t(p), ~~t\in \{1,2\}] > 0$ which proves the above claim.
\end{proof}
\begin{proof}[Proof of \Cref{lem:half-integral}]
	Given our initial solution $(x_1,x_2,y)$ of \ref{kmed-cp:obj}, set variables $\lambda_t(j) := \max\set{0,1-y(G_j^t)}$ for each $t \in \set{1,2}$, $j \in \Reps_t$. By \Cref{fact:swamy}, this gives us $(y,\lambda_1,\lambda_2) \in \calP$. So by \Cref{clm:P-half-integral} and \Cref{clm:T-sample}, we can obtain a half-integral $(\hat y, \hat \lambda_1, \hat \lambda_2) \in \calP$ such that $W_1(\hat y,\hat\lambda_1) \leq 2W_1(y,\lambda_1)$, and $W_2(\hat y,\hat \lambda_2) \leq 2W_2(y,\lambda_2)$.
	So by \Cref{lem:T-proxy}, for each $t \in \set{1,2}$ we can construct $\hat x_t$'s such that,
	\begin{align*}
		\sum_{j \in \Reps_t}\abs{\child_t(j)}\sum_{i \in F}d_t(i,j)\hat x_t(i,j) & \leq T_t(\hat y) \leq W_t(\hat y,\hat\lambda_t) \leq 2W_t(y,\lambda_t)
	\end{align*}
	where the penultimate step follows from the constraints in $\calP$. By our construction of the starting $\lambda_t$'s, i.e. the constraints enforcing $\lambda_t(j) \ge \max\{0,1-z(G_j^t)\}$, $W_t(y,\lambda_t) = T_t(y)$. So we have, by \Cref{lem:T-upper-bd}, that the above can be upper bounded by
	$
	6 \sum_{j \in \Reps_t}\abs{\child_t(j)}C_t(j) \leq 6\sum_{v \in C}C_t(v)
	$.
\end{proof}

\noindent
\textbf{{Rounding to Integral solution.}} We round $(\hat x_1, \hat x_2, \hat y)$ to an integral solution, losing a constant factor, and this is very similar to the ideas in~\cite{ChariGTS2002, Swamy2016}; here also we follow the latter's framework. 

Below are the main ideas, and the details are in \Cref{sec:integ-proofs}.

The first step is another {\em filtering} step. For $t\in \{1,2\}$ and $j\in \Reps_t$, define $S^t_j := \{i\in F~:~\hat x_t(i,j) > 0\}$ and let $\hat C_t(j) := \sum_{i\in F} \hat x_t(i,j)d_t(i,j)$;
note that $\sum_{j\in \Reps_t}|\child_t(j)|\hat C_t(j) \leq 6\sum_{v\in C} C_t(v)$. Now we define a subset of ``super-representatives'': we pick $\ell \in \Reps_t$ with smallest $\hat C_t(\ell)$
and add it to $\Super_t$ removing every $j$ with $S^t_j \cap S^t_\ell \neq \emptyset$ from consideration. In the end, we have $S^t_j$'s pairwise disjoint for $\Super_t$'s and each $j\in \Reps_t$ shares
a facility with a super-representative in $\Super_t$. 

Next, as in the rounding to the half-integral case, we define a particular linear function which works as a proxy for $\hat C_t(j)$. To define this proxy, for every $j\in \Reps_t$ call the two facilities in $S^t_j$
primary and secondary in ascending order of distance (duplicating facilities if needed). For $t\in \{1,2\}$ and $j\in \Reps_t$, let $\ell \in \Super_t$ be the super-representative which $j$ shares a facility with, and define
\[A_t(z,j) := \begin{cases}
	\sum_{i \in S_\ell^t}d_t(i,j)z(i) \hfill \text{if }\prim_t(j) \in S_\ell^t\\
	\sum_{i \in S_\ell^t}d_t(i,j)z(i) + \paren{d_t(\prim_t(j),j) - d_t(\sec_t(j), j)}\cdot z(\prim_t(j)) &\text{otherwise}
\end{cases}\]
It's instructive to think of all $z(i)$'s as $1/2$; in that case $A_t(z,j)$ is either the average of $j$'s distance to $\ell$'s facilities when $j$'s closest facility is shared with $\ell$, 
or it is the average of $j$'s distance to it closest facility and distance to the other facility that $\ell$ goes to whom $j$ doesn't share. The importance of the above function is captured in a couple of observations: (A) for any $t$ and any $j\in \Reps_t$, we can upper bound $A_t(\hat y, j) \leq 2\hat C_t(j)$, that is, $A_t(\hat y, j)$ isn't too bad a proxy, and perhaps more usefully (B) given any $z$ such that $z(S^t_\ell) = 1$ for all {\em super representatives}, we can find an assignment of any $j\in \Reps_t$ to facilities with connection cost $\leq A_t(z, j)$. These two facts, and the fact that we can minimize linear functions over the intersection of two partition matroids (defined by the $S^t_\ell$'s for $t\in \{1,2\}$ and $\ell \in \Super_t$), gives us an $O(1)$-approximation. This is because the relevant polytope has integral extreme points (see \Cref{obs:R-integral}).
Observation (A) holds because we can charge $j$'s journey to $\ell$'s ``other facility'' to $\ell$'s connection cost which is smaller than $j$'s cost by design; this uses half-intergrality. The details are in~\Cref{lem:25}. Observation (B) is immediate if $\prim_t(j) \in S_\ell^t$ or if $z(\prim_t(j)) = 0$; otherwise, if $z(\prim_t(j)) = 1$ then $j$ will travel to its closest facility while $A_t(z,j)$ 
will be average of two number which are larger than the minimum. The statement is in fact true for fractional $z$'s as well and the proof is in~\Cref{lem:24}.

All in all, we get the following theorem; the factor $28$ could 
definitely be improved, but perhaps no better than $8$ using these methods.

\begin{theorem}\label{thm:k-med:const-appx}
    When $T=2$, there is a polynomial time $28$-approximation algorithm for sum-aggregate $k$-median.
\end{theorem}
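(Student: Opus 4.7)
The plan is to execute the second half of the LP-rounding scheme sketched in the text: \Cref{lem:half-integral} already delivers a half-integral solution losing a factor of $6$ on the representative cost, and what remains is to round it to integral while losing only a further constant factor. By \Cref{lem:filtering-4} it suffices to produce $S \subseteq F$ with $|S| \le k$ and $\sum_{j \in \Reps_t} |\child_t(j)| \, d_t(j, S) \le \alpha \sum_{v \in C} C_t(v)$ for both $t \in \{1,2\}$, which yields a $(4+\alpha)$-approximation; the target is $\alpha = 24$, giving the final factor $28$.

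First I would invoke \Cref{lem:half-integral} to obtain a half-integral $(\hat x_1, \hat x_2, \hat y)$ satisfying $\sum_{j \in \Reps_t} |\child_t(j)| \hat C_t(j) \le 6 \sum_{v} C_t(v)$, where $\hat C_t(j) := \sum_i d_t(i,j) \hat x_t(i,j)$. Half-integrality together with \Cref{assn:completeness} implies that each support $S_j^t := \{i : \hat x_t(i,j) > 0\}$ has exactly two elements (after duplication) and $\hat y(S_j^t) = 1$. I would then perform the super-representative filtering: for each $t$, greedily select the $j \in \Reps_t$ of smallest $\hat C_t(j)$ into $\Super_t$ and discard every other $j' \in \Reps_t$ whose support $S_{j'}^t$ intersects $S_j^t$. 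This ensures that $\{S_\ell^t : \ell \in \Super_t\}$ is pairwise disjoint, every $j \in \Reps_t$ shares at least one facility with a unique $\ell(j,t) \in \Super_t$, and $\hat C_t(\ell(j,t)) \le \hat C_t(j)$ by the greedy choice.

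Second, I would define the proxy $A_t(z, j)$ exactly as in the text, lift it to $A_t(z) := \sum_{j \in \Reps_t} |\child_t(j)|\, A_t(z,j)$, and set up the polytope
\[
R := \left\{z \in [0,1]^F : z(F) \le k,\ z(S_\ell^t) \ge 1\ \forall\, t \in \{1,2\},\, \ell \in \Super_t\right\}.
\]
Because $\{S_\ell^t : \ell \in \Super_t\}$ is a partition of a subset of $F$ for each $t$, the non-trivial constraints of $R$ are the incidence matrix of two partitions, and the same Hoffman--Kruskal TU argument used in \Cref{clm:P-half-integral} gives that $R$ has $0/1$ extreme points (\Cref{obs:R-integral}). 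By construction $\hat y \in R$, and \Cref{lem:25} gives $A_t(\hat y, j) \le 2\hat C_t(j)$ using half-integrality and $\hat C_t(\ell(j,t)) \le \hat C_t(j)$; summing yields $A_t(\hat y) \le 12 \sum_v C_t(v)$. I would then apply \Cref{clm:T-sample} to the two linear objectives $A_1, A_2$ over $R$ to extract an integral extreme point $z^\star \in R$ with $A_t(z^\star) \le 2 A_t(\hat y) \le 24 \sum_v C_t(v)$ for both $t$. Setting $S := \{i : z^\star(i) = 1\}$, we have $|S| \le k$ and $z^\star(S_\ell^t) = 1$ for each super-representative, so \Cref{lem:24} lets us assign every $j \in \Reps_t$ to a facility of $S$ at $d_t$-cost at most $A_t(z^\star, j)$. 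Thus $\sum_{j \in \Reps_t} |\child_t(j)|\, d_t(j, S) \le A_t(z^\star) \le 24 \sum_v C_t(v)$, and \Cref{lem:filtering-4} finishes the proof.

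The technical heart of the argument is the pair of proxy properties: (A) $A_t(\hat y, j) \le 2\hat C_t(j)$ from \Cref{lem:25}, and (B) the assignment guarantee of \Cref{lem:24}. Property (A) requires charging the cost of $j$'s ``other'' facility in $S^t_{\ell(j,t)}$ against $\hat C_t(\ell(j,t))$, which is at most $\hat C_t(j)$ by the greedy choice of super-representatives; half-integrality is essential so that each half-unit of mass on a facility pays for at most one unit of distance. Property (B) requires a case analysis when $\prim_t(j) \notin S^t_{\ell(j,t)}$, where the negative term $(d_t(\prim_t(j),j) - d_t(\sec_t(j),j))\, z(\prim_t(j))$ must be shown to correctly model the best assignment available for any fractional $z$ satisfying $z(S_\ell^t) \ge 1$. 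These are the main places where the matroid-median rounding of \cite{Swamy2016} has to be carefully adapted to the two-metric setting, since the relevant polytope now intersects two partition matroids built from different metrics and the assignment of a single representative $j$ may draw on facilities opened by the super-representative structure of either metric.
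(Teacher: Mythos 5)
Your proposal is correct and follows essentially the same route as the paper: super-representative filtering, the proxy $A_t$, the polytope over two partition matroids with integral extreme points, and Carath\'{e}odory-plus-Markov to simultaneously bound $A_1$ and $A_2$, yielding $\alpha = 24$ and hence a $28$-approximation via \Cref{lem:filtering-4}. The only cosmetic difference is writing $z(S_\ell^t)\ge 1$ rather than the paper's equality $z(S_\ell^t)=1$ in the polytope; this is harmless because the extreme points appearing in the Carath\'{e}odory decomposition of $\hat y$ (which has $\hat y(S_\ell^t)=1$) must themselves satisfy $z(S_\ell^t)=1$, so \Cref{lem:24} applies as you claim.
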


\begin{remark}[Generalizing to arbitrary norm aggregators]\label{rem:18}
	In the above proof, we note that the solution $S$ we return has the property that $\cost(d_t;S) \leq 28\sum_{j\in C} C_t(j)$, for $t\in \{1,2\}$.
	This allows us to generalize the above theorem for any aggregator $\Psi$ which is a norm. Instead of a linear program, we would have a {\em convex} program.
	More precisely, we have variables $C_1, C_2$ where $C_t := \sum_{j\in C} C_t(j)$ and the objective would minimize $\Psi(C_1, C_2)$. 
	By the property of our rounding and the homogeneity of $\Psi$, we would get that $\Psi(\cost(d_1;S), \cost(d_2;S)) \leq 28\Psi(C_1, C_2) \leq 28\opt$.
\end{remark}

\begin{remark}[Generalizing $(k,z)$-clustering]\label{rem:19}
	The above theorem focused on the $k$-median problem. However, the same methodology also gives an $O(1)$-approximation for the $\Psi$-aggregated $(k,z)$-clustering problem
	due to the fact that we only use triangle inequalities over ``bounded number of hops''. This is a folklore observation (see, for instance, Footnote 1 in~\cite{ChakrS19}).
	To obtain this, first we replace $d_t(i,j)$ with $d_t(i,j)^z$ in the linear/convex program. 
	Next, we use the ``relaxed triangle inequality'', which follows from Lemma A.1 in~\cite{MakarMR19}, that says if we have $r+1$ points $a_1, \ldots, a_{r+1}$, then $d(a_1, a_{r+1})^z \leq r^{z-1} \sum_{i=1}^{r} d(a_i, a_{i+1})^z$.
		In our proof above, we never invoke the triangle inequality on more than $4$ points, and thus, everything goes through with a ``hit'' of $3^{z-1}$.
		Note, though, that the definition of the proxy function $T_t(z,j)$ would have $3$ replaced by $3^z$. 
		In the end, we would get a $O(1)^z$ approximation to the sum of the $z$th powers, and since we take the $z$th root, this gives a $O(1)$-approximation. 
\end{remark}

\subsection{Deferred proofs towards half-integrality}\label{sec:half-integ-details}

\lemfiltering*
\begin{proof}
    For each $t \in \set{1,2}$, we have by construction that $d_t(v,j) \le 4C_t(j) \le 4 C_t(v)$. Using this, we have
    \begin{align*}
        \sum_{v \in C}d_t(v,S) &= \sum_{j \in \Reps_t}\sum_{v \in \child_t(j)}d_t(v,S) \leq \sum_{j \in \Reps_t}\sum_{v \in \child_t(j)} \paren{d_t(v,j) + d_t(j,S)}\\
        &\leq \sum_{j \in \Reps_t}\sum_{v \in \child_t(j)}(4C_t(v)) + \sum_{j \in \Reps_t}\abs{\child_t(j)}d_t(j,S) \leq (4+\alpha) \cdot \sum_{v \in C}C_t(v)\,.
    \end{align*}
    So we have
    \begin{align*}
        \aggcost(S) &= \sum_{v \in C}d_1(v,S) + \sum_{v \in C}d_2(v,S)\\
        &\leq (4+\alpha)\cdot \sum_{v \in C}C_1(v) + (4 + \alpha) \cdot \sum_{v \in C}C_2(v)\\
        &= (4+\alpha)\cdot\lp \leq (4+\alpha)\cdot\opt\,.
    \end{align*}
\end{proof}

\factswamy*
\begin{proof}
Fix $t \in \set{1,2}$ and $j \in \Reps_t$.
    \begin{itemize}
        \item If $C_t(j) = 0$, then $y(B_j^t) = 1$. Otherwise,
        \begin{align*}
            C_t(j) &\geq \sum_{i \notin B_j^t}x_t(i,j)d_t(i,j) > 2C_t(j)\cdot\paren{\sum_{i \notin B_j^t}x_t(i,j)}\\
            &= 2C_t(j)\cdot\paren{1-\sum_{i \in B_j^t}x_t(i,j)} \geq 2C_t(j) \cdot \paren{1-y(B_j^t)}
        \end{align*}
        which, when $C_t(j) > 0$, gives $y(B_j^t) \geq \half$.
        \item By construction of the $\child_t$ sets, $B_j^t \in F_j^t$, so it suffices to show that $\gamma_j^{\paren t} \geq 2C_t(j)$. For this, suppose $\gamma_j^{\paren t} = d_t(i_\gamma,j)$, $i_\gamma \in F_\ell^t$. Then $\gamma_j^{\paren t} \geq d_t(j,\ell) - d_t(i_\gamma,\ell) \geq d_t(j,\ell) - \gamma_j^{\paren t}$. So $2\gamma_j^{\paren t} \geq d_t(j,\ell) > 4C_t(j)$ by the construction of $\child_t$ sets; i.e. $\gamma_j^{\paren t} \geq 2C_t(j)$.
        \item We have $d_t(i,j) \leq d_t(j,\ell) + d_t(i,\ell)$. We also have, by the construction of the $\child_t$ sets, that that $d_t(i,\ell) \leq 2C_t(\ell) \leq 2\max\set{C_t(\ell), C_t(j)} \leq \half \cdot d_t(\ell,j)$. So $d_t(i,j) \leq \frac 3 2 \cdot d_t(j,\ell)$. Finally, since $i_\gamma \in F_\ell^t$, we have $d_t(i_\gamma,\ell) \leq d_t(i_\gamma,j)$, and hence $d_t(j,\ell) \leq d_t(i_\gamma,j) + d_t(i_\gamma,\ell) \leq 2d_t(i_\gamma,j) = 2\gamma_j^{\paren t}$. So $d_t(i,j) \leq 3\gamma_j^{\paren t}$.
    \end{itemize}
\end{proof}

\lemTproxy*
\begin{proof}
    If $z(G_j^t) \geq 1$, then we can set $x'_t(i,j)$'s such that $\sum_{i \in G_j^t}x'_t(i,j) = 1$ and\\ $\sum_{i \notin G_j^t}x'_t(i,j) = 0$. This gives the assignment cost
    \[\sum_{i \in F}d_t(i,j)x'_t(i,j) = \sum_{i \in G_j^t}d_t(i,j)x'_t(i,j) \leq \sum_{i \in G_j^t}d_t(i,j)z(i) \leq T_t(z,j)\,.\]
    Otherwise, we set $x'_t(i,j) = z(i)$ for every $i \in G_j^t$. Then, we identify $\ell \in \Reps_t$ s.t. $\gamma_j^{\paren t} = d_t(i',j)$ for some $i' \in F_\ell$. \Cref{fact:swamy} gives us that $z(B_\ell^t) \geq \half \geq 1-z(B_j^t) \geq 1-z(G_j^t) = 1-\sum_{i \in G_j^t}x'_t(i,j)$. So we can set $x'_t(i,j)$'s s.t. they are zero outside $G_j^t \cup B_\ell^t$, and $\sum_{i \in B_\ell^t}x'_t(i,j) = 1-\sum_{i \in G_j^t}z(i)$. This gives us the assignment cost
    \begin{align*}
        \sum_{i \in F}d_t(i,j)x'_t(i,j) &= \sum_{i \in G_j^t}d_t(i,j)x'_t(i,j) + \sum_{i \in B_{j'}^t}d_t(i,j)x'_t(i,j)\\
        &\leq \sum_{i \in G_j^t}d_t(i,j)z(i) + 3\gamma_j^{\paren t}\paren{\sum_{i \in B_{j'}^t}x'_t(i,j)}\\
        &= \sum_{i \in G_j^t}d_t(i,j)z(i) + 3\gamma_j^{\paren t}\paren{1-z(G_j^t)} = T_t(z,j)\,.
    \end{align*}
\end{proof}

\lemTupperbd*
\begin{proof}
If $y(G_j^t) \geq 1$, then we have that
\[C_t(j) = \sum_{j \in \Reps_t}x_t(i,j)d_t(i,j) = \sum_{j \in \Reps_t}d_t(i,j)y(i) = T_t(y,j)\,.\]
    Otherwise, we can say by \Cref{assn:completeness} that
    \begin{align*}
    C_t(j) &= \sum_{i \in G_j^t}d_t(i,j)x_t(i,j) + \sum_{i \notin G_j^t}d_t(i,j)x_t(i,j)\\
    &\geq \sum_{i \in G_j^t}d_t(i,j)y(i) + \gamma_j^{\paren t}\cdot\paren{1-\sum_{i \in G_j^t}x_t(i,j)} &\dots\text{by \eqref{kmed-cp:2-san}}\\
    &= \sum_{i \in G_j^t}d_t(i,j)y(i) + \gamma_j^{\paren t}\paren{1-y(G_j^t)} \geq \frac 1 3 \cdot T_t(y,j)\,.
    \end{align*}
\end{proof}

\subsection{Rounding to an integral solution}\label{sec:integ-proofs}

Given the half-integral solution $(\hat x_1,\hat x_2, \hat y)$, we first make the assumption below.
\begin{assumption}\label{assn:completeness-again}
    $\forall i \in F$, $\hat y(i) \in \set{0,\half}$.
\end{assumption}
This assumption can be made true by splitting any $i \in F$ s.t $\hat y(i) = 1$ into two copies $i',i''$ with $\hat y(i') = \hat y(i'') = \half$; and then reassigning $\hat x_t$'s accordingly. This then implies \Cref{assn:completeness} as well.

\textit{Filtering.} We now perform another simpler filtering step in each metric. For $t \in \set{1,2}$ and $j \in \Reps_t$, we define $S_j^t := \set{i \in F : \hat x_t(i,j) > 0}$, and $\hat C_t(j) := \sum_{i \in F}\hat x_t(i,j)d_t(i,j)$. We begin with all of $\Reps_t$ being in a set $U$ of \emph{uncovered} representatives. We pick $\ell \in U$ with the smallest $\hat C_t(\ell)$, and call it a \emph{super-representative under $d_t$}, adding it to the set $\Super_t$. We define its \emph{neighbor set} $\nbr_t(\ell) := \set{j \in \Reps_t : S_j^t \cap S_\ell^t \neq \emptyset}$. We consider $\nbr_t(\ell)$ covered, and then repeat this process until all of $\Reps_t$ is covered.

Unlike in the previous phase, we do not discard $\Reps_t$ after forming $\Super_t$. We open facilities near the super-representatives, but still consider the assignment cost of all representatives.

\textit{Rounding.} For a $t \in \set{1,2}$, and $j \in \Reps_t$, we define $\prim_t(j)$ to be the facility in $S_j^t$ that is closer to $j$, and $\sec_t(j)$ to be the other one. Now suppose $j \in \nbr_t(\ell)$ for an $\ell \in \Super_t$. We define, for a $z \in [0,1]^F$,
\[A_t(z,j) := \begin{cases}
    \sum_{i \in S_\ell^t}d_t(i,j)z(i) \hfill \text{if }\prim_t(j) \in S_\ell^t\\
    \sum_{i \in S_\ell^t}d_t(i,j)z(i) + \paren{d_t(\prim_t(j),j) - d_t(\sec_t(j), j)}\cdot z(\prim_t(j)) &\text{otherwise}
\end{cases}\]
and $A_t(z) = \sum_{j \in \Reps_t}\abs{\child_t(j)}A_t(z,j)$. $A_t(z)$ serves as a linear proxy for the cost under metric $d_t$. We observe, directly from \ref{kmed-cp:3-cov}, that

\begin{observation}
    $\forall t \in \set{1,2}, j \in \Reps_t$, $\hat y(S_j^t) = 1$.
\end{observation}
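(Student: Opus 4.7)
The observation is essentially immediate once one unpacks the structural constraints placed on the half-integral solution $(\hat x_1,\hat x_2,\hat y)$ by the preceding setup. The plan is simply to combine three ingredients: (i) the half-integrality of $\hat y$ from Assumption~\ref{assn:completeness-again}, (ii) the completeness-style property that each $\hat x_t(i,j)$ lies in $\set{0,\hat y(i)}$ (which, as the paragraph following Assumption~\ref{assn:completeness-again} notes, implies Assumption~\ref{assn:completeness} for the half-integral solution), and (iii) the covering constraint \eqref{kmed-cp:3-cov}, namely $\sum_{i \in F} \hat x_t(i,j) = 1$.

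First, I would observe that by Assumption~\ref{assn:completeness-again} every facility satisfies $\hat y(i) \in \set{0,\half}$, and by \eqref{kmed-cp:2-san} combined with the completeness property we have $\hat x_t(i,j) \in \set{0,\hat y(i)}$. Therefore every $i \in S_j^t$ (i.e.\ every $i$ with $\hat x_t(i,j) > 0$) must have $\hat y(i) = \half$ and in fact $\hat x_t(i,j) = \half$, since $\hat y(i)=0$ would force $\hat x_t(i,j) = 0$. Next, the covering constraint \eqref{kmed-cp:3-cov} says $\sum_{i \in S_j^t} \hat x_t(i,j) = 1$, and since each summand is $\half$, the set $S_j^t$ must have exactly two elements (with any duplicates induced by the facility-splitting in Assumption~\ref{assn:completeness-again} counted with multiplicity). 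Summing the $\hat y$-values then gives $\hat y(S_j^t) = 2 \cdot \half = 1$, as desired.

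There is really no obstacle here: the claim is a direct bookkeeping consequence of half-integrality plus the splitting convention, and its role is to justify treating each $S_j^t$ as a two-element ``paired'' set, which is what makes the definitions of $\prim_t(j)$, $\sec_t(j)$, and $A_t(z,j)$ well-posed and allows $\hat y$ to be treated as a $(1/2,1/2)$-distribution on each $S_j^t$ in the subsequent rounding argument.
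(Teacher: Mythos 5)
Your proof is correct and uses the same ingredients the paper relies on (completeness of $\hat x_t$ via Assumption~\ref{assn:completeness-again}, together with \eqref{kmed-cp:3-cov}), which the paper cites without elaboration. The only stylistic remark is that the detour through half-integrality and $\lvert S_j^t\rvert = 2$ is unnecessary for this particular observation: completeness already gives $\hat x_t(i,j) = \hat y(i)$ for every $i \in S_j^t$, and then $\hat y(S_j^t) = \sum_{i \in S_j^t} \hat x_t(i,j) = \sum_{i \in F}\hat x_t(i,j) = 1$ directly from \eqref{kmed-cp:3-cov}, so the argument does not need $\hat y(i) \in \set{0,\half}$ at all (though of course half-integrality is used in the surrounding lemmas).
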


We then obtain bounds analogous to \Cref{lem:T-proxy,lem:T-upper-bd}.

\begin{lemma}\label{lem:A-proxy}\label{lem:24}
    Consider a $z \in [0,1]^F$ s.t. for each $t \in \set{1,2}$, and $\ell \in \Super_t$, we have $z(S_\ell^t) = 1$. Then the assignment cost of $j$ under $d_t$ is at most $A_t(z,j)$. That is, we can assign variables $\set{0\leq x'_t(i,j) \leq z(i)}_{i \in F}$ such that, for each $t \in \set{1,2}$ and $j \in \Reps_t$, $\sum_{i \in F}x'_t(i,j)d_t(i,j) \leq A_t(z,j)$.
\end{lemma}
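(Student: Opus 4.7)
The plan is to proceed by case analysis following the two branches in the definition of $A_t(z,j)$. Fix $t \in \set{1,2}$, $j \in \Reps_t$, and let $\ell \in \Super_t$ be the unique super-representative with $j \in \nbr_t(\ell)$ (uniqueness is immediate from the filtering that built $\Super_t$). The candidate facilities I will ship $j$'s unit of demand to are those in $S_\ell^t \cup \set{\prim_t(j)}$, using only the $z$-mass available on them.

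First, if $\prim_t(j) \in S_\ell^t$, the assignment is direct: set $x'_t(i,j) := z(i)$ for every $i \in S_\ell^t$ and $x'_t(i,j) := 0$ otherwise. Feasibility $0 \leq x'_t(i,j) \leq z(i)$ is immediate, the total mass served is $\sum_{i\in F} x'_t(i,j) = z(S_\ell^t) = 1$ by hypothesis, and the incurred cost is exactly $\sum_{i \in S_\ell^t} d_t(i,j) z(i) = A_t(z,j)$.

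Second, if $\prim_t(j) \notin S_\ell^t$, then the neighbor relation $S_j^t \cap S_\ell^t \neq \emptyset$ together with $S_j^t = \set{\prim_t(j),\sec_t(j)}$ forces $\sec_t(j) \in S_\ell^t$; write $S_\ell^t = \set{\sec_t(j),b}$. The idea is to ``divert'' some of $\sec_t(j)$'s would-be demand over to the strictly closer facility $\prim_t(j)$, capturing a per-unit saving of $d_t(\sec_t(j),j) - d_t(\prim_t(j),j) \geq 0$. Concretely, I would set $x'_t(\prim_t(j), j) := z(\prim_t(j))$, $x'_t(\sec_t(j), j) := z(\sec_t(j)) - z(\prim_t(j))$, $x'_t(b, j) := z(b)$, and $x'_t(i,j) := 0$ for all other $i \in F$. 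The total mass is $z(\sec_t(j)) + z(b) = z(S_\ell^t) = 1$, and expanding $\sum_{i\in F} d_t(i,j) x'_t(i,j)$ yields exactly $A_t(z,j)$ after one cancellation.

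The main obstacle is the non-negativity constraint $x'_t(\sec_t(j), j) \geq 0$, i.e.\ $z(\prim_t(j)) \leq z(\sec_t(j))$. I expect this to follow from structural information on the $z$'s to which the lemma is actually applied: the half-integrality of $\hat y$ (Assumption~\ref{assn:completeness-again}) combined with a pointwise bound of the form $z \leq 2\hat y$ and the fact that the $S_\ell^t$'s are pairwise disjoint across $\ell \in \Super_t$ should constrain the mass on $\prim_t(j)$ enough to preserve the diversion. Should a residual case nonetheless arise, the fallback is to cap the diversion at $\min\set{z(\prim_t(j)),\,z(\sec_t(j))}$; since the coefficient $d_t(\prim_t(j),j) - d_t(\sec_t(j),j)$ is non-positive, diverting strictly less mass only decreases the realized cost relative to $A_t(z,j)$, so the bound $\sum_i d_t(i,j) x'_t(i,j) \leq A_t(z,j)$ still goes through.
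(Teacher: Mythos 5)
Your Case~1 matches the paper, and your Case~2 construction is exactly the paper's when $z(\sec_t(j)) \geq z(\prim_t(j))$. However, the two escape hatches you offer for the remaining situation both fail, and this is a genuine gap, not a detail to be deferred.

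First, the structural speculation does not hold. The polytope $\calR$ to which this lemma is applied constrains $z(S_\ell^t) = 1$ only for \emph{super}-representatives $\ell \in \Super_t$; it places no relation between $z(\prim_t(j))$ and $z(\sec_t(j))$ for $j \in \Reps_t \setminus \Super_t$. At an integral extreme point one can have $z(\prim_t(j)) = 1$ (because $\prim_t(j)$ happens to sit in some $S_{\ell'}^{t'}$ for the other metric $t'$) while $z(\sec_t(j)) = 0$, so $z(\prim_t(j)) \leq z(\sec_t(j))$ simply is not available.

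Second, the fallback "cap the diversion at $\min\set{z(\prim_t(j)), z(\sec_t(j))}$" points the inequality the wrong way. Since $d_t(\prim_t(j),j) - d_t(\sec_t(j),j) \leq 0$, diverting \emph{less} mass to the cheaper facility yields a \emph{smaller} saving, hence a \emph{larger} realized cost. Concretely, if you divert only $z(\sec_t(j)) < z(\prim_t(j))$, the realized cost exceeds $A_t(z,j)$ by $\bigl(d_t(\sec_t(j),j) - d_t(\prim_t(j),j)\bigr)\bigl(z(\prim_t(j)) - z(\sec_t(j))\bigr) \geq 0$, and this is a bound in the useless direction.

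The paper closes the gap with an extra geometric fact and a more careful routing. Writing $S_\ell^t = \set{\sec_t(j), i_0}$, it first argues WLOG that $d_t(\sec_t(j),j) \leq d_t(i_0,j)$. With this in hand, it never lets $x'_t(\sec_t(j),j)$ go negative; instead, when $z(\prim_t(j)) + z(\sec_t(j)) \geq 1$ it routes $z(\prim_t(j))$ units to $\prim_t(j)$ and the remaining $1 - z(\prim_t(j))$ units to $\sec_t(j)$ (nothing on $i_0$), and when $z(\prim_t(j)) + z(\sec_t(j)) < 1$ it routes the leftover $1 - z(\prim_t(j)) - z(\sec_t(j))$ units to $i_0$. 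In both subcases, the overage of $A_t(z,j)$ over the realized cost works out to a nonnegative multiple of $d_t(i_0,j) - d_t(\sec_t(j),j)$, which is exactly where the WLOG ordering is spent. Your proposal is missing both the WLOG step and the option of routing mass onto $i_0$, and without them the desired inequality does not go through.
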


\begin{proof}
    If $z(\prim_t(j)) = 0$, or if $\prim_t(j) \in S_\ell^t$, then $A_t(z,j) = \sum_{i \in S_\ell^t}d_t(i,j)z(i)$. In these cases, set $x'_t(i,j) = z(i)$ for each $i \in S_\ell^t$, and other $x'_t(i,j)$'s to zero.

    Now consider the case where $\prim_t(j) \notin S_\ell^t$, and $z(\prim_t(j)) > 0$. Set $x'_t(\prim_t(j),j) = z(\prim_t(j))$. By construction of $\Super_t$, we must have $\sec_t(j) \in S_\ell^t$. Let the other facility in $S_\ell^t$ be $i_0$. Note that since $\hat x_t(i_0,j) < \hat x_t(\sec_t(j))$, we can assume that $d_t(\sec_t(j),j) \leq d_t(i_0,j)$; because if not, then swapping the values of $\hat x_t(i_0,j)$ and $\hat x_t(\sec_t(j))$ can only decrease the assignment cost. We use this in both subcases below.
    \begin{itemize}
        \item If $z(S_j^t) \geq 1$, set $x'_t(\sec_t(j),j) = 1-z(\prim_t(j))$, and remaining $x'_t(i,j)$'s to zero. The assignment cost becomes $d_t(\prim_t(j),j)z(\prim_t(j)) + d_t(\sec_t(j),j)(1-z(\prim_t(j)))$. Since $z(S_\ell^t) = 1$, we can write the second term as
        \begin{align*}
            &d_t(\sec_t(j),j)\cdot\paren{z(i_0) + z(\sec_t(j)) - z(\prim_t(j))}\\
            \leq &d_t(i_0,j)z(i_0) + d_t(\sec_t(j),j)z(\sec_t(j)) - d_t(\sec_t(j))z(\prim_t(j))\\
            = &\sum_{i \in S_\ell^t}d_t(i,j)z(i) - d_t(\sec_t(j))z(\prim_t(j))\,.
        \end{align*}
        Adding back the first term, the overall assignment cost is then at most $A_t(z,j)$.
        \item If $z(S_j^t) < 1$, set $x'_t(\sec_t(j),j) = z(\sec_t(j))$, $x'_t(i_0,j) = 1 - z(\prim_t(j)) - z(\sec_t(j))$, and remaining $x'_t(i,j)$'s to zero. Note that, since $z(S_\ell^t) = 1$, $x'_t(i_0,j) = z(i_0) - z(\prim_t(j)) \leq z(i_0)$, as required. We first note that
        \begin{align*}
            &d_t(i_0,j)\cdot\paren{z(i_0) - z(\prim_t(j)) - z(\sec_t(j))}\\
            \leq &d_t(i_0,j)z(i_0) - d_t(\sec_t(j),j)z(\prim_t(j)) - d_t(\sec_t(j),j)z(\sec_t(j))\,.
        \end{align*}
        Adding this to $d_t(\prim_t(j))z(\prim_t(j)) + d_t(\sec_t(j))z(\sec_t(j))$, the assignment cost becomes
        \begin{align*}
            d_t(i_0,j)z(i_0) + \paren{d_t(\prim_t(j),j) - d_t(\sec_t(j),j)}z(\prim_t(j)) \leq A_t(z,j)\,.
        \end{align*}
    \end{itemize}
\end{proof}

\begin{lemma}\label{lem:A-upper-bd}\label{lem:25}
    Given half-integral facility masses $\hat y \in \set{0,\half, 1}^F$, for each $t \in \set{1,2}$ and each $j \in \Reps_t$, $A_t(\hat y, j) \leq 2\hat C_t(j)$.
\end{lemma}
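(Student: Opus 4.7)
The plan is to do a case analysis that mirrors the case split in the definition of $A_t$, using half-integrality and the greedy choice of $\ell$ to collapse both cases to a one-step triangle inequality. First, I would extract structural consequences of \Cref{assn:completeness-again}: since $\hat y \in \set{0,\half}^F$ and, by the splitting assumption, $\hat x_t(i,j) \in \set{0,\hat y(i)}$, each $S_j^t$ consists of exactly two facilities, each carrying $\hat y$-mass $\half$. In particular $\hat C_t(j) = \half\paren{d_t(\prim_t(j),j) + d_t(\sec_t(j),j)}$, and analogously for each $\ell \in \Super_t$; therefore every facility in $S_\ell^t$ lies within $2\hat C_t(\ell)$ of $\ell$, and $d_t(\prim_t(j),j) \leq \hat C_t(j)$. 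Moreover, from the super-representative filtering, $\ell$ was selected when $j$ was still uncovered, so $\hat C_t(\ell) \leq \hat C_t(j)$.

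With this setup, I would handle Case~1 ($\prim_t(j) \in S_\ell^t$) as follows: write $S_\ell^t = \set{\prim_t(j), i_2}$, so that
\[ A_t(\hat y, j) = \half\paren{d_t(\prim_t(j), j) + d_t(i_2, j)}. \]
A triangle inequality through $\prim_t(j)$ gives $d_t(i_2, j) \leq d_t(i_2, \prim_t(j)) + d_t(\prim_t(j), j) \leq 2\hat C_t(\ell) + d_t(\prim_t(j), j)$, so $A_t(\hat y, j) \leq d_t(\prim_t(j), j) + \hat C_t(\ell) \leq 2\hat C_t(j)$.

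For Case~2 ($\prim_t(j) \notin S_\ell^t$), since $j \in \nbr_t(\ell)$, the intersection $S_\ell^t \cap S_j^t$ is nonempty and must therefore contain $\sec_t(j)$. Writing $S_\ell^t = \set{\sec_t(j), i_0}$, the two $\half\, d_t(\sec_t(j),j)$ contributions in $A_t(\hat y, j)$ cancel, leaving
\[ A_t(\hat y, j) = \half\paren{d_t(\prim_t(j), j) + d_t(i_0, j)}. \]
The key step, and the only nontrivial one, is to route the triangle inequality for $d_t(i_0, j)$ through the pivot $\sec_t(j) \in S_\ell^t \cap S_j^t$: $d_t(i_0, j) \leq d_t(i_0, \sec_t(j)) + d_t(\sec_t(j), j) \leq 2\hat C_t(\ell) + d_t(\sec_t(j), j)$. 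Substituting yields $A_t(\hat y, j) \leq \hat C_t(\ell) + \hat C_t(j) \leq 2\hat C_t(j)$.

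The main obstacle I anticipate is avoiding a loose bound in Case~2: a naive triangle inequality that passes through $\ell$ directly (i.e., $d_t(i_0,j) \leq d_t(i_0,\ell)+d_t(\ell,j)$, with the latter further broken through $\sec_t(j)$) double-counts the journey to $\ell$ and yields only a factor of $3$. Exploiting the shared facility $\sec_t(j) \in S_\ell^t \cap S_j^t$ as the pivot, which is guaranteed by the $\nbr_t$-condition together with the case assumption $\prim_t(j) \notin S_\ell^t$, is exactly what brings the constant back down to $2$.
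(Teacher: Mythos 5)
Your proof is correct and takes essentially the same route as the paper's: in both cases the crux is a triangle-inequality chain through $\ell$, and your pivot through $\prim_t(j)$ (Case~1) or $\sec_t(j)$ (Case~2) unwinds to the same chain the paper writes out directly. One small imprecision: the bound $d_t(i_2,\prim_t(j)) \le 2\hat C_t(\ell)$ (resp.\ $d_t(i_0,\sec_t(j)) \le 2\hat C_t(\ell)$) follows from the exact identity $d_t(f_1,\ell)+d_t(f_2,\ell) = 2\hat C_t(\ell)$ for $S_\ell^t = \set{f_1,f_2}$ under half-integrality, not from the weaker observation you stated that each facility of $S_\ell^t$ lies within $2\hat C_t(\ell)$ of $\ell$ (which by itself would only yield $4\hat C_t(\ell)$ and the constant would become $3$).
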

\begin{proof}
    Let $j$ be in $\nbr_t(\ell)$. Via \Cref{assn:completeness-again}, we have $\hat y \in \set{0,\half}^F$.
    
    In the case $\prim_t(j) \in S_\ell^t$, let $i_0$ be the other facility in $S_\ell^t$. Then,
    \begin{align*}
        A_t(\hat y, j) &= \half\paren{d_t(\prim_t(j),j) + d_t(i_0,j)} \leq \half\paren{2d_t(\prim_t(j),j) + d_t(\prim_t(j),\ell) + d_t(i_0,\ell)}\\
        &\leq d_t(\prim_t(j),j) + \hat C_t(\ell) \leq \half\paren{d_t(\prim_t(j),j) + d_t(\sec_t(j),j)} + \hat C_t(\ell)\\
        &= \hat C_t(j) + \hat C_t(\ell) \leq 2\hat C_t(j)\,.
    \end{align*}
    In the case $\prim_t(j) \notin S_\ell^t$, we know $\sec_t(j) \in S_\ell^t$. Once again, let $i_0$ be the other facility in $S_\ell^t$. Then,
    \begin{align*}
        A_t(\hat y, j) &= \half\paren{d_t(\sec_t(j),j) + d_t(i_0,j)} + \half\paren{d_t(\prim_t(j),j) - d_t(\sec_t(j),j)}\\
        &= \half\paren{d_t(i_0,j) + d_t(\prim_t(j),j)}\\
        &\leq \half\paren{d_t(i_0,\ell) + d_t(\ell, \sec_t(j)) + d_t(\sec_t(j),j) + d_t(\prim_t(j),j)}\\
        &= \hat C_t(\ell) + \hat C_t(j) \leq 2\hat C_t(j)\,.
    \end{align*}
\end{proof}

Now, we want an integral $\tilde y \in \set{0,1}^F$ s.t. for each $t \in \set{1,2}$,

$A_t(\tilde y) \leq O(1)\cdot \sum_{j \in \Reps_t}\abs{\child_t(j)}\hat C_t(j)$. We define the following polytope:

\[\calR := \set{z \in \RR_{\geq 0}^F : z(F) \leq k;\,\forall t \in \set{1,2}, \forall \ell \in \Super_t, z(S_\ell^t) = 1}\,.\]

Since the constraints in $\calR$ are over two laminar families, its constraint matrix is totally unimodular. The constant terms in the constraints are all integral, so we have via properties of total unimodularity~\cite{HoffmK1956} that
\begin{observation}\label{obs:R-integral}
    $\calR$ has integral extreme points.
\end{observation}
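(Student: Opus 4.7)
The plan is to verify that the constraint matrix of $\calR$ is totally unimodular (TU), and then combine this with the integrality of the right-hand sides via Hoffman and Kruskal's theorem~\cite{HoffmK1956}. The argument follows the same template as \Cref{clm:P-half-integral}, so I would expect very little novelty in the proof.

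First, I would establish disjointness: for each fixed $t \in \set{1,2}$, the sets $\set{S_\ell^t : \ell \in \Super_t}$ are pairwise disjoint. This is immediate from the filtering procedure that constructs $\Super_t$, since the moment $\ell$ is added to $\Super_t$, every $j \in \Reps_t$ with $S_j^t \cap S_\ell^t \neq \emptyset$ is marked covered and removed from $U$, preventing any later super-representative from sharing a facility with $\ell$. Consequently, after stripping off the identity block contributed by the nonnegativity constraints (which does not affect TU), the constraint matrix of $\calR$ is the incidence matrix of the following two set families on the ground set $F$: $\mathcal{F}_1 = \set{F} \cup \set{S_\ell^1 : \ell \in \Super_1}$ and $\mathcal{F}_2 = \set{S_\ell^2 : \ell \in \Super_2}$. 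Both families are laminar: $\mathcal{F}_1$ because the $S_\ell^1$'s are disjoint subsets of $F$ (so each is nested strictly inside $F$), and $\mathcal{F}_2$ trivially since it is a disjoint collection.

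Next, I would invoke the classical fact (already used in \Cref{clm:P-half-integral}) that the incidence matrix of two laminar families is TU. Since all right-hand sides are integral, namely $k$ for $z(F) \leq k$, $1$ for each equality $z(S_\ell^t) = 1$, and $0$ for non-negativity, Hoffman and Kruskal's theorem then guarantees that every extreme point of $\calR$ is integral.

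The only point that requires a moment of care is ensuring that the coarsening constraint $z(F) \leq k$ can be absorbed into one of the two laminar families without destroying laminarity; this is automatic because $F$ contains every $S_\ell^t$ as a subset, so adjoining $\set{F}$ to $\mathcal{F}_1$ preserves the laminar structure. I do not anticipate any real obstacle: the whole observation is essentially a restatement of \Cref{clm:P-half-integral} in which the role played there by the auxiliary $\lambda_t$-variables (and their identity columns) is taken over here by nonnegativity slacks, which contribute an analogous identity block and hence do not disturb TU.
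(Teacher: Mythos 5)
Your proof is correct and follows the same approach as the paper: identify the two laminar families (adjoining the coarsening set $F$ to one of them), invoke total unimodularity of the incidence matrix of two laminar families, note integrality of the right-hand sides, and apply Hoffman–Kruskal. You merely spell out a few details (disjointness of the $S_\ell^t$'s within each $t$, and that adjoining $F$ preserves laminarity) that the paper's one-line proof elides.
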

The remaining algorithm is analogous to the previous phase.
\begin{lemma}
	There is a polynomial time algorithm which returns an integral solution $(\tilde x_1, \tilde x_2, \tilde y)$ of \ref{kmed-cp:obj} s.t. $\forall t \in \set{1,2}$,	 $\sum_{j \in \Reps_t}\abs{\child_t(j)}\sum_{i \in F}d_t(i,j)\tilde x_t(i,j) \leq 24 \cdot \sum_{v \in C}C_t(v)$.
\end{lemma}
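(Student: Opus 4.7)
The plan is to mirror exactly the pipeline used to produce the half-integral solution in Lemma~\ref{lem:half-integral}, swapping in the polytope $\calR$ for $\calP$ and the linear proxies $A_t$ for the $W_t$. The four required ingredients are already in place: a polytope with integral extreme points (Observation~\ref{obs:R-integral}), a linear upper bound $A_t$ on the true cost (Lemma~\ref{lem:24}), a bound $A_t(\hat y,j)\leq 2\hat C_t(j)$ relating the proxy at the half-integral solution to the actual half-integral cost (Lemma~\ref{lem:25}), and a generic rounding tool (Claim~\ref{clm:T-sample}) that approximately preserves \emph{two} linear objectives simultaneously when sampling an extreme point.

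First I would verify that the half-integral solution $\hat y$ of Lemma~\ref{lem:half-integral} lies in $\calR$: the budget constraint $\hat y(F)\leq k$ is inherited from \eqref{kmed-cp:1-k}, and the equality $\hat y(S_\ell^t)=1$ for every $t\in\set{1,2}$ and every $\ell\in\Super_t$ is precisely the observation preceding Lemma~\ref{lem:A-proxy} (using that $\hat x_t(i,j)>0$ only on $i\in S_j^t$ with $\hat x_t(i,j)=\hat y(i)$ by \Cref{assn:completeness-again}, together with \eqref{kmed-cp:3-cov}). Then I would invoke Claim~\ref{clm:T-sample} on the polytope $\calR$ with the linear objectives $A_1,A_2$ and the feasible point $\hat y$. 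Since Claim~\ref{clm:T-sample} is stated generically (it uses only Carath\'eodory plus a union bound), the same proof applies verbatim and yields, in polynomial time, an extreme point $\tilde y\in\calR$ satisfying $A_t(\tilde y)\leq 2A_t(\hat y)$ for $t\in\set{1,2}$. By Observation~\ref{obs:R-integral}, $\tilde y\in\set{0,1}^F$ and it opens at most $k$ facilities.

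Next I would chain the bounds. By Lemma~\ref{lem:25}, $A_t(\hat y,j)\leq 2\hat C_t(j)$ for every $j\in\Reps_t$, so
\[
A_t(\hat y)\;=\;\sum_{j\in\Reps_t}|\child_t(j)|\,A_t(\hat y,j)\;\leq\;2\sum_{j\in\Reps_t}|\child_t(j)|\,\hat C_t(j)\;\leq\;12\sum_{v\in C}C_t(v),
\]
where the last inequality is exactly the conclusion of Lemma~\ref{lem:half-integral}. Combined with the sampling step, this gives $A_t(\tilde y)\leq 24\sum_{v\in C}C_t(v)$. Finally, because $\tilde y$ is integral and feasible in $\calR$, it satisfies the hypothesis $z(S_\ell^t)=1$ of Lemma~\ref{lem:24}, so for each $j\in\Reps_t$ I can extract assignments $\set{\tilde x_t(i,j)\leq\tilde y(i)}_{i\in F}$ with $\sum_{i\in F}d_t(i,j)\tilde x_t(i,j)\leq A_t(\tilde y,j)$. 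Weighting by $|\child_t(j)|$ and summing over $j\in\Reps_t$ yields the claimed bound of $24\sum_{v\in C}C_t(v)$.

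There is no real obstacle here beyond bookkeeping: the argument is a structural copy of the half-integral rounding, and all numerical losses (factor $2$ from Lemma~\ref{lem:25}, factor $2$ from Claim~\ref{clm:T-sample}, factor $6$ from Lemma~\ref{lem:half-integral}) multiply cleanly to $24$. The only subtlety worth double-checking is that Lemma~\ref{lem:24} genuinely requires $z(S_\ell^t)=1$ on \emph{super-representatives} rather than on all representatives; this is why the filtering to $\Super_t$ was carried out before defining $\calR$, and it is also why $\calR$'s constraint matrix is the incidence of two (disjoint) partitions and hence totally unimodular, so that Observation~\ref{obs:R-integral} delivers integrality rather than merely half-integrality. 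Combined with Lemma~\ref{lem:filtering-4} applied to the returned solution $S:=\set{i\in F:\tilde y(i)=1}$ with $\alpha=24$, this yields the $28$-approximation asserted in \Cref{thm:k-med:const-appx}.
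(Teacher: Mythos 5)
Your proof is correct and follows essentially the same route as the paper: invoke Claim~\ref{clm:T-sample} on $\calR$ with objectives $A_1,A_2$ starting from $\hat y$, use Observation~\ref{obs:R-integral} for integrality, and chain Lemmas~\ref{lem:24},~\ref{lem:25}, and~\ref{lem:half-integral} to get the factor $2\cdot 2\cdot 6 = 24$. The only small addition you make is explicitly verifying $\hat y \in \calR$, which the paper leaves implicit; your verification is correct.
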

\begin{proof}
    Let $\hat y$ be the half-integral solution obtained via \Cref{lem:half-integral}. 
    
    Now for a $j \in \Reps_t$, by \Cref{clm:T-sample} and \Cref{obs:R-integral}, we can obtain an integral $\tilde y \in \calR$ such that $A_1(\tilde y) \leq 2A_1(\hat y)$ and $A_2(\tilde y) \leq 2A_2(\hat y)$. This is because, in \Cref{clm:T-sample}, we can plug in $A_1, A_2$ to be our linear functions, $\calR$ to be our polytope, $\hat y$ to be our starting point $p$, and $\tilde y$ to be the sampled extreme point.
    So by \Cref{lem:A-proxy,lem:A-upper-bd}, we can construct $\tilde x_t$'s s.t., for any $\eps' > 0$,
    \begin{align*}
        &\quad\sum_{j \in \Reps_t}\abs{\child_t(j)}\sum_{i \in F}d_t(i,j)\tilde x_t(i,j) \leq A_t(\tilde y) \leq 2A_t(\hat y)\\
        &\leq 2\cdot 2 \sum_{j \in C}\abs{\child_t(j)}\hat C_t(j) = 4 \sum_{j \in \Reps_t}\abs{\child_t(j)}\sum_{i \in F}d_t(i,j)\hat x_t(i,j)\\
        &\leq 4\cdot 6 \cdot \sum_{v \in C}C_t(v) \leq 24 \cdot \sum_{v \in C}C_t(v)
    \end{align*}

    where the penultimate step is from \Cref{lem:half-integral}.
\end{proof}

This, combined with \Cref{lem:filtering-4}, means that we obtain an integral solution

$S := \set{i \in F: \tilde y(i) = 1}$ such that $\cost_1(d_1; S)  + \cost(d_2;S) \leq 28 \cdot \lp$. This proves \Cref{thm:k-med:const-appx}.

\section{\texorpdfstring{FPT($k,T$) constant factor approximation}{FPT(k,T) constant factor approximation}}
\label{sec:fpt-3-appx}

\begin{theorem}\label{thm:3-appx}
    For any homogenous aggregator $\Psi$, any $z \in \NN \cup \set{\infty}$, and any constant error parameter $\e > 0$, 
    there is an algorithm that computes a $(3+\eps)$-approximate solution to 
    $\Psi$-aggregate $(k,z)$-clustering
    in time $ O(\eps^{-1}k)^{O(kT)} \cdot 2^{O(k^2T^2)}\cdot \poly(n)$.
\end{theorem}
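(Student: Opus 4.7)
The plan is to extend the $T=2$ $k$-supplier algorithm of \Cref{sec:T=2:ksupp} by replacing its polynomial-time matroid-intersection step (which crucially requires $T\le 2$) with brute-force enumeration of how the $k$ facilities of $\OPT$ are distributed among the per-metric filtering partitions. This costs a factor of $(k+1)^{kT}$ but handles any $T$.

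Concretely for $k$-supplier I would first guess each $\opt_t$ within a $(1+\e)$ factor: start from a $\poly(n)$-time constant-factor approximation for single-metric $k$-supplier in each $t$, then discretize the range to a geometric grid of $(1/\e)^{O(1)}$ values per metric (using homogeneity of $\Psi$ to bound each $\opt_t$ by $O(\opt)$). This yields about $(1/\e)^{O(T)}$ threshold tuples. For each tuple, run the Hochbaum-Shmoys filtering of \Cref{sec:T=2:ksupp} in each metric to obtain representatives $R_t \subseteq C$ of size at most $k$ such that the balls $\{B_t(j, \opt_t) \cap F\}_{j\in R_t}$ are pairwise disjoint and each contains a distinct OPT facility. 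Next, for each $(i,t) \in [k]\times[T]$, guess $\phi(i,t) \in R_t\cup\{\star\}$: the representative in $R_t$ whose ball contains the $i$-th OPT facility (with $\star$ allowed for slack). This gives $(k+1)^{kT}$ guesses. For each guess $\phi$ and each $i \in [k]$, search $F$ for a facility $c_i \in \bigcap_{t : \phi(i,t)\neq\star} B_t(\phi(i,t), \opt_t)$; if such $c_i$'s exist for all $i$, set $\ALG = \{c_1,\ldots, c_k\}$. By triangle inequality, any client $v$ with representative $j_v^t \in R_t$ in metric $t$ satisfies $d_t(v, \ALG) \le d_t(v, j_v^t) + d_t(j_v^t, c_{i}) \le 2\opt_t + \opt_t = 3\opt_t$, so $\Psi\bigl(\cost_\infty(d_t;\ALG):t\in[T]\bigr) \le (3+\e)\opt$.

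For general $(k,z)$-clustering, the plan is to use a cost-weighted filtering in the spirit of \Cref{sec:k-med}, where each representative $j \in R_t$ carries its own radius $r_{j,t}$ corresponding to a guessed per-leader cost contribution $C_t(j)$ on a geometric grid of $O(\log(k/\e)/\e)$ values. The correspondence-enumeration step is identical, and the analysis follows by bounded-hop triangle inequality as in \Cref{rem:19}, giving a factor $(3+\e)^z$ in the $z$-th power cost and hence $(3+\e)$ after taking $z$-th roots. The main obstacle is arranging this filtering purely combinatorially without an LP relaxation (since FPT enumeration cannot afford the LP), while still guaranteeing that $|R_t| \le k$ and that each ``ball'' of radius $r_{j,t}$ contains a distinct OPT facility. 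The extra per-$(i,t)$ guesses of the $r_{j,t}$ levels, together with guessing how leaders of different OPT clusters might coincide across metrics, account for much of the $2^{O(k^2T^2)}$ factor in the stated runtime.
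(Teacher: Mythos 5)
Your $k$-supplier plan is essentially the paper's own ``simpler algorithm in a special case'': guess each $\opt_t$ on a geometric grid, run a greedy filtering (you use Hochbaum–Shmoys, the paper uses Plesník --- morally the same) to get per-metric representatives with disjoint balls, enumerate the assignment of OPT facilities to representative balls, and pick a point from the intersection. That part is sound and matches the paper up to cosmetic differences.

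The gap is in the extension to general $(k,z)$-clustering, and it is one you partially flag yourself. Your proposed ``cost-weighted filtering'' needs per-client cost values $C_t(v)$ to define the radii and to argue that each representative ball captures a distinct OPT facility; in the paper's $T=2$ $k$-median treatment those values come from an LP solution, which you cannot afford here (and as you note, cannot substitute for). Your suggestion to guess ``per-leader cost contributions'' on a geometric grid does not by itself yield a filtering with $|R_t|\le k$ and disjoint balls each containing a distinct OPT center --- greedy filtering with heterogeneous radii does not give the Hochbaum–Shmoys disjointness/coverage guarantees, and there is no argument that the number of distinct radii or the number of leaders is bounded. Even granting a working filtering, LP-rounding-style analyses typically land well above $3+\eps$ (the paper's $T=2$ $k$-median gives $28$), so the claimed $3+\eps$ for arbitrary $z$ would need a genuinely different argument.

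The paper's actual proof avoids this entirely by going through coresets: for each metric $t$, replace the client set by a coreset $C_t$ of size $O(\eps^{-2}k\log n)$ \cite{feldman2011unified}, then for every $(i,t)$ guess the closest coreset client $\ell_{i,t}$ to the unknown center $o_i$ and guess the distance $r_{i,t}=d_t(o_i,\ell_{i,t})$ up to a power of $(1+\eps)$, and finally pick any $f_i$ in $\bigcap_t B_t(\ell_{i,t},r_{i,t})\cap F$ (nonempty since $o_i$ lies in it). Because $\ell_{i,t}$ is the \emph{closest} coreset client to $o_i$, one has $r_{i,t}\le d_t(v,o_i)$ for every $v$ in $o_i$'s cluster, which gives the clean per-client bound $d_t(v,f_i)\le (3+O(\eps))d_t(v,o_i)$. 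This pointwise $3+O(\eps)$ factor is exactly what makes the same argument work verbatim for every $z\in\NN\cup\{\infty\}$ and every homogeneous $\Psi$, and the coreset bound is what makes the $\ell_{i,t}$ enumeration FPT. If you want to salvage a filtering-style argument, you would need to supply the piece that coresets provide for free: a small, combinatorially computed candidate set of representatives with per-representative radii and the guarantee that each distinct OPT center is captured in a distinct ball. As written, that step is missing.
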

\begin{proof}
We adapt an existing FPT $3$-approximation~\cite[Section 2.1]{cohen2019tight} for standard $(k,z)$-clustering (i.e. $T=1$).
Fix an error parameter $\eps > 0$. 

In our aggregate cluster instance, let $C$ denote the client set and let $F$ denote the facilities. For each scenario $t \in [T]$ (defined by a metric $d_t$ and a client weights $w_t$), we first compute a \emph{coreset} $C_t \subseteq C$
of size $O(\eps^{-2}k \log n)$ \cite{feldman2011unified}; that is, obtain a new client set $C_t$ and new weights $w'_t$ on $C_t$ with the property that for any subset of facilities $S \subseteq F$,
\[\sum_{u \in C_t}w'_t(u)\cdot d_t(u,S) \in (1\pm \eps) \cdot \sum_{v \in C}w_t(v)\cdot d_t(v,S)\,.\]
Henceforth we work with the coresets $C_t$ rather than the entire client set $C$ in each metric.

Let $\OPT \subseteq F$ be the optimal solution to the aggregate clustering instance, with (unknown) centers $o_1,\dots,o_k$. For an $i \in [k]$ and $t \in [T]$, let $C_{i,t}$ be the cluster served by $o_i$ under $d_t$. Define the client $\ell_{i,t}$ to be the closest client (in the coreset $C_t$) to $o_i$, that is $d_t(o_i,C_{i,t}) = d_t(o_i,\ell_{i,t})$, and define $r_{i,t} \coloneqq d_t(o_i,\ell_{i,t})$.

For each scenario $t \in [T]$ and index $i \in [k]$, we can guess (by enumeration) the client $\ell_{i,t}$ in time $\genfrac(){0pt}{1}{O(\eps^{-2}k \log n)}k$.
We can't guess $r_{i,t}$ exactly, but we can guess it up to the closest power of $(1+\e)$: we assume WLOG (up to a $1+\e$ loss in the approximation factor) that the aspect ratio of $d_t$ is $\poly(n)$, and so bucketing pairwise distances of $d_t$ by powers of $(1+\eps)$ reduces the number of possible distances to $O(\log_{1+\e} n) = O(\e^{-1}\log n)$. Overall, in time $(\genfrac(){0pt}{1}{O(\eps^{-2}k \log n)}k\cdot O(\eps^{-1}\log n)^k)^T$, we can guess each $\ell_{i,t}$ and $r_{i,t}$. By \Cref{fact:loglog-trick}, this overhead is $O(\eps^{-1}k)^{O(kT)} \cdot 2^{O(k^2T^2)}\cdot \poly(n)$ $=O_{\eps,k,T}(1)\cdot \poly(n)$.

Suppose our guesses are correct. For each $i \in [k]$, consider the set $F_i$ containing the facilities that are within distance $r_{t,t}$ of $\ell_{i,t}$ in every metric $d_t$, i.e. $F_i := \cap_{t=1}^T B_t(\ell_{i,t}, r_{i,t}) \cap F$. If our guesses are correct, then $F_i$ is non-empty since $o_i$ lies in this intersection. Let our solution $\ALG$ consist of an arbitrary $f_i \in F_i$ for each $i \in [k]$. For a client $v$ in the cluster $C_{i,t}$, we have $d_t(o_i, \ell_{i,t}) \leq d_t(v,o_i)$ by definition of $\ell_{i,t}$; and also $d_t(\ell_{i,t},f_i) \leq (1+O(\e)) \cdot r_{i,t} = (1+O(\e)) \cdot d_t(\ell_{i,t},o_i)$. Using these, we get
\begin{align*}
    d_t(v,\ALG) & \leq d_t(v,f_i) \leq d_t(v,o_i) + d_t(o_i, \ell_{i,t}) + d_t(\ell_{i,t}, f_i)\\
    & \leq d_t(v,o_i) + (2 + O(\e))\cdot d_t(o_i, \ell_{i,t}) \leq (3+ O(\e))\cdot d_t(v,o_i)
\end{align*}
for every $v \in C$ and $t \in [T]$. Since the $(k,z)$-clustering cost and the aggregator $\Psi$ are both homogeneous functions, this yields a $(3 + O(\e))$-approximation.

Rescaling $\e$ by a constant yields the desired result.
\end{proof}

\subsection*{Simpler algorithm in a special case}

In the special case of $k$-supplier, we have a simpler $(3+\eps)$-algorithm that doesn't utilize coresets, and runs in $O(\eps^{-O(T^2)})\cdot O(k)^{O(kT)} \cdot \poly(n)$ time. This algorithm does not require coresets for its enumeration component.

\textit{Guessing radii.} We use the same bucketing as above, which contributes a $(1+\eps)$ factor and allows us to assume that there are $O\paren{\frac{\log n}\eps}$ distances per metric.

Now let $\opt_t := \cost(d_t;S)$ for each $t \in [T]$, and note that each $\opt_t$ is some $d_t(u,v)$ times either $w_t(v)$ or $w_t(u)$. So with $O\paren{\frac{\log n}\eps}^T$ overhead, which due to \Cref{fact:loglog-trick} is an $\eps^{-O(T^2)}\cdot \poly(n)$ overhead, we can guess $\opt_t$ in each metric. Then for each $v \in V$ and $t \in [T]$, let $r_t(v) := \opt_t/w_t(v)$ be the radius within which, under $d_t$, the client $v$ must be served.

\textit{Guessing clusters.} For each $t \in [T]$, we run Plesn\'ik's~\cite{Ples1987} algorithm in each metric $d_t$ with radii $\set{r_t(v)}_{v \in V}$. These Plesn\'ik subroutines yield, for each $t \in [T]$:
\begin{itemize}
    \item Clusters $C_{1,t},\dots,C_{k'_t,t} \subseteq C$ for some $k'_t \leq k$;
    \item A center $h_{i,t} \in C_{i,t}$ for each cluster, such that $C_{i,t} \subseteq B_t(h_{i,t}, 2r_t(h_{i,t}))$ and $\forall v \in C_{i,t}$, $r_t(v) \geq r_t(h_{i,t})$;
    \item For any two distinct clusters $C_{i,t}$ and $C_{j,t}$, $d_t(h_{i,t},h_{j,t}) > r_t(h_{i,t}) + r_t(h_{j,t})$.
\end{itemize}

Thereafter, for each $t \in [T]$, we augment the clusters above with empty clusters $C_{k'_t+1,t},\dots,C_{k+1}$; and call the resulting partition $\calC_t$. Note that each $\calC_t$ has at most $k$ non-empty clusters.

Suppose the (unknown) centers in $\OPT$ are $o_1,\dots, o_k$. With $(k+1)!^T$ overhead, we can guess one cluster per metric whose center is served by $o_i$; and rename WLOG so that this cluster is called $C_{i,t}$. That is, for each $i \in [k]$ and $t \in [T]$, we guess $C_{i,t} \in \calC_t$ such that $d_t(h_{i,t},\OPT) = d_t(h_{i,t},o_i)$. This is possible to guess in the desired runtime because any nonempty $C_{i,t}$ is unique in $\calC_t$; to see this, suppose not, and consider the distinct 
centers $h_{i,t}, h_{j,t}$ served by $o_i$ under $d_t$. Then $d_t(h_{i,t}, h_{j,t}) \leq d_t(h_{i,t},o_i) + d_t(h_{j,t}, o_i) \leq r_t(h_{i,t}) + r_t(h_{j,t})$, violating one of the Plesn\'ik properties listed above.

\textit{Picking centers.} For each $i \in [k]$, let $F_i = F \cap \bigcap_{t \in [T]: C_{i,t} \neq \emptyset}B_t(h_{i,t}, r_t(h_{i,t}))$. Note that for each $t \in [T]$ and $v \in V$, we have $r_t(v) \geq \cost(d_t;\OPT) / w_t(v)$. In particular, this holds for each $h_{i,t}$, showing that each $F_i$ is non-empty. So let our answer $\ALG$ consist of an arbitrary $c_i$ from each $F_i$.

\textbf{\textbf{$3$-approximation.}} Fix a $t \in [T]$ and a $v \in C_{i,t}$. Then by properties of Plesn\'ik's algorithm listed above, 
and our construction of $\ALG$, $d_t(v,\ALG) \leq d_t(v,c_i) \leq d_t(v,h_{i,t}) + d_t(h_{i,t},c_i) \leq 3r_t(v) \implies w_t(v)\cdot d_t(v,c_i) \leq 3\opt_t$. The homogeneity of $\Psi$ yields the desired approximation.

The algorithm runs in time $O(\eps^{-O(T^2)})\cdot O(k)^{O(kT)} \cdot \poly(n) = O_{\eps,k,T}(1)\cdot\poly(n)$.

\section{FPT Approximations in Well-structured Metrics}
\label{sec:structured}

\subsection{Metrics with Bounded Scatter Dimension}\label{sec:epas}

In this section, we study aggregate clustering in the case where all the $T$ metrics have bounded \emph{$\e$-scatter dimension}. This notion of metric dimension was originally introduced by \cite{ABB+23} to obtain $O_{k,\e}(1)\cdot\poly(n)$-time $(1+\eps)$-approximation algorithms (ie, efficient parameterized approximation schemes, or EPASes for short) for a large class of center-based $k$-clustering problems, for example the $T=1$ case of $(k,z)$-clustering.
\begin{definition}[$\e$-scattering, $\e$-scatter dimension{~\cite[Definition IV.1]{ABB+23}}]
    An \emph{$\e$-scattering} in a metric space $M = (C \cup F, d)$ is a sequence $(x_1,p_1),(x_2,p_2),\dots,(x_l,p_l)$ s.t.
    \begin{itemize}
        \item $\forall i \in [l]$, $x_i \in F$ and $p_i \in C$;
        \item \textnormal{[covering]} $\forall 1 \leq j < i \leq l$, $d(x_i,p_j) \leq 1$; and
        \item \textnormal{[$\e$-refutation]} $\forall i \in [l]$, $d(x_i,p_i) > 1+\e$.
    \end{itemize}
    The $\e$-scatter dimension of a family of metric spaces $\calM$ is the maximum length of any $\e$-scatter of an $M \in \calM$.
\end{definition}

A key component of the ($T=1$) EPASes of \cite{ABB+23} is an EPAS for the special case of unweighted $k$-supplier. We recall their algorithm, following their presentation. At a high level, they maintain a $k$-sized solution $X \subseteq F$, and while this solution is not a $(1+\e)$-approximation, clients with high cost are identified to guide a recomputation of $X$; the bounded $\e$-scatter dimension is used to show a bound on the number of recomputations needed.
In more detail, $X$ is initialized arbitrarily as $X = \set{x_1,\dots,x_k}$. Empty buckets $B_1,\dots,B_k$ are also initialized, corresponding to clusters that would eventually be served by $x_1,\dots,x_k$ respectively. We assume we know the cost $\opt$ of the optimal solution\footnote{We can approximate it by bucketing by powers of $1+\e$ and guessing over the $O(\e^{-1}\log n)$ options.}.
While there exists a client violating the desired $(1+\e)$-approximation of $\opt$---i.e. while $\exists v \in C : d(v,X) > (1+\e)\opt$---the algorithm guesses the index $i \in [k]$ of the optimal cluster that serves $v$ in $\OPT$. The client $v$ is then placed in the bucket $B_i$, and $x_i$ is recomputed so that $\forall u \in B_i, d(u,x_i) \leq \opt$. If such an $x_i$ cannot be obtained, then the algorithm asserts failure and restarts from the initialization; but with positive probability $1/k$ we don't fail (because with probability $1/k$ we guess $i$ correctly).
Now here is the key insight, that each bucket $B_i$ can't grow too large. Indeed, consider the sequence of requests $(v_i^{(1)}, \ldots, v_k^{(\ell)})$ added to $B_i$ over time, and consider the sequence of candidate locations for $x_i$ over time, $(x_i^{(1)}, \ldots, x_i^{(\ell)})$. These two sequences form an $\e$-scatter (after we normalize distances so $\opt = 1$). In particular, if the $\e$-scatter dimension of $d$ is upper-bounded by some $\lambda(\e)$, then every bucket $B_i$ only grows to size $\lambda(\e)$ until we either assert success or failure. As there there are $k$ buckets overall, each iteration of the algorithm requires at most $k\lambda(\e)$ random choices each of which is correct with probability $1/k$; so with constant probability, we successfully find a solution after $O_{\e, k}(1)$ re-initializations.

For our aggregate clustering problem, we generalize the above. Rather than guessing a single partition of the clients into clusters served by $x_1, \ldots, x_k$, we need to guess $T$ different partitions, one for each metric: we now have buckets $B_{i,t}$ for each $i \in [k], t \in [T]$, and we separately guess $\opt_t = \cost(d_t;\OPT)$ for each $t$.
These metric-wise buckets allows us to mostly decouple the different metrics: when we find a violation in metric $d_t$, i.e. get $\cost(d_t;X) > (1+\Theta(\e))\opt_t$, we select a random index $i\in[k]$ and add to bucket $B_{i,t}$, just as is done for the $T=1$ case. Only when recomputing $x_i$ do we take the other metrics into account; we choose $x_i$ to obey the constraints of all the buckets $B_{i,t}$ across all $t\in[T]$, that is $\forall t\in[T] \forall u \in B_{i,t}, \dist_t(u,x_i) \le \opt_t$.
Observe that each bucket $B_{i,t}$ still has bounded size, as it provides witness of an $\e$-scatter in metric $d_t$. To be precise, if one considers the sequence of requests $(v_i^{(1)}, \ldots, v_i^{(\ell}))$ in $B_{i,t}$ and the sequence of candidate positions for $x_i$ \emph{at the moment immediately before request $v_i^{j}$ was added}\footnote{We remark that, unlike the $T=1$ case, the center $x_i$ could take on more than $\ell$ candidate positions, because the the position of $x_i$ could be changed due to a request in any bucket $B_{i,t'}$ for $t' \in [T]$; nevertheless, even if $x_i$ is changed due to a different bucket $B_{i,t'}$ we still guarantee that $x_i$ covers requests in $B_{i,t}$, and so one can construct an $\e$-scatter.}, $(x_i^{(1)}, \ldots, x_i^{(\ell)})$, then these sequences form an $\e$-scatter for $d_t$.
In particular, each iteration of the algorithm makes $kT\lambda(\e)$ random choices, and so it succeeds with constant probability after $O_{\e, k, T}(1)$ rounds.

\begin{theorem}
\label{thm:epas-basic}
    When all $T$ metrics have $\e$-scatter dimension at most $\lambda(\e)$, for any homogeneous aggregator $\Psi$ there is an algorithm for $\Psi$-aggregate (unweighted) $k$-supplier runs in time $O_{\eps,k,T}(1)\cdot\poly(n)$, and produces a $(1+\e)$-approximation with constant probability.
\end{theorem}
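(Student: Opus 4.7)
The plan is to follow the outline sketched in the paragraphs immediately before the theorem statement, generalizing the iterative ``request-and-recompute'' scheme of Abbasi \etal~\cite{ABB+23} from a single metric to $T$ simultaneous metrics by maintaining metric-indexed buckets. As a preprocessing step, I would guess the target values $\opt_t\coloneqq\cost_\infty(d_t;\OPT)$ for each $t$: since every $k$-supplier cost equals a pairwise distance in its metric, rounding every $d_t$-distance to the nearest power of $1+\e$ leaves only $O(\e^{-1}\log n)$ candidate values per metric, and after paying an extra $(1+\e)$ factor I can enumerate the $T$-tuple of estimates $(\widetilde\opt_1,\dots,\widetilde\opt_T)$. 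The $O(\e^{-1}\log n)^T$ overhead collapses to $f(T,\e)\cdot\poly(n)$ via the $\log$-$\log$ trick already used in \Cref{thm:3-appx}, so I henceforth assume these guesses are correct.

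Next I would run the following iterative procedure. Initialize an arbitrary $X=\{x_1,\dots,x_k\}\subseteq F$ and empty buckets $B_{i,t}$ for $(i,t)\in[k]\times[T]$. While some $t\in[T]$ and $v\in C$ witness $d_t(v,X)>(1+\e)\widetilde\opt_t$, sample $i\in[k]$ uniformly at random, insert $v$ into $B_{i,t}$, and recompute $x_i$ to be any facility satisfying $d_{t'}(u,x_i)\le\widetilde\opt_{t'}$ for every $t'\in[T]$ and $u\in B_{i,t'}$; if none exists, declare failure and restart. On the ``lucky'' branch where every sampled $i$ coincides with the index of the $\OPT$-cluster that serves the requested $v$ in metric $d_t$, the corresponding optimal center is always a feasible choice for $x_i$, so the recomputation succeeds.

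The key analytical step is that each bucket stays small. If $v^{(1)},\dots,v^{(\ell)}$ are the clients inserted into $B_{i,t}$ in order and $x_i^{(j)}$ is the position of $x_i$ immediately before $v^{(j)}$ is added, then after rescaling $d_t$ by $1/\widetilde\opt_t$ the pairs $(x_i^{(j)},v^{(j)})_{j=1}^\ell$ form an $\e$-scattering of $d_t$: the covering property $d_t(x_i^{(j)},v^{(j')})\le 1$ for $j'<j$ holds because every recomputation explicitly forces $x_i$ to stay within $\widetilde\opt_t$ of every client already in $B_{i,t}$, while $\e$-refutation $d_t(x_i^{(j)},v^{(j)})>1+\e$ holds because $v^{(j)}$ was added as a violator. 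Bounded $\e$-scatter dimension therefore gives $|B_{i,t}|\le\lambda(\e)$, so the loop terminates after at most $kT\lambda(\e)$ insertions. Each insertion uses a single uniform $k$-choice, so the ``lucky'' branch occurs with probability at least $k^{-kT\lambda(\e)}$, and $k^{O(kT\lambda(\e))}$ restarts give constant success probability; each iteration costs only $\poly(n)$ (violation detection plus a bounded-size ball-intersection feasibility test for $x_i$). Upon termination $\cost_\infty(d_t;X)\le(1+\e)\widetilde\opt_t$ for every $t$, and the positive homogeneity and monotonicity of $\Psi$ on $\RR^T_{\ge 0}$ (which holds for norms) yield $\Psi(\cost_\infty(d_t;X):t\in[T])\le(1+\e)^2\Psi(\opt_1,\dots,\opt_T)$; rescaling $\e$ closes the $(1+\e)$-approximation claim.

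The main obstacle I expect is making the scatter-dimension bookkeeping rigorous in the multi-metric regime. Unlike $T=1$, the center $x_i$ can move in response to violations drawn from some $B_{i,t'}$ with $t'\ne t$, so the sequence of physical positions of $x_i$ is generally longer than $\ell=|B_{i,t}|$. The resolution is to sample positions of $x_i$ only at the instants just before clients are inserted into $B_{i,t}$ and to check that any intervening recomputation triggered by another metric preserves the invariant $d_t(u,x_i)\le\widetilde\opt_t$ for $u\in B_{i,t}$, which it does because the feasibility constraint enforces this bound simultaneously across all $t'$.
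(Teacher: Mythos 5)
Your proposal follows essentially the same approach as the paper's: metric-indexed buckets $B_{i,t}$, guessing $\opt_t$ via geometric bucketing, recomputing $x_i$ to satisfy all bucket constraints across all metrics, and bounding each bucket by the $\e$-scatter dimension of $d_t$. You even anticipate and resolve the same subtle point the paper flags in a footnote — that $x_i$ may move due to violations in other metrics $t'\ne t$, but recording the position of $x_i$ only at insertion times into $B_{i,t}$ still yields a valid $\e$-scatter because the feasibility constraint preserves coverage of all of $B_{i,t}$ simultaneously.
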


When $T=1$, the $k$-supplier EPAS generalizes~\cite{ABB+23} to weighted $k$-center (by initializing with the help of a $O(1)$-approximation), then to weighted $k$-median (by a random sampling argument), and then to more general $k$-clustering problems. We adapt some of these ideas to the aggregate setting, and prove the following theorem.

\begin{restatable}{theorem}{epas}
\label{thm:epas}
    When all $T$ metrics have $\e$-scatter dimensions upper-bounded by some $\lambda(\e)$, for any homogeneous aggregator $\Psi$ there is an algorithm for $\Psi$-aggregate $k$-median that runs in time $O_{\eps,k,T}(1)\cdot\poly(n)$, and produces $(1+\e)$-approximations with constant probability.
\end{restatable}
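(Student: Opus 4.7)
The plan is to extend \Cref{thm:epas-basic} from $k$-supplier to $k$-median by following the $T=1$ strategy of \cite{ABB+23}: reduce $k$-median to an aggregate weighted $k$-center instance on a small \emph{witness} set via random sampling, then apply a radius-constrained extension of \Cref{thm:epas-basic}. First I would invoke \Cref{thm:3-appx} to obtain, in $O_{k,T}(1)\cdot\poly(n)$ time, an initial solution $\tilde X$ with $\cost_1(d_t;\tilde X)=O(1)\cdot\opt_t$ for each $t$. Bucketing distances in powers of $(1+\eps)$ together with the log-log trick then lets us guess each $\opt_t$ within a $(1+\eps)$ factor at $O_{\eps,T}(1)\cdot\poly(n)$ overhead.

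Next, independently for each $t\in[T]$, sample a set $W_t\subseteq C$ of size $N=O_{\eps,k,T}(1)$, with each $v$ drawn with probability proportional to $w_t(v)\cdot d_t(v,\tilde X)$ and carrying an inverse-probability weight $\hat w_t(v)$. A standard $k$-median sampling argument (as used by \cite{ABB+23} for $T=1$), combined with a $T$-fold union bound, yields with constant probability a simultaneous witness property: any $X$ satisfying $\sum_{v\in W_t}\hat w_t(v)\cdot d_t(v,X)\le(1+\eps)\opt_t$ for every $t$ also has $\cost_1(d_t;X)\le(1+O(\eps))\opt_t$ for every $t$. Setting $W\coloneqq\bigcup_t W_t$, we then guess for each $w\in W$ and each $t\in[T]$ the value $c_{w,t}\coloneqq d_t(w,\OPT)$ within a $(1+\eps)$ factor, at total $O(\eps^{-1}\log n)^{O(TN)}=O_{\eps,k,T}(1)\cdot\poly(n)$ overhead by a further application of the log-log trick.

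What remains is a radius-constrained aggregate $k$-center instance over $W$, which we solve by adapting the iteration of \Cref{thm:epas-basic}. Maintain a candidate $X=\{x_1,\dots,x_k\}$ and buckets $B_{i,t}$ for $(i,t)\in[k]\times[T]$, initially empty. While some $w\in W_t$ violates $d_t(w,X)\le(1+\eps)c_{w,t}$, guess $i\in[k]$ uniformly, add $w$ to $B_{i,t}$, and recompute $x_i\in F$ so that $d_{t'}(u,x_i)\le(1+\eps)c_{u,t'}$ for every $t'\in[T]$ and every $u\in B_{i,t'}$ (restart on failure). The trivial bound $|B_{i,t}|\le|W_t|=N$ already gives at most $kTN=O_{\eps,k,T}(1)$ random guesses per successful run, each matching $\OPT$'s assignment with probability $1/k$; the bounded scatter dimension hypothesis is what enables the feasibility analysis of the recomputation step, applied per $(1+\eps)$-radius bucket inside $B_{i,t}$ exactly as in \Cref{thm:epas-basic}. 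Thus one run succeeds with probability $\Omega_{\eps,k,T}(1)$, and $O_{\eps,k,T}(1)$ restarts yield constant overall success. Homogeneity of $\Psi$ converts the per-metric $(1+O(\eps))$-approximations into a $(1+O(\eps))$-approximation of $\Psi$-aggregate cost, which a final rescaling of $\eps$ absorbs.

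The main obstacle is the sampling reduction itself: producing a witness set whose approximate optimality per metric transfers to approximate optimality on the full client set \emph{against the same $X$}, simultaneously across all $T$ metrics. Taking independent per-metric samples and paying a $T$-fold union bound suffices because $T$ is a parameter, but the key structural point is that the per-metric bucket indexing $B_{i,t}$ and the per-$(w,t)$ radii $c_{w,t}$ decouple the scatter-dimension analysis metric-by-metric, so once the sampling guarantee is in hand, the argument from \Cref{thm:epas-basic} carries through with only a $T$-fold overhead absorbed into $O_{\eps,k,T}(1)$.
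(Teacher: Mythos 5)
Your approach is genuinely different from the paper's, and unfortunately it has a gap that I don't think can be repaired without essentially falling back on the paper's strategy.

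The central difference is that you take a \emph{fixed} random witness set $W_t$ up front and then guess, for every $w \in W_t$ and every $t$, the value $c_{w,t} = d_t(w,\OPT)$, incurring an overhead of $O(\eps^{-1}\log n)^{O(T|W|)}$. For this overhead to be $O_{\eps,k,T}(1)\cdot\poly(n)$ (using \Cref{fact:loglog-trick}), you need $|W| = O_{\eps,k,T}(1)$ independent of $n$. But the simultaneous witness property you invoke — ``any $X$ satisfying the sampled objective also satisfies the true objective'' — is a coreset-type guarantee that must hold uniformly over all $\binom{n}{k}$ candidate solutions $X$, because the eventual output $X$ depends on the sample $W$. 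The ``standard $k$-median sampling argument'' you cite gives concentration only for a \emph{fixed} $X$; a union bound over all solutions forces $|W_t| = \Omega(k \log n / \eps^2)$, at which point the radius-guessing step becomes super-polynomial in $n$. The paper's reference \cite{ABB+23} does not use a fixed witness set at all: they sample \emph{adaptively}, drawing a fresh witness each time a violation against the \emph{current} $X$ is found, which only requires a guarantee against a single $X$ at a time. A second symptom that something is off: if your argument went through, the iteration count would be bounded trivially by $T|W|$ (each $w$ can be added to at most one bucket per metric, since once added its constraint is permanently maintained), and bounded scatter dimension would not actually be used anywhere — your invocation of it for ``feasibility'' is spurious, since feasibility of the recomputation follows from correct $i$-guesses alone. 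A proof of this theorem that never needs scatter dimension should be treated with suspicion, since the hypothesis is there precisely because the paper's argument needs it.

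The paper's proof (\Cref{thm:epas}) takes the adaptive route and deals with the lack of known per-client radii differently: it computes, for every client $v$, a deterministic upper bound $u_t(v)$ on $d_t(v,\OPT)/w_t(v)$ (so there is nothing to guess), initializes $X$ via the constant-factor approximation so that $d_t(v,X) \le 3u_t(v)$, and then each time a metric $t$ is violating, samples a single client $v$ from the set $A_t$ with probability proportional to $w_t(v)d_t(v,X)$. The sampled client is, with probability $\Omega(\eps)$, a witness (\Cref{lem:epas}) with a dynamic radius $r_t(v) = d_t(v,X)/(1+\eps/3)$ tied to the current solution. Because the same client may reappear in a bucket with a smaller radius, bucket sizes are \emph{not} trivially bounded; instead they are bounded by the scatter-dimension hypothesis combined with the observation that only $O(\eps^{-1}\log\eps^{-1})$ distinct radii can ever occur in a given bucket (this is where the initial constant-factor approximation and the upper bound condition enter). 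This is the ingredient your proposal is missing, and it is exactly what the scatter-dimension hypothesis buys you.
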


Our algorithm for proving \Cref{thm:epas} is a direct extension of the analogous algorithm at $T=1$~\cite{ABB+23}, so we recall the $T=1$ unweighted $k$-supplier algorithm that we summarized from the literature~\cite{ABB+23} in \Cref{sec:epas}. Let us understand how this is extended~\cite{ABB+23} to weighted $k$-supplier and weighted $k$-median at $T=1$.

For the $k$-supplier problem with weights, $X$ is initialized as Plesn\'ik's $3$-approximation~\cite{Ples1987}, and labeled as $X = \set{c_1,\dots,c_k}$. As in the unweighted case, empty buckets $B_1,\dots,B_k$ are also initialized, corresponding to clusters that would eventually be served by $c_1,\dots,c_k$ respectively.
While there exists a client violating the desired $(1+\e)$-approximation---i.e. while $\exists v \in C : d(v,X) > (1+\e)\opt/w(v)$, where $\opt$ is a guess of the optimum---the algorithm guesses the index $i \in [k]$ of the optimal cluster that serves $v$ in $\OPT$. The client-radius pair $(v,\opt/w(v))$ is then placed in the bucket $B_i$, and $c_i$ is recomputed so that $\forall u \in B_i, d(u,c_i) \leq \opt/w(u)$. If such a $c_i$ cannot be obtained, then the algorithm asserts failure and restarts with a fresh $3$-approximation; but this failure probability is $(1-1/k)$, i.e. the probability that $i$ was guessed incorrectly. By construction of $B_i$, pairs in it with the same radius (say $r$) appear in an $\e$-scatter (after scaling distances down by $r$); and it can be shown~\cite{ABB+23} that there are only $O_\e(1/\e\cdot\log 1/\e)$ different radii appearing in $B_i$ because we initialize $X$ as a 3-approximation. So if the $\e$-scatter dimension is upper-bounded by some $\lambda(\e)$, then iterations involving $B_i$ succeed with probability $(1/k)^{O(1/\e\cdot\log 1/\e)\cdot\lambda(\e)}$, leading to an overall success probability of $(1/k)^{O(1/\e\cdot\log 1/\e)\cdot k\lambda(\e)}$ each time we reinitialize.

The above approach for $k$-supplier is then generalized to other $k$-clustering problems, including $k$-median, by computing an upper bound $u(v)$ for each $d(v,\OPT)/w(v)$, and running the above algorithm with these $u(v)$'s in place of $\opt/w(v)$. The crucial modification is that violation to the desired $(1+\e)$-approximation is no longer clientwise---as $\opt$ is no longer a single client's cost, a violation may take a form like $\sum_{v \in C}d(v,X) > (1+\e)\opt$. Nevertheless, once a cost violation is found, a clever sampling~\cite[Algorithm 1]{ABB+23} is possible to pick a client $v$ that is indeed a violation on its own, i.e. $d(v,X) > (1+\e')d(v,\OPT)$ for some $\e' = \Theta(\e)$. Then $(v,(1+\e')d(v,X))$ is added to bucket $B_i$ for a guessed $i \in [k]$. The sampling also ensures that $u(v) \geq d(v,\OPT)$ is small for a sampled $v$, so that smaller radii are more likely to be added to a bucket. Once a pair $(v,r)$ is added to a bucket, any subsequent $(v,r')$ in the same bucket must satisfy $r' < r$; so ensuring smaller radii upper-bounds the length of a bucket, as desired.

We now extend the above to $T > 1$, proving \Cref{thm:epas} (restated below). We keep our analysis brief, focusing on the differences from the $T=1$ case~\cite{ABB+23}.

\begin{proof}[Proof of \Cref{thm:epas}]

As before, we assume via bucketing that there are $O(\e^{-1}\log n)$ distances per metric, allowing us to guess $\opt_t = \cost(d_t;\OPT)$ for each $t \in [T]$ within the desired running time. We also compute upper bounds $u_t(v)$ for every $d_t(v,\OPT)/w_t(v)$ as follows: if $z = \infty$ then $u_t(v) = \opt_t/w_t(v)$; otherwise, $u_t(v) = 2 \cdot \min\set{r : w_t(B_t(v,r)) \geq 3\opt_t/r}$. Using these $u_t(v)$'s as our radii, we run our polytime $3$-approximation algorithm from \Cref{sec:T=2:ksupp}; that is, we obtain $X = \set{c_1,\dots,c_k} : \forall t \in [T], v \in C, d_t(v,X) \leq 3u_t(v)$.

Observe that, by homogeneity of $\Psi$, if we had $\cost(d_t;X) \leq (1+\e)\opt_t$ for every $t \in [T]$, then we would be done. So there must be some $t$ violating this inequality, i.e. such that $\cost(d_t;X) > (1+\e)\opt_t$; and we can identify this $t$ in $\poly(n,T)$ time by simply computing all $T$ costs. While such a $t$ exists, we run a \emph{violation loop} to improve the solution $X$. An iteration of this loop begins with an attempt to find a $v \in C$ with the following properties w.r.t. $d_t$ and $X$:
\begin{itemize}
    \item (witness) $d_t(v,X) > (1+\e/3)d_t(v,\OPT)$, and
    \item (upper bound) $u_t(v) = O(kd_t(v,X)/\e)$.
\end{itemize}

For $k$-supplier, this $v$ is easy to find, as it was in the $T=1$ case: $\cost(d_t;X)$ must equal some $w_t(v)\cdot d_t(v,X)$, so we pick that $v$. Then $d_t(v,X) = \cost(d_t;X) > (1+\e)\opt_t \geq (1+\e)d_t(v,\OPT)$, satisfying the witness condition. Also $u_t(v)\cdot w_t(v) = \opt_t \leq \cost_t(d_t;X) / (1+\e) = w_t(v)\cdot d_t(v,X) / (1+\e)$, implying $u_t(v) \leq d_t(v,X) / \e$, which satisfies the upper bound condition.

When $z \in \NN$, we observe the following, again in line with the $T=1$ case: to satisfy the witness condition alone, sampling with probability $\propto w_t(v)\cdot d_t(v,X)$ from $C$ suffices via an averaging argument. But to satisfy both conditions, we instead sample from a smaller client-set $A_t$ in which the upper bound condition already holds. The key technical result, like in the $T=1$ case, is that ``many witnesses lie in $A_t$'' or, more precisely, that a large fraction of the weighted cost of $A_t$ is contributed by witnesses. This result involves the exact same construction and analysis as the analogous result~\cite[Lemma V.10]{ABB+23} in the $T=1$ case.
Where that proof constructed desirable ``heavy witness sets'' $W^+_1,\dots,W^+_k \subseteq A$, we now have for each $t \in [T]$, analogous sets $W^+_{1,t},\dots,W^+_{k,t} \subseteq A_t$, satisfying the same guarantees within that $d_t$. These constructions in the $T$ different metrics do not interact with one another, so we can analyze each one exactly as for the $T=1$ analysis. Because of this direct similarity, we omit the details of this proof, and only state the extended result below.
\begin{lemma}[Generalization of{~\cite[Lemma 5.10]{ABB+23}} to $T > 1$]\label{lem:epas}
    Fix a violating metric, i.e. a $t \in [T] : \cost(d_t;X) > (1+\e)\opt_t$. Let $v$ be a client sampled from the set $A_t := \set{v \in C : d_t(v,X) = \Omega\paren{\frac{\e\cdot u_t(v)} k}}$,
    with probability proportional to $w_t(v)\cdot d_t(v,X)$. Then with probability $\Omega(\e)$, $v$ satisfies the witness condition and the upper bound condition.
\end{lemma}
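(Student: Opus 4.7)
The plan is to observe that the lemma, despite being stated in the aggregate-clustering setting, is really a statement about a single fixed metric $d_t$: both conditions, as well as the sampling distribution over $A_t$, mention only $d_t$, $w_t$, $u_t$, $\opt_t$. Neither the other metrics nor the aggregator $\Psi$ enter. The only consequence of the aggregate setup, having fixed $t$, is that we entered the violation loop precisely because $\cost(d_t;X) > (1+\e)\opt_t$; homogeneity of $\Psi$ only guarantees the \emph{existence} of such a $t$. So the entire argument reduces to running the $T=1$ analysis of \cite{ABB+23} inside $d_t$, and I would follow that proof line-for-line with the single-metric symbols replaced by their $t$-subscripted counterparts.

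Concretely, the upper bound condition is deterministic on all of $A_t$: the definition $v \in A_t$ reads $d_t(v,X) \geq c\,\e\,u_t(v)/k$ for the constant $c$ implicit in $\Omega(\cdot)$, which at once gives $u_t(v) \leq (k/(c\e))\,d_t(v,X) = O(k\,d_t(v,X)/\e)$. Only the witness condition requires a probabilistic argument. I would bound the sampling mass of non-witnesses by splitting them into two families. Non-witnesses inside $A_t$ satisfy $d_t(v,X) \leq (1+\e/3)\,d_t(v,\OPT)$ by definition, so they contribute at most $(1+\e/3)\sum_{v \in C} w_t(v)\,d_t(v,\OPT) \leq (1+\e/3)\,\opt_t$. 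Clients $v \notin A_t$ have $d_t(v,X) < c\,\e\,u_t(v)/k$, so their contribution is at most $(c\e/k)\sum_{v \notin A_t} w_t(v)\,u_t(v)$; a log-scale bucketing over values of $u_t$, using the defining inequality $w_t(B_t(v, u_t(v)/2)) \geq 6\,\opt_t/u_t(v)$, bounds this sum by $O(k \cdot \opt_t)$, giving an $O(\e)\,\opt_t$ contribution. Adding the two bounds, non-witnesses account for at most $(1+\e/3+O(\e))\,\opt_t$ mass, while $\cost(d_t;X) > (1+\e)\,\opt_t$, so the witness mass in $A_t$ is at least $\Omega(\e)\,\opt_t$. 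Because the initial Plesn\'ik-style $3$-approximation from \Cref{sec:T=2:ksupp} guarantees $\cost(d_t;X) = O(\opt_t)$ throughout the violation loop, this witness mass is an $\Omega(\e)$ fraction of the total sampling mass $\sum_{v \in A_t} w_t(v)\,d_t(v,X)$, giving the stated sampling probability.

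The main obstacle, and the only non-trivial computation, is the log-scale bucketing bound on $\sum_{v \notin A_t} w_t(v)\,u_t(v)$. This is where the specific form $u_t(v) = 2\min\{r : w_t(B_t(v,r)) \geq 3\opt_t/r\}$ is used: one partitions $C$ according to the scale $2^j \leq u_t(v) < 2^{j+1}$, invokes the covering inequality to bound the total weight of clients in each scale against what $\OPT$ is forced to pay on them, and sums over $O(\log n)$ scales. This computation is identical to the $T=1$ case of \cite{ABB+23}, and because it lives entirely inside $d_t$, no new idea is required beyond renaming — the $T>1$ structure of the problem disappears once the lemma's fixed metric $t$ is singled out.
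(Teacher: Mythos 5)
The paper does not actually provide a detailed proof of this lemma: it explicitly writes that the ``exact same construction and analysis'' as \cite[Lemma V.10]{ABB+23} applies verbatim, because the heavy witness sets $W^+_{1,t},\dots,W^+_{k,t} \subseteq A_t$ are built per-metric and do not interact across $t \in [T]$, and then omits the details. Your high-level reduction --- fix the violating $t$, observe that all quantities in the lemma live inside $d_t$ alone, and replay the $T=1$ analysis --- is therefore precisely the paper's own approach.

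Two remarks about your more detailed sketch. First, it does not match the paper's pointer to \cite{ABB+23} quite ``line-for-line'': the cited lemma proceeds via the witness-set decomposition indexed by the $k$ optimal clusters, while you give a coarser aggregate mass-accounting that never splits by cluster. That is not a defect --- both routes should reach the same bound --- but it is a different decomposition than the one the paper names. Second, and more substantively, your final step invokes an invariant ``$\cost(d_t;X) = O(\opt_t)$ throughout the violation loop,'' and this is neither clearly maintained (recomputing $c_i$ to cover the buckets $B_{i,t'}$ can in principle increase $d_t(v,X)$ for clients $v$ outside those buckets, so the initial Plesn\'ik-style guarantee need not persist) nor needed. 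The sampling mass $\sum_{v \in A_t} w_t(v)\,d_t(v,X)$ is trivially at most $\cost(d_t;X)$, and the witness mass you established is at least $\cost(d_t;X) - \paren{1+\e/3 + O(\e)}\opt_t$; dividing and using only the hypothesis $\cost(d_t;X) > (1+\e)\opt_t$ already yields a ratio of at least $1 - \frac{1+\e/2}{1+\e} = \Omega(\e)$. Dropping the spurious invariant both closes the gap and simplifies the argument.
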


Having sampled this $v$ with the desired properties, we guess its ``correct cluster'' $i \in [k]$, i.e. where the unknown optimal centers are $o_1,\dots,o_k$, we guess that $d_t(v,\OPT) = d_t(v,o_i)$. We then add $(v,r_t(v))$ to the bucket $B_{i,t}$; here $r_t(v) = u_t(v)$ for $k$-supplier, and $r_t(v) = d_t(v,X)/(1+\e/3)$ otherwise. So $r_t(v) < d_t(v,X)$. Then, we attempt to recompute $c_i$ such that \[\forall t' \in [T], p \in B_{i,t'}, d_{t'}(p,c_i) \leq r_{t'}(p).\] This is the \emph{only major step where multiple metrics are involved}; it still holds, however, that if each iteration guesses the correct $i$ then this $c_i$ exists. If such a $c_i$ does not exist, then we assert a \textsc{FAIL} state, and restart with a fresh initialization of $X$; otherwise we continue looping by looking for another violation, which might occur at a different $t' \in [T]$.

To establish correctness, we must bound the failure probability of this algorithm: note that each guess of $i \in [k]$ succeeds with probability $(1/k)$, so for iterations that augment a particular bucket $B_{i,t}$, we succeed with probability $(1/k)^{\abs{B_{i,t}}}$. To bound this bucket size, observe that for every radius $r$ that appears in $B_{i,t}$, the set $B_{i,t}\restrict r := \set{(v,r_v) : r_v = r}$ is an $\e$-scatter in $d_t$ (when scaled down by $r$). That is, $\abs{B_{i,t}\restrict r} \leq \lambda(\e)$. So we must bound the number of different radii appearing in $B_{i,t}$. Since the bucket has a fixed $t$, this follows exactly as for the $T=1$ case~\cite[Section II, p. 1384]{ABB+23}, yielding $O(1/\e\cdot\log 1/\e)$ different radii in $B_{i,t}$. So over the $kT$ buckets, we obtain a success probability of $(1/k)^{O_\e(1)\cdot kT}$, meaning that we have succeeded with constant probability after $O_{k,\e,T}(1)$ repetitions.
\end{proof}

\begin{remark}[Barrier at algorithmic scatter dimension]
The authors of \cite{ABB+23} also consider a weaker notion of \emph{algorithmic $\e$-scatter dimension} (where, roughly speaking, the length of $\e$-scatters are only bounded if the sequence of centers $x_i$ is chosen according to a certain algorithm), and their EPASes extend to metrics of bounded algorithmic scatter dimension. For example in the case of $k$-supplier, when recomputing the center $c_i$ for some bucket $B_i$, they just use the algorithm for computing the center rather than choosing $x_i$ to be an arbitrary point that covers $B_i$.
We are aware of two follow-up works \cite{GI25, BGI25} that build on the framework of \cite{ABB+23}, both of which apply also for algorithmic scatter dimension. In contrast, our aggregate algorithm \emph{does not extend} to the more general setting; this is because we recompute $x_i$ in a way that covers all the buckets $B_{i,t}$ across $t \in [T]$ (and such an $x_i$ is guaranteed to exist, if we guessed the buckets correctly), whereas an algorithmic version would require that $x_i$ was chosen to simultaneously satisfy the algorithms of all $T$ metrics (and such an $x_i$ does \emph{not} necessarily exist).
\end{remark}

\subsection{Bounded-treewidth Graphs}

\begin{restatable}{theorem}{treewidth}
\label{thm:treewidth}
    When the base graph $G$ has treewidth $\tw$ and the aggregator $\Psi = $ sum, then:
    \begin{itemize}
        \item there is an exact algorithm for sum-aggregate $(k,z)$-clustering that runs in time $n^{O(T \cdot \tw)}$.
        \item for any $\eps > 0$, there is a $(1+\eps)$-approximation algorithm for sum-aggregate $(k,z)$-clustering that runs in time $(\frac{\log n}{\eps})^{O(T \cdot \tw)} \cdot \poly(n) = O_{\eps, T, \tw}(1) \cdot \poly(n)$.
    \end{itemize}
\end{restatable}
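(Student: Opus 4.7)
The plan is to generalize the folklore dynamic programming (DP) algorithm for $k$-median on bounded-treewidth graphs~\cite{KR07,CFS21} to handle $T$ metrics simultaneously. We begin by computing a nice tree decomposition of $G$ with width $\tw$ in time $2^{O(\tw)} n$ and process its bags bottom-up. The key structural observation is that, although the metrics $d_t$ vary with $t$, the facility set $S$ is shared across metrics; so we maintain a single opening pattern per bag while keeping $T$ parallel copies of the per-metric boundary distance information.

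Formally, for each bag $B$ (letting $V_B$ denote the vertices appearing in $B$ or any descendant bag), the DP state is a tuple $(O_B, k', (\delta_v^t, \alpha_v^t)_{v \in B,\, t \in [T]})$ where $O_B \subseteq B$ encodes $S \cap B$, $k' \in \{0,\ldots,k\}$ is the number of facilities opened in $V_B$, $\delta_v^t$ is a proposed value of $d_t(v, S)$ for the eventual global solution, and $\alpha_v^t$ is a proposed value of the internal distance $d_t(v, S \cap V_B)$. In the exact version, each $\delta_v^t$ and $\alpha_v^t$ ranges over the $O(n^2)$ realized distances in $d_t$; in the approximation version, we first round edge weights to powers of $(1+\eps)$ (incurring a $(1+\eps)^z$ loss per client that rescales into the final approximation factor), restricting each to $O(\log n/\eps)$ values. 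The transitions for introduce/forget/join nodes mirror the $T=1$ algorithm run in parallel across the $T$ metrics: an introduce node for $v$ guesses $(\delta_v^t, \alpha_v^t)_t$ subject to triangle inequalities $\delta_v^t \le d_t(v,u) + \delta_u^t$ (and analogously for $\alpha$); a forget node for $v$ commits $v$'s cost contribution and verifies the consistency equation $\delta_v^t = \min\bigl(\alpha_v^t,\ \min_{b \in B \setminus \{v\}} d_t(v,b) + \delta_b^t\bigr)$ for each $t$; a join node merges two subtrees with matching $(O_B, \delta, \alpha)$ state, adding their $k'$s and DP values.

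For sum-aggregate $k$-median ($z=1$) the objective decomposes as $\sum_v \sum_t w_t(v)\, d_t(v,S)$, so the cost charged at each forget step is simply $\sum_t w_t(v)\, \delta_v^t$. For general $z$ the objective $\sum_t \bigl(\sum_j w_t(j)\, d_t(j,S)^z\bigr)^{1/z}$ is not per-client additive; we handle this by separately guessing each $\cost_z(d_t; S)$ up to a factor of $(1+\eps)$ (only $O(\log n/\eps)^T = O_{T,\eps}(1)$ total guesses, using that the aspect ratio is WLOG polynomial) and augmenting the DP state with a discretized per-metric ``budget spent so far'' (another $(\log n/\eps)^T$ factor), checking feasibility within the guessed budgets and minimizing $\sum_t C_t$ over feasible guesses. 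Combining all factors, the state count per bag is $2^{O(\tw)}(k+1) \cdot O(\log n/\eps)^{O(T\tw)}$ (or $n^{O(T\tw)}$ in the exact version), transitions are $\poly(n)$-time, and there are $O(n)$ bags, yielding the claimed runtime. The main obstacle is precisely enforcing the consistency equation at forget nodes: by the separator property of tree decompositions, any path from $v \in V_B \setminus B$ to a facility outside $V_B$ must pass through $B$, so $d_t(v,S) = \min(d_t(v, S \cap V_B),\, \min_{b \in B} d_t(v,b) + \delta_b^t)$ at the moment $v$ is forgotten; tracking both $\alpha_v^t$ and $\delta_v^t$ in the state is what lets us verify this equality and ensure the DP correctly recovers the true (approximate) optimum.
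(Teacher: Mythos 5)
Your high-level plan --- a tree-decomposition DP whose interface carries per-metric distance information --- matches the paper's, and your per-bag state $(\delta_v^t,\alpha_v^t)$ is an equivalent reformulation of the paper's inner/outer distance vectors $(\bi,\bo)$. Your treatment of general $z$ is genuinely different: you correctly observe that $\sum_t\bigl(\sum_j w_t(j)\,d_t(j,S)^z\bigr)^{1/z}$ does not split client-by-client, and you fix this by guessing each $\cost_z(d_t;S)$ and carrying a discretized per-metric partial cost in the state. The paper instead asserts the scalar recurrence $\cost(b,S,\bo) = \bigl(\cost(\ell,S_\ell,\bo^\ell)^z + \cost(r,S_r,\bo^r)^z\bigr)^{1/z}$, but for $T\ge 2$ this amounts to claiming $\sum_t(A_t+B_t)^{1/z} = \bigl((\sum_t A_t^{1/z})^z + (\sum_t B_t^{1/z})^z\bigr)^{1/z}$, which is false (take $T=z=2$, $A=(1,0)$, $B=(0,1)$: $2\ne\sqrt 2$). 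So your per-metric budget vector is the more defensible route, and it does fit in the claimed runtime since the additional state overhead is $(\log n/\eps)^{O(T)}$ and $\tw\ge 1$.

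The genuine gap is in the runtime analysis of the $(1+\eps)$ version. You invoke a ``nice'' tree decomposition, which in general has depth $\Theta(n)$, and round interface distances (and per-metric partial budgets) to $O(\log n/\eps)$ values. But the rounding error accumulates along each root-to-leaf path, so with depth $\Theta(n)$ you either incur $(1+\eps)^{\Theta(n)}$ final error or must use $\Omega(n/\eps)$ discretization levels per coordinate, which destroys the $(\log n/\eps)^{O(T\tw)}\poly(n)$ bound. The paper handles this by first building a \emph{balanced} binary tree decomposition of width $O(\tw)$ and depth $O(\log n)$ (via \cite{BH98}) and rounding to powers of $1+\eps/\log n$, so the compounded error is $(1+\eps/\log n)^{O(\log n)}=1+O(\eps)$ and each coordinate ranges over $O(\eps^{-1}\log^2 n)$ values. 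You need this depth-$O(\log n)$ decomposition (or an equivalent control on error accumulation) together with a correspondingly finer grid; as written, the claimed $(\log n/\eps)$-type factors are not justified. Separately, rounding \emph{edge weights} to powers of $1+\eps$ does not by itself reduce the set of realized shortest-path distances to $O(\log n/\eps)$ --- you must discretize the distance values appearing in the DP state directly, and then relax the consistency checks to hold up to the rounding factor, as the paper does.
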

In this section we focus on the $k$-median case; the ideas extend naturally to $(k,z)$-clustering and $k$-supplier. We prove \Cref{thm:treewidth} by adapting a folklore dynamic program for solving $k$-median on bounded-treewidth graphs. (Recall that a tree decomposition of a graph $G$ is a tree $\mathcal{T}$ of \emph{bags}, where each bag is a subset of vertices $V(G)$, that satisfies two properties: (1) for every edge $(u,v)$ in $G$, there is a bag containing both $u$ and $v$, and (2) for every vertex $v$ in $V(G)$, the bags containing $v$ induce a connected subtree of $\mathcal{T}$. The \emph{width} of the tree decomposition is 1 less than the maximum bag size, and the \emph{treewidth} of a graph is the minimum width of any tree decoposition. A tree has treewidth 1.) To the best of our knowledge, the DP for $k$-median has not been written down explicitly in the setting of bounded treewidth graphs, but the technique is standard. Our presentation is based on a DP that appeared~\cite{ARR98,KR07,CFS21} as part of a PTAS for $k$-median in Euclidean metrics. In this section we explain the main idea of the algorithm and defer the formal details to Appendix~\ref{sec:treewidth-full}.

We begin by reviewing the folklore DP. Every bag $B$ of the tree decomposition $\calT$ acts as a small separator between an ``inside'' piece (induced by the vertices in the subtree of $\calT$ rooted at $B$) and an ``outside'' piece (induced by all vertices outside the subtree rooted at $B$). For each separator, the DP guesses the \emph{interface} of the optimal solution $\OPT$ between the two pieces. In the case of $k$-median, the interface of bag $B = \set{v_1, \ldots, v_{\tw}}$ consists of the distance of every $v_i$ to the closest open center in the inside (resp. outside) piece, as well as the number of centers opened in the inside (resp. outside) piece in the optimal solution.
Note that there are $n^{O(\tw)}$ possible interfaces for any bag, as there are $O(n^2)$ possible distances for each $v_i$; by rounding distances to the nearest power of $(1 + \frac{\eps}{\log n})$, we could reduce the number of interfaces to $(\frac{\log n}{\eps})^{O(\tw)}$. 

The DP table contains a cell for every bag $B$ and every possible interface of $B$, where the value of the cell represents the cost paid by vertices in the inside piece under the best solution that respects the interface.
The value of a DP cell at bag $B$ can be computed from the DP cells of the children of $B$, by taking a minimum over all combinations of children interfaces which are \emph{consistent} with the interface of $B$ (checking the consistency of interfaces can be done in a straightforward manner in polynomial time \cite{ARR98, CFS21}). The overall runtime is $n^{O(\tw)}$,
or $(\frac{\log n}{\eps})^{O(\tw)} \cdot \poly(n)$ if we tolerate $1+\eps$ approximation.

To adapt the DP to the aggregate setting, we simply change the notion of \emph{interface}: rather than guessing just one interface, we guess a different interface for each of the $T$ metrics $d_1, \ldots, d_T$.
The cells of our DP table now consist of a bag $B = \set{v_1, \ldots, v_{\tw}}$ and a set of $T$ tuples; the $t$-th tuple consists of the distance \emph{with respect to $d_t$} of every $v_i$ to the closest (with respect to $d_t$) open center in $\OPT$ in each piece, as well as the cost paid by each piece in the $d_t$ metric under $\OPT$.
(Note: the reason that we need to guess a different interface for each metric is that, even though the optimal set of centers $\OPT$ for the aggregate problem stays consistent throughout the $T$ metrics, a vertex $v$ could be served by a different center of $\OPT$ in each different metric $d_t$.)
There are $n^{O(T \cdot \tw)}$ possible interfaces. The DP can be computed in a bottom-up fashion, as in the non-aggregate setting, for a total runtime of $n^{O(T \cdot \tw)}$.
If one allows $1+\eps$ approximation, the runtime improves to $(\frac{\log n}{\eps})^{O(T \cdot \tw)} \cdot \poly(n)$, which is at most $\eps^{-O(T^2 \cdot \tw^2)} \cdot \poly(n) = O_{\eps,T,\tw}(1) \cdot \poly(n)$ (see \Cref{fact:loglog-trick}).

\begin{remark}
\label{remark:treewidth-algo-limit}
    Observe that if $\tw = o\left(\frac{\log n}{\log \log n}\right)$, then \Cref{thm:treewidth} provides a $(1+\e)$-approximation algorithm in $2^{o(\log n) \cdot T \cdot \log(1/\e)}\cdot \poly(n) = O_{\e,T}(1) \cdot \poly(n)$ time (by \Cref{fact:loglog-trick}).
\end{remark}

\subsubsection*{Proof of \Cref{thm:treewidth}}\label{sec:treewidth-full}



We describe in detail the exact $n^{O(T \cdot \tw)}$ algorithm for stochastic aggregate $k$-median. At the end, we sketch the changes needed for the $1+\eps$ approximation, and the changes needed to extend the result to general $\Psi$-aggregate $(k,z)$-clustering and $k$-center as claimed in the theorem. Let $F \subseteq V(G)$ denote the facilities, and for every $t \in [T]$ let $w_t:V(G) \to \mathbb{R}_{>0}$ denote the weights of the clients in the $t$-th metric. We let $C \subseteq V(G)$ denote the set of \emph{non-trivial clients}, i.e. those clients $c$ with $w_t(v) > 0$ for at least one metric $t$.

Let $\calT$ be a rooted tree decomposition for $G$, where $\cT$ has width $O(\tw)$, depth $O(\log n)$, and is a binary tree. Such a $\calT$ can be found in time $2^{O(\tw)}\cdot\poly(n)$ \cite{BH98}.
Moreover, we assume without loss of generality (as in \cite{adamczyk2019constant}) that facilities $F$ and the non-trivial clients $C$ appear \emph{only} in leaf bags of $\cT$, and that each facility and non-trivial client appears in exactly one bag. This assumption can be enforced WLOG by \emph{duplicating} each client or facility $v$ to make a copy $v'$ connected to $v$ by $0$-weight edge, and updating the tree decomposition to contain a leaf bag $\set{v, v'}$; the vertex $v$ is treated as a Steiner vertex (ie., $v$ is treated if it was not a facility or client, meaning $v \not \in F$ and $w_t(v) = 0$ for all $t$), and the vertex $v'$ (which appears in exactly one bag in $\cT$, in a leaf bag) is treated as the client/facility. This duplication process could increase the arity of the tree decomposition $\cT$, but further duplication can reduce the arity back to 2 while only increasing the number of vertices in $G$ (and the depth of $\cT$) by a constant factor.

For each $t \in [T]$, let $R_t$ be the set of possible pairwise distances under the metric $d_t$. Note that $|R_t| = O(n^2)$.
For any node $b$ in $\calT$, let $X_b \subseteq V$ be the bag of vertices associated with the node. Let $\calT_b$ be the subtree rooted at $b$,
and let $V_b \subseteq V$ be the vertices appearing in bags of $\calT_b$.


\medskip \noindent \textbf{Definition of DP.}
Our DP entries are defined over the following:
\begin{itemize}
    \item A node $b$ in the tree decomposition $\calT$.
    \item A budget $k' \in \mathbb{N}$, representing the number of centers we can open in $\cT_b$.
    \item An ``outer distance'' vector $\bo$, which records a distance $o_{t,x} \in R_t$ for every vertex $x \in X_b$ and every metric $t \in [T]$. (This distance represents an assumption that vertex $x \in X_b$ is within distance $o_{t,x}$ of some  open center outside of the subtree $\cT_B$. )
    \item An ``inner distance'' vector $\bi$, which records a distance $i_{t,x} \in R_t$ for every vertex $x \in X_b$ and every metric $t \in [T]$.
    (This distance represents a requirement that we open some center in $\cT_b$ within distance $i_{t,x}$ of vertex $x \in X_b$.)
\end{itemize}
The DP entry $\DP[b, k', \bo, \bi]$ represents the minimum cost paid by clients in $\cT_b$ if we open $k'$ centers in $\cT_b$, while making the assumption that we have already opened facilities outside $\cT_b$ as specified by the interface $\bo$ and while obeying the restriction that we must open centers in $\cT_b$ to satisfy $\bi$. Formally,
\begin{equation}
\label{eq:dp-def}
  \DP[b, k', \bo, \bi] \coloneqq \min_{\substack{S \subseteq F \cap V_b: |S| = k' \\ \forall t \in [T], \forall x \in X, \dist_t(x,S) = i_{t,x}}} 
  \cost(V_b, S, \bo)   
\end{equation}
where
\begin{equation}
    \cost(b, S, \bo)  \coloneqq \sum_{t \in [T]} \sum_{c \in V_b} w_t(c) \cdot \min \left(\dist_t(c, S), \min_{x \in X_b} ({\dist_t(c, x) + o_{t,x}})\right).
\end{equation}
If no such $S$ exists then we set $\DP[b, k', \bo, \bi] = \infty$.


\medskip \noindent \textbf{Objective.} 
Observe that if $b$ is the root bag of the tree decomposition, then the cost of the optimal aggregate solution $\OPT$ is precisely $\min_{\bi} \DP[b, k, \bangle{\infty, \ldots, \infty}, \bi]$.




\medskip \noindent \textbf{Base Cases.} Our base case is when $b$ is a leaf, and in this case we simply evaluate the DP definition by enumeration.  We guess the $k'$ centers $S \subseteq X_b$ that are open; if there is no guess that satisfies the $\bi$ constraint then return $\infty$, otherwise we find the $S$ that minimizes the cost paid by clients in $X_b$. This can be done in $2^{O(\tw)}\cdot\poly(n, T)$ time.



\medskip \noindent \textbf{Recurrence.} Let the two children of $b$ be $\ell$ and $r$. We recurse on $\ell$ and $r$ with distance vectors that are \emph{consistent} per the following definition.

\begin{definition}
\label{def:consistent}
Let $b$ node of $\cT$ with children $l$ and $r$.
Let $(k', \bo, \bi)$ be a tuple that defines a DP entry associated with $b$, consisting of an inner vector of distances, an outer vector of distances, and a budget $k'$. Let $({k'}^\ell, \bo^\ell, \bi^\ell)$ and $({k'}^r, \bo^r, \bi^r)$ be tuples defining DP entries associated with $\ell$ and $r$ respectively.
We say that $({k'}^\ell, \bo^\ell, \bi^\ell, {k'}^r, \bo^r, \bi^r)$ is consistent with $(\bo,\bi,\bc,k')$ if the following conditions all hold.
\begin{itemize}
    \item $k'_l + k'_r = k'$.
    \item \textnormal{[Constraint $\bi$ is satisfied.]} For every metric $t \in [T]$ and vertex $x \in X_b$,
    there exists some $y \in X_\ell$ (or some $y \in X_r$) such that $i_{t,x} = \dist_t(x, y) + i^\ell_{t, t}$ (or, respectively, $i_{t,x} = \dist_t(x, t) + i^r_{t, t}$). In other words, the restriction $\bi$ for bag $b$ is satisfied by a restriction that either $\bi^\ell$ or $\bi^r$ imposes on a child bag.
    %
%
    \item \textnormal{[Assumptions $\bo^\ell, \bo^r$ are justified.]} For each $t \in [T]$ and $y \in X_\ell$ (and similarly for $y \in X_r$), 
    there some $x \in X_b$ such that either $o^\ell_{t,y} = \dist_t(y, x) + o_{t, x}$ or $o^\ell_{t, y} = \dist_t(y, x) + i^r_{t,y}$. In other words, the assumption $\bo^r$ on bag $r$ is justified either by the assumption $\bo$ on the parent bag $b$ or by the constraint $\bi$ on the sibling bag $\ell$.
    %

\end{itemize}
\end{definition}

To evaluate $\DP[b, k', \bo, \bi]$, we guess over all $n^{O(T \cdot \tw)}$ possibilities for $\bo^\ell,\bo^r,\bi^\ell,\bi^r,{k'}^\ell,{k'}^{r}$ and verify consistency in $\poly(n)$ time per guess; we return the minimum sum of consistent child DP entries.  In notation:

\begin{equation}
\label{eq:dp-rec}
\DP[b,k'\bo,\bi] \gets \min_{\substack{({k'}^\ell,\bo^\ell,\bi^\ell,{k'}^r, \bo^r,\bi^r) \text{ consistent}\\ \text{with } (k',\bo,\bi)}} \DP[\ell,{k'}^\ell,\bo^\ell,\bi^\ell]
+ \DP[r,{k'}^r,\bo^r,\bi^r].
\end{equation}

Since there are $n^{O(T\cdot\tw)}$ entries, filling the DP table bottom-up leads to a total running time of $n^{O(T\cdot\tw)}$.

\medskip \noindent \textbf{Correctness of recurrence.}
We show that recurrence \eqref{eq:dp-rec} correctly outputs the value of the DP cell according to the definition \eqref{eq:dp-def}.

First we show that $\eqref{eq:dp-rec} \le \eqref{eq:dp-def}$, by observing that
for any $\DP[b,k', \bo, \bi]$, there is a choice of consistent $({k'}^\ell,\bo^\ell,\bi^\ell,{k'}^r, \bo^r,\bi^r)$ such that $\DP[\ell,{k'}^\ell,\bo^\ell,\bi^\ell]
+ \DP[r,{k'}^r,\bo^r,\bi^r] = \DP[b,k', \bo, \bi].$
Indeed, let $S$ be the set of centers minimizing \eqref{eq:dp-def}, such that $\cost(b,S,\bo) = \DP[b,k', \bo, \bi]$. Let $\ell$ and $r$ be the children of bag $b$, and let $S_\ell = V_\ell \cap S$ and $S_r = V_r \cap S$.
We define variables as follows:
\begin{itemize}
    \item Set $k'^\ell \coloneqq |S_\ell|$ and $k'^r \coloneqq |S_r|$.
    \item For every $t \in [T]$ and $y \in X_{\ell}$ (and similarly for $y \in X_{r}$), set $i^\ell_{t, y} \coloneqq \dist_t(y, S_\ell)$,
    \item and set $o^\ell_{t, y} \coloneqq 
\min\paren{d_t(y,S_r), \min_{x \in X_b}\paren{d_t(y,x) + o_x}}$.
\end{itemize}
As $S = S_\ell \cup S_r$, it is immediate that $\cost(b, S, \bo) = \cost(\ell, S_\ell, \bo^\ell) + \cost(r, S_r, \bo^r)$, meaning that $\DP[\ell,{k'}^\ell,\bo^\ell,\bi^\ell]
+ \DP[r,{k'}^r,\bo^r,\bi^r] = \DP[b,k', \bo, \bi].$
It remains to check the consistency of the variables. Consistency follows from properties of tree decomposition: the [Constraint $\bi$ is satisfied] property follows from the fact that any shortest path from $x \in X_b$ to $S_\ell$ (resp. $S_r$) passes through some vertex of $X_\ell$ (resp. $S_r$), and the [Assumptions $\bo^\ell, \bo^r$ are justified] property follows from the fact that any shortest path between $V_\ell$ to $V_r$ passes through some vertex of $X_b$.

We now need to show the other direction, $\eqref{eq:dp-def} \le \eqref{eq:dp-rec}$: that is, every choice of a consistent tuple $({k'}^\ell,\bo^\ell,\bi^\ell,{k'}^r, \bo^r,\bi^r)$ corresponds to some candidate set of centers $S \subseteq V_b$ such that $\cost(b, S, \bo) = \DP[\ell,{k'}^\ell,\bo^\ell,\bi^\ell]
+ \DP[r,{k'}^r,\bo^r,\bi^r]$. Indeed, we can choose $S \coloneqq S_\ell \cup S_r$, where $S_\ell$  (resp. $S_r$) is the set of centers minimizing $\DP[\ell,{k'}^\ell,\bo^\ell,\bi^\ell]$ (resp. $\DP[r,{k'}^r,\bo^r,\bi^r]$). The set $S$ is a valid candidate set of centers: the size of $S$ is $k'^\ell + k'^r = k'$, and (by properties of tree decomposition and the [Constraint $\bi$ is satisfied] condition of consistency) for every $t \in [T]$ and $x \in X_b$  we have $\dist_t(x, S) = i_{t, x}$. Moreover, we have that $\cost(b, S, \bo) = \cost(\ell, S_\ell, \bo^\ell) + \cost(r, S_r, \bo^r)$; the properties of tree decomposition and the [Assumptions $\bo^\ell, \bo^r$ are justified] property of consistency implies that for every vertex $c \in V_\ell$ (resp. $V_r$), the cost paid by $c$ under $(S, \bo)$ is the same as the cost paid by $c$ under $(S_\ell, \bo^\ell)$ (resp. $(S_r, \bo^r)$).

\medskip \noindent \textbf{$1+\eps$ approximation in FPT time.} We have described an exact algorithm that runs in time $n^{O(T \cdot \tw)}$. We now sketch how to improve the running time, if one allows $1+\eps$ approximation.
By a standard reduction, we can assume that that aspect ratio (ie, the ratio of the largest distance to smallest distance) in each metric is $\poly(n, \eps^{-1})$ at the cost of $1+\eps$ approximation. Note that in the exact algorithm, we duplicated some vertices and connected them by 0-weight edges, to enforce the assumption that the facilities and clients appear only in leaf bags --- now that we want bounded aspect ratio, we instead add $\frac\e{\poly(n)}$-weight edges, again losing only a $1+\e$ factor in the approximation. We then round every distance in each metric to the closest power $(1+\frac{\eps}{\log n})$; now there are $O(\log_{1+\frac{\eps}{\log n}} n) = O(\eps^{-1} \log^2 n)$ distinct distances rather than $O(n^2)$. In our DP, rather than guessing $o_{t,x}$ and $i_{t,x}$ from the set of $O(n^2)$ distances, we guess the distances only to the closest power of $(1+\frac{\eps}{\log n})$. When checking consistency in \Cref{def:consistent}, we modify the conditions [Constraint $\bi$ is satisfied] and [Assumptions $\bo^\ell,\bo^r$ are justified] to only require that the equalities hold up to a $1+\frac{\e}{\log n}$ approximation factor.
The overall running time is $(\frac{\log n}{\eps})^{O(T \cdot \tw)} \poly(n)$, which is bounded above by $\eps^{-O(T^2 \cdot \tw^2)} \poly(n)$ by \Cref{fact:loglog-trick}.

Each recursive call of the DP accumulates $1+\frac{\eps}{\log n}$ error due to the rounding. To be more precise, if the recursive evaluation of the DP cell at bag $b$ has value $\kappa$, and the bag $b$ has height $h$ in the tree $\cT$, then one can show (by an inductive argument) that there is a valid set of centers $S \subseteq V_b$ with cost at most $(1+\frac{\eps}{\log n})^h \cdot \kappa$ according to the definition \eqref{eq:dp-def}. As the height of the tree decomposition $\cT$ is $O(\log n)$, overall we get a solution with approximation ratio $1+O(\eps)$. Rescaling $\eps$ by a constant factor yields the claim.



\medskip \noindent \textbf{Generalizing to $(k,z)$-clustering and $k$-supplier.} We can generalize the DP to work with stochastic aggregate $(k,z)$-clustering for any $z$. In the definition of the DP, we now use 
\[\cost(b, S, \bo) \coloneqq \sum_{t \in [T]} \left(\sum_{c \in V_b} w_t(c) \cdot \min(\dist_t(c,S), \min_{x\in X_b}(\dist_t(c,x) + o_{t,x}))^z \right)^{1/z}\]
so that the DP entry at a bag $b$ gives us the $(k,z)$-clustering cost of the subtree rather than $k$-median cost. We modify the recurrence: rather than just taking a sum of the two children DP entries, we set
\[\DP[b,k'\bo,\bi] \gets \min_{\substack{({k'}^\ell,\bo^\ell,\bi^\ell,{k'}^r, \bo^r,\bi^r) \\ \text{consistent with } \\ (k',\bo,\bi)}} \left(\left(\DP[\ell,{k'}^\ell,\bo^\ell,\bi^\ell]\right)^z + \left(\DP[r,{k'}^r,\bo^r,\bi^r]\right)^z\right)^{1/z}.\]
We use this recurrence because in $(k,z)$-clustering we have the property that $\cost(b,S, \bo) = \left(\cost(\ell, S_\ell, \bo^\ell)^z + \cost(r, S_r, \bo^r)^z\right)^{1/z}$, rather than the property that $\cost(b,S,\bo) = \cost(\ell,S_\ell,\bo^\ell) + \cost(r, S_r, \bo^r)$ that we used for $k$-median. The proof of correctness of the $(k,z)$-clustering recurrence is similar to that of the $k$-median recurrence, once we replace the linearity of $k$-median cost with the new property just described. Our $1+\eps$ approximate DP also works for $(k,z)$-clustering, because distorting distances by a factor $1+\eps$ only changes $\cost(b,S,\bo)$ by at most a factor $1+\eps$.

By setting $z = \infty$, we recover the aggregate $k$-supplier problem. To be precise, our definition of the DP becomes
\[\cost(b, S, \bo) \coloneqq \sum_{t \in [T]} \max_{c \in V_b}\left(w_t(c) \cdot \min(\dist_t(c,S), \min_{x\in X_b}(\dist_t(c,x) + o_{t,x})) \right)\]
and 
\[\DP[b,k'\bo,\bi] \gets \min_{\substack{({k'}^\ell,\bo^\ell,\bi^\ell,{k'}^r, \bo^r,\bi^r) \text{ consistent}\\ \text{with } (k',\bo,\bi)}} \max \left(\DP[\ell,{k'}^\ell,\bo^\ell,\bi^\ell], \DP[r,{k'}^r,\bo^r,\bi^r]\right).\]
Correctness follows from the fact that $\cost(b, S, \bo) = \max(\cost(\ell, S_\ell, \bo^\ell), \cost(r, S_r, \bo^r))$.





\bibliographystyle{alpha}
\bibliography{references}

\appendix

\section{Useful facts}

\begin{fact}\label{fact:loglog-trick}
    For any $f(n) = n^{o(1)}$ and any variable $\tau \ge 1$, the function $(f(n))^\tau$ is bounded above by $O_{\tau}(1) + n$. 
    For example, when $f(n) = \log n$, we have $(\log n)^\tau \le 2^{\tau^2} + n$.
\end{fact}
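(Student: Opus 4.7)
The fact naturally splits into two regimes via a threshold depending on $\tau$. In the ``small $n$'' regime, the left-hand side $f(n)^\tau$ is bounded by a constant $C_\tau = O_\tau(1)$. In the ``large $n$'' regime, the sub-polynomial growth of $f$ gives $f(n)^\tau \le n$. Summing the two bounds yields the claimed $O_\tau(1) + n$.

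For the general statement, I would argue as follows. By hypothesis, $\log f(n) / \log n \to 0$, so for any fixed $\tau \ge 1$ there exists a threshold $N_\tau$ such that $f(n) \le n^{1/\tau}$, equivalently $f(n)^\tau \le n$, for all $n \ge N_\tau$. For $n < N_\tau$, the quantity $f(n)^\tau$ is bounded by the finite maximum $C_\tau := \max_{n < N_\tau} f(n)^\tau$, a constant depending only on $\tau$. This gives $f(n)^\tau \le C_\tau + n$ for all $n \ge 1$.

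For the explicit inequality $(\log n)^\tau \le 2^{\tau^2} + n$, I would split at $\log n = 2^\tau$. When $\log n \le 2^\tau$, we immediately get $(\log n)^\tau \le (2^\tau)^\tau = 2^{\tau^2}$, which is absorbed by the first right-hand term. When $\log n > 2^\tau$, set $L = \log n$ and $M = \log_2 L > \tau$; it suffices to show $(\log n)^\tau \le n$, which is equivalent to $\tau M \le 2^M$. Since $M > \tau$ in this regime, we have $\tau M < M^2 \le 2^M$ whenever $M \ge 4$. The small residual window (in which $M < 4$, forcing also $L < 16$ and $\tau < 4$) can be closed by a direct finite calculation using both right-hand terms, since all the quantities involved are bounded by absolute constants.

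The main obstacle is pinning down the explicit constant $2^{\tau^2}$ in the example: in a narrow ``transition'' region where $\log n$ is only slightly larger than $2^\tau$, neither of the two cases is fully slack and the two right-hand terms must cooperate to cover the left-hand side. However, this transition region is bounded in the $(\tau, n)$ parameter space, so the inequality is closed by a finite verification. The general $n^{o(1)}$ statement has no such wrinkle and follows directly from the definition.
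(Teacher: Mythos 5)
Your proof is correct and follows essentially the same two-regime argument as the paper, splitting at the same threshold $2^\tau = \log n$ (equivalently $\tau = \log\log n$) for the explicit example, with $f(n)^\tau \le n$ in one regime and $f(n)^\tau = O_\tau(1)$ (resp.\ $\le 2^{\tau^2}$) in the other. If anything you are slightly more careful than the paper: you explicitly flag the small residual window for the explicit bound, whereas the paper's intermediate claim that $2^{(\log\log n)^2} \le n$ actually fails for moderately small $n$ (e.g.\ $n=256$, where $(\log_2\log_2 n)^2 = 9 > 8 = \log_2 n$), though the stated inequality $(\log n)^\tau \le 2^{\tau^2}+n$ still holds there thanks to the additive $n$ term.
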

\begin{proof}
    Let $g(n) = \log f(n)$, and note $g(n) = o(\log n)$. 
    If $\tau$ satisfies $k \tau \le \log n$, then clearly $f(n)^\tau \le n$. On the other hand, if $\tau > \frac{\log n}{k}$, then the fact that $g(n) = o(\log n)$ implies that $\tau$ grows as a function of $n$, and so $f(n)^\tau = O_{\tau}(1)$.

    We now give an example that provides explicit bounds for the case $f(n) = \log n$.
    If $\tau \le \log \log n$, then $(\log n)^\tau \le 2^{(\log \log n)^2} \le n$. Otherwise, if $\tau > \log \log n$, then $(\log n)^\tau = 2^{\tau \log \log n} \le 2^{\tau^2}$.
\end{proof}

\begin{observation}
\label{obs:generalized}
    Fix an aggregator function $\Psi$. Suppose that one can find an $\alpha$-approximation to the $\Psi$-aggregate $k$-supplier (or $k$-median, or $(k,z)$-clustering) problem on $T$ metrics and $n$ vertices in time $f(n, T, k, \alpha)$. Then one can find an $\alpha$-approximation to generalized aggregate clustering, with the restriction that weights come from $\set{0,1}$, in time $f(2n+1, T, k+1, \alpha) + \poly(n, T)$.
\end{observation}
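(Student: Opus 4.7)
The plan is to reduce generalized aggregate clustering with $\set{0,1}$ weights to vanilla aggregate clustering by introducing, for each client, a ``shadow copy'' whose location in metric $d_t$ depends on whether the client is active at time $t$, along with a single sink facility $f_0$ that can serve inactive copies for free.

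Given an instance $\calI$ with vertex set $V = F \cup C$ of size $n$, $T$ metrics $d_t$, budget $k$, and weights $w_t:C\to\set{0,1}$, I would build a vanilla instance $\calI'$ on vertex set $V' := V \cup \widetilde{C} \cup \set{f_0}$, where $\widetilde{C} = \set{\tilde v : v\in C}$. Note $|V'| \le 2n+1$. The client set in $\calI'$ is $C' := \widetilde{C}$, the facility set is $F' := F\cup\set{f_0}$, and the budget is $k' := k+1$. For each $t \in [T]$, define $d'_t$ by: (i) $d'_t(u,u') = d_t(u,u')$ for $u,u'\in V$; (ii) $d'_t(v,f_0) = M$ for $v\in V$, where $M$ is a sufficiently large number (to be chosen below); (iii) for each $\tilde v\in \widetilde{C}$, set $d'_t(\tilde v, x) := d'_t(v, x)$ for every $x \in V'\setminus\set{\tilde v}$ if $w_t(v)=1$, and $d'_t(\tilde v, x) := d'_t(f_0, x)$ if $w_t(v)=0$; (iv) for two distinct $\tilde u, \tilde v \in \widetilde{C}$, derive $d'_t(\tilde u,\tilde v)$ from the (identical) rule applied to both. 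In other words, $\tilde v$ is identified with $v$ when active and with $f_0$ when inactive. It is straightforward to check that $d'_t$ is a symmetric, non-negative function that satisfies the triangle inequality (each $d'_t$ is literally an isometric extension by duplication of $d_t$ augmented with a far-away sink).

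Next I would establish the value correspondence between $\calI$ and $\calI'$. For any $S \subseteq F$ of size at most $k$, the solution $S' := S \cup \set{f_0}$ in $\calI'$ satisfies, for each $t$ and each $\tilde v \in C'$, $d'_t(\tilde v, S') = d_t(v, S)$ if $w_t(v)=1$ and $d'_t(\tilde v, S') = 0$ if $w_t(v) = 0$. Summing over clients (or taking a max for $k$-supplier, or the $z$-power sum) gives $\cost_z(d'_t; S') = \cost_z(d_t, w_t; S)$, the generalized cost. Applying $\Psi$ yields $\aggcost_{\calI'}(S') = \aggcost_{\calI}(S)$, so $\OPT_{\calI'} \le \OPT_{\calI}$. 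Conversely, any $k'$-sized solution $S'$ in $\calI'$ that omits $f_0$ must pay at least $M$ in some metric if there is any inactive client in that metric, and otherwise the generalized problem coincides with vanilla clustering on $C$ for the scenarios where no clients are inactive. Choosing $M$ larger than $\alpha$ times a computable upper bound on $\OPT_{\calI}$ (e.g., evaluate $\aggcost$ on an arbitrary $k$-subset of $F$, computable in $\poly(n,T)$ time) forces any $\alpha$-approximate $S'$ to include $f_0$; then $S := S' \setminus \set{f_0} \subseteq F$ has $|S| \le k$ and, by the same identity, $\aggcost_{\calI}(S) = \aggcost_{\calI'}(S') \le \alpha \cdot \OPT_{\calI'} \le \alpha \cdot \OPT_{\calI}$.

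The main things to be careful about are: verifying triangle inequality for $d'_t$ (the only nontrivial case is routing through $\tilde v$'s, which is handled since $\tilde v$ is isometrically placed at either $v$ or $f_0$); and picking $M$ large enough so that $\alpha$-approximate solutions are forced to open $f_0$, which contributes only the $\poly(n,T)$ additive term in the runtime. Running the assumed $\alpha$-approximation algorithm on $\calI'$ (which has $2n+1$ vertices, $T$ metrics, budget $k+1$) and stripping $f_0$ from the output yields the claim.
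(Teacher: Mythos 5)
Your construction captures the same essential idea as the paper's: introduce one extra ``sink'' facility, forfeit one unit of the center budget to open it, and relocate each client at time $t$ to coincide with that sink whenever $w_t = 0$ (so it is served for free) while leaving it in place when $w_t = 1$ (so it sees the original metric). The paper implements this directly on $C$ by reweighting the graph so inactive-at-$t$ clients and the sink are mutually at distance $0$ and at distance $\infty$ from everything else; you instead keep $C$ around and introduce shadow copies $\widetilde C$ plus a finite parameter $M$, which is a valid alternative but heavier and slightly imprecise. In particular, $M > \alpha\cdot U$ (where $U$ is some computable upper bound on $\OPT_\calI$) is not by itself enough for all aggregators: to force $f_0 \in S'$ you need $\Psi$ of a vector with one coordinate $\geq M$ to exceed $\alpha U$, which requires $M > \alpha U / \min_t \Psi(e_t)$ (this matters, e.g., for the mean where $\Psi(e_t)=1/T$); and for $d'_t$ to be a metric and for the cost identity $d'_t(\tilde v, S') = d_t(v, S)$ to hold on active clients you also implicitly need $M$ to exceed the diameters $\max_t \mathrm{diam}(d_t)$. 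These are all easy to arrange, but since you gave a specific recipe for $M$ it should include those terms; the paper's use of $\infty$-weight edges sidesteps the issue entirely, since any solution omitting the sink then has infinite cost and the cost identity is exact.
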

\begin{proof}
    Suppose we are given an instance $\eta$ of $\set{0,1}$-weighted generalized aggregate $k$-clustering, defined by metrics $d_1, \ldots, d_T$, client set $C$, and facility set $F$. We construct an instance $\eta'$ of vanilla aggregate $(k+1)$-clustering as follows. Create a new vertex $r$, and let the new facility set $F'$ be $F' \coloneqq F \cup \set{r}$. 
    The client set $C'$  is set to be $C$. We assume WLOG that $C'$ and $F'$ are disjoint: if some vertex $v$ is both a client and facility, we duplicate $v$, and make one copy a client and one a facility. After this duplication, we have $|F' \cup C'| \le 2 |F\cup C| + 1 = 2n+1$.
    
    For every $t \in [T]$, we construct a new distance metric $d'_t$ as follows. We define the \emph{active vertices} to be the set of clients with weight 1 in the $t$-th instance, together with the set of facilities excluding $r$ (that is, $F' \setminus \set{r} = F$). The \emph{inactive vertices} are those clients with weight 0 in the $t$-th instance, and the facility $r$. For any two vertices $v_1, v_2 \in C \cup F$, we define
    \[\hat d'_t(v_1, v_2) \coloneqq  \begin{cases}
        \dist_t(v_1, v_2) &\text{if $v_1$ and $v_2$ are both active}\\
        \infty &\text{if one of $\set{v_1, v_2}$ is active and the other is inactive}\\
        0 &\text{if $v_1$ and $v_2$ are both inactive}
    \end{cases}\]

    One can compute a solution to the instance $\eta$ from a solution to $\eta'$. Indeed, suppose $S' \subseteq F'$ is a set of $k+1$ facilities with aggregate cost $\kappa$ in $\eta'$. It is straightforward to see that, if $\kappa'$ is finite, then facility $r$ is included in $S'$, and the set $S \coloneqq S' \setminus \set{r}$ has aggregate cost $\kappa$ for the instance $\eta$. Moreover, if $S \subseteq F$ is any set of $k$ facilities with aggregate cost $\kappa$ in $\eta$, then the set of facilities $S' \coloneqq S \cup \set{r}$ has aggregate cost $\kappa$ in $\eta'$. This proves the claim: in polynomial time one can construct $\eta'$, and in time $f(2n+1, T, k+1, \alpha)$ one can compute an approximate solution $S'$ and return $S' \setminus \set{r}$.
\end{proof}

\end{document}